\newcommand{\head}{\text{head}}
\newcommand{\tail}{\text{tail}}
\newcommand{\turn}{\text{turn}}
\newcommand{\move}{\text{move}}
\newcommand{\routing}{\text{routing}}
\newcommand{\Z}{\mathbb{Z}}
\renewcommand{\ll}{\llbracket}
\newcommand{\rr}{\rrbracket}
\newcommand{\cR}{{\mathcal{R}}}
\begin{document}
\title{Generalized ARRIVAL Problem for  Rotor Walks in Path Multigraphs }
%
%\titlerunning{Abbreviated paper title}
% If the paper title is too long for the running head, you can set
% an abbreviated paper title here
%
\author{David Auger\inst{1} \and
Pierre Coucheney\inst{1} \and Loric Duhazé\inst{1}\and
Kossi Roland Etse\inst{1}}
\authorrunning{D. Auger et al.}
% First names are abbreviated in the running head.
% If there are more than two authors, 'et al.' is used.
%

\institute{
DAVID Lab., UVSQ, Université Paris Saclay, 45 avenue des Etats-Unis,78000,Versailles, France
}

%\institute{Princeton University, Princeton NJ 08544, USA \and
%Springer Heidelberg, Tiergartenstr. 17, 69121 Heidelberg, Germany
%\email{lncs@springer.com}\\
%\url{http://www.springer.com/gp/computer-science/lncs} \and
%ABC Institute, Rupert-Karls-University Heidelberg, Heidelberg, Germany\\
%\email{\{abc,lncs\}@uni-heidelberg.de}}
%
\maketitle              % typeset the header of the contribution
\begin{abstract}
Rotor walks are cellular automata that determine deterministic traversals of particles in a directed multigraph using simple local rules, yet they can generate complex behaviors. Furthermore, these trajectories exhibit statistical properties similar to random walks.

In this study, we investigate a generalized version of the reachability problem known as {\sc arrival} in Path Multigraphs, which involves predicting the number of particles that will reach designated target vertices. We show that this problem is in NP and co-NP in the general case. However, we exhibit algebraic invariants for Path Multigraphs that allow us to solve the problem efficiently, even for an exponential configuration of particles. These invariants are based on harmonic functions and are connected to the decomposition of integers in rational bases.
\keywords{Rotor walks  \and cellular automata \and discrete harmonic function.}
\end{abstract}

%%%%%%%%%%%%%%%%%%%%%%%%%%%%%%%%%%%%%%%%%%%%%%
%
%   Corps du document
%
%%%%%%%%%%%%%%%%%%%%%%%%%%%%%%%%%%%%%%%%%%%%%%

\section{Introduction}

The \emph{rotor routing}, or \emph{rotor walk model}, has been studied under different names: \emph{eulerian walkers} \cite{priezzhev1996eulerian,povolotsky1998dynamics} and \emph{patrolling algorithm} \cite{yanovski2003distributed}. It shares many properties with a more algebraically focused model: \emph{abelian sandpiles} \cite{bjorner1991chip,Holroyd2008}. General introductions to this cellular automaton can be found in \cite{giacaglia2011local} and \cite{Holroyd2008}.

Here is how a rotor walk works: in a directed graph, each vertex $v$ with an outdegree of $k$ has its outgoing arcs numbered from $1$ to $k$. Initially, a particle is placed on a starting vertex, and the following process is repeated. On the initial vertex, the particle moves to the next vertex following arc $1$. The same rule then applies on subsequent vertices. However, when a vertex is revisited, the particle changes its movement to the next arc, incrementing the number until the last arc is used. Then, the particle restarts from arc $1$ if it visits this vertex again.

This simple rule defines the rotor routing, which exhibits many interesting properties. Particularly, if the graph is sufficiently connected, the particle will eventually reach certain target vertices known as sinks. The time required for such exploration can be exponential in the number of vertices. The problem of determining, given a starting configuration (numbering) of arcs and an initial vertex, which sink will be reached first, is known as the ARRIVAL problem. It was defined in \cite{dohrau2017arrival}, along with a proof that the problem belongs to the complexity class NP~$\cap$~co-NP. Although the problem is not known to be in P, \cite{gartner2018arrival} showed that it belongs to the smaller complexity class UP~$\cap$~co-UP. Furthermore, a subexponential algorithm based on computing a Tarski fixed point was proposed in \cite{gartner_et_al:LIPIcs.ICALP.2021.69}.

Despite these general bounds, little is known about efficiently solving the problem in specific graph classes, especially when extending it to the routing of multiple particles. In \cite{auger2022polynomial}, we addressed the problem in multigraphs with a tree-like structure and provided a linear algorithm for solving it with a single particle. However, the recursive nature of the algorithm provided limited insights into the structure of rotor walks in the graph. We also examined the structure of rotor walks and the so-called sandpile group in the case of a simple directed path, where simple invariants can explain the behavior of rotor walks.

In this work, we focus specifically on a family of multigraphs that consist of directed paths with a fixed number of arcs going left and right on each vertex, with a sink located at both ends of the path. We present an efficient algorithm for solving the ARRIVAL problem in this general context, considering a potentially exponential number of particles and antiparticles, a concept introduced in \cite{giacaglia2011local}. Our approach involves introducing algebraic invariants for rotor walks and chip-firing, enabling a complete description of the interplay between particle configurations and rotor configurations/walks. These invariants are derived from harmonic functions in graphs, which are functions invariant under chip-firing. Additionally, we introduce a related concept for rotor configurations called arcmonic functions, inspired by \cite{hoang2022two}.

An essential tool for analyzing rotor routing in Path Multigraphs is the decomposition of integer values, which is closely associated with the AFS number system (\cite{frougny2012rational}), where numbers are decomposed into rational bases. While we draw inspiration from these results, our approach focuses on proving precisely what is necessary, using our own methodology.

Additionally, we derive other outcomes, such as the cardinality of the Sandpile Group of Path Multigraphs or its cyclic structure. These results can also be derived from Kirchoff's Matrix-Tree Theorem or the notion of co-eulerian graphs~\cite{farrell2016coeulerian}. Nevertheless, our results remain self-contained.

\section{Mechanics and Tools for Rotor Routing in Multigraphs}

\subsection{Multigraphs}
\label{sec.graph_def}

A \textbf{directed multigraph} $G$ is a tuple $G=(V,A,\head,\tail)$
where $V$ and $A$ are respectively finite sets of \emph{vertices} and \emph{arcs}, and
{\it head} and {\it tail} are maps from $A$ to $V$ defining
incidence between arcs and vertices. An arc with tail $x$ and head $y$ is said to be from $x$ to $y$.
Note that multigraphs can have multiple arcs with the same head and tail,  as well as loops.

For a vertex $u\in V$, we denote by $A^{+}(u)$ the subset of arcs going out of $u$, i.e. $A^{+}(u)=\{a\in A\mid \tail(a)=u\}$ and $\deg^+(u) = |A^+(u)|$ is the outdegree of $u$.  We  denote by $V_0$ the set of vertices with positive outdegree and $S_0$ vertices with zero outdegree, i.e. {\bf sinks}.
A directed multigraph is \textbf{stopping} if for every vertex $u$, there is a directed path from $u$ to a sink. {\bf In this whole paper, we suppose that $G$ is a stopping multigraph.}

In the second part of this work, we consider the following multigraph:
the {\bf Path multigraph } $P^{x,y}_n$ on $n+2$ vertices is a multigraph $G= (V_0 \cup S_0,A,\head,\tail)$  such that:
\begin{itemize}
\item  $V_{0}=\{u_{1},u_{2},...,u_{n}\}$ and $S_{0}=\{u_{0},u_{n+1}\}$;
\item  for $k\in \ll 1,n \rr$, we have $\deg^+(u_k)=x+y$  with  $x$ arcs from $u_k$ to $u_{k+1}$ and  $y$ arcs from $u_k$ to $u_{k-1}$
\item $u_0$ and $u_{n+1}$ are considered as {\bf sinks} with no outgoing arcs.
\end{itemize}

This graph is clearly stopping if $x+y \geq 1$.
See Fig. \ref{fig:ex_graph}  for a representation of $P^{2,3}_n$.

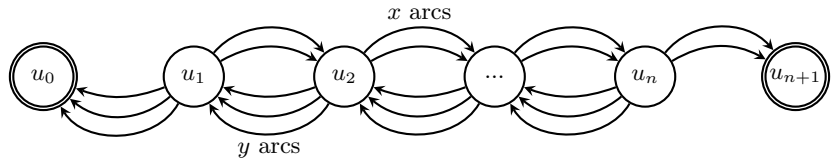
\begin{figure}[htbp]
     \centering
\begin{tikzpicture}[>=stealth, auto, node distance=2cm, thick]

% Style des sommets
\tikzset{
  state/.style={circle, draw, inner sep=0pt, minimum size=8mm}
}

% Sommets
\node[state] (01) {$u_0$};
\node[state] (1) [right of=01] {$u_1$};
\node[state] (2) [right of=1] {$u_2$};
\node[state] (3) [right of=2] {$...$};
\node[state] (4) [right of=3] {$u_n$};
\node[state] (51) [right of=4] {$u_{n+1}$};

\node[shape=circle,draw=black] (0) at (01) {$~~~~~~$};
\node[shape=circle,draw=black] (5) at (51) {$~~~~~~$};

%% Arcs sortants

\path[->]
  (1) edge[bend left=20] (0)
  (1) edge[bend left=40] (0)
  (1) edge[bend left=60] (0)
  (1) edge[bend left=30] (2)
  (1) edge[bend left=50] (2)

  (2) edge[bend left=20] (1)
  (2) edge[bend left=40] (1)
  (2) edge[bend left=60] node[below] {$y$ arcs} (1)
  (2) edge[bend left=30] (3)
  (2) edge[bend left=50] node[above] {$x$ arcs} (3)

  (3) edge[bend left=20] (2)
  (3) edge[bend left=40] (2)
  (3) edge[bend left=60] (2)
  (3) edge[bend left=30] (4)
  (3) edge[bend left=50] (4)

  (4) edge[bend left=20] (3)
  (4) edge[bend left=40] (3)
  (4) edge[bend left=60] (3)
  (4) edge[bend left=30] (5)
  (4) edge[bend left=50] (5);

\end{tikzpicture}
\caption{\footnotesize The Path Multigraph $P^{2,3}_n$.}
\label{fig:ex_graph} 
\end{figure}

We consider the case $n \geq 1$, and $1 \leq x < y$ with $x,y$ coprime.

\subsection{Rotor Structure}

If $u \in V_0$, a \textbf{rotor order} at $u$  is an operator denoted by $\theta_{u}$
such that:
\begin{itemize}
\item $\theta_{u}: A^{+}(u) \rightarrow A^{+}(u)$ ;
\item for all $a\in A^{+}(u)$, the orbit $\{a,\theta_{u}(a),\theta_{u}^{2}(a),...,\theta_{u}^{\deg^+(u)-1}(a)\}$
of $a$ under $\theta_{u}$ is equal to $A^+(u)$,
where $\theta_{u}^{k}(a)$ is the composition of $\theta_{u}$ applied
to arc $a$ exactly $k$ times. 
\end{itemize}

A {\bf rotor order} for $G$ is then a map $\theta : A \rightarrow A$ such that the restriction $\theta_u$ of $\theta$ to $A^+(u)$
is a rotor order at $u$ for every $u \in V_0$. Note that all $\theta_u$ as well as $\theta$ are one to one. 
If $C \subseteq V_0$, the composition of operators $\theta_u$ for all $u \in C$ does not depend on the order of composition since they act on disjoint sets $A^+(u)$; we denote by $\theta_C$ this operator and $\theta^{-1}_C$ is its inverse.
Finally, we use the term {\bf rotor graph} to denote a stopping multigraph together with a rotor order $\theta$.

In $P^{x,y}_n$, we define a rotor order by simply considering all arcs going right before all arcs going left, cyclically (see Fig. \ref{fig:rotororder}). Formally, let $a^k_i$ denote for $i \in \ll 0, x-1\rr$ the $x$ arcs from $u_k$ to $u_{k+1}$ and for $i \in \ll x,x+y-1\rr$ the $y$ arcs from $u_k$ to $u_{k-1}$; then we define
\[\theta(a^k_i) = a^k_j \text{ with } j = i + 1 \mod x+y.\]

\begin{figure}[htbp]
\centering
\begin{tikzpicture}[>=stealth, auto, node distance=2.5cm, thick]

% Style des sommets
\tikzset{
  state/.style={circle, draw, inner sep=0pt, minimum size=8mm}
}

% Sommets
\node[state] (g) {$u_{k-1}$};
\node[state,right of= g] (m) {$u_k$};
\node[state,right of= m] (d) {$u_{k+1}$};

\draw [ thick,->,>=stealth](1.5,1) arc (0:330:0.4cm);

% Arcs sortants
\draw[->] (m) to[bend left=10] node[above] {$a^k_x$} (g);
\draw[->,dashed] (m) to[bend left=30] node[above] {} (g);
\draw[->,dashed] (m) to[bend left=50] node[above] {} (g);
\draw[->] (m) to[bend left=70] node[below] {$a^k_{x+y-1}$} (g);

\draw[->] (m) to[bend left=80] node[above] {$a^k_{x-1}$} (d);
\draw[->,dashed] (m) to[bend left=60] node[below] {} (d);
\draw[->,dashed] (m) to[bend left=40] node[below] {} (d);
\draw[->] (m) to[bend left=20] node[below] {$a^k_0$} (d);

\end{tikzpicture}
\caption{\footnotesize  Rotor order at a vertex $u_k$
in the Path Multigraph $P^{x,y}_n$  }
\label{fig:rotororder}

\end{figure}
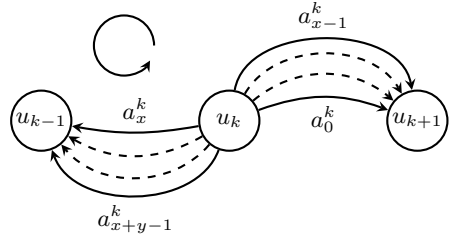

\subsection{Configurations}

\begin{definition}
A \textbf{rotor configuration}  of a rotor graph $G$
is a mapping $\rho$ from $V_{0}$ to $A$ such that $\rho(u)\in A^+(u)$ for all $u\in V_{0}$.  We denote by $\cR(G)$ or simply $\cR$ the set of all rotor configurations of the rotor graph $G$.
\end{definition}

The \textbf{graph induced} by $\rho$ on $G=(V,A,\head,\tail)$ is  
\[G(\rho) = (V,\rho(V_0),\head,\tail),\] in which each vertex in $V_0$ has outdegree one.

\begin{definition}
A \textbf{particle configuration}  of a rotor graph $G$
is a mapping $\sigma$ from $V$ to $\Z$. We denote by $\Sigma(G)$ or simply $\Sigma$ the set of all particle configurations of the rotor graph $G$.
\end{definition}

The set $\Sigma(G)$ can be identified with $\Z^V$ and has a natural structure of additive abelian group. If $u \in V$, we identify $u$ with the element of $\Sigma(G)$
with exactly one chip on $u$. Thus we can write, e.g. $\sigma + 3u$ to denote the configuration obtained from $\sigma \in \Sigma$ by adding 3 to $\sigma(u)$.

If $\sigma(u) \geq 0$, we interpret it as a number of particles
 on vertex $u$, whereas if $\sigma(u) \leq 0$ it can be interpreted as antiparticles, or simply a debt of particles. The {\bf degree} of a particle configuration $\sigma$ is defined by $\deg(\sigma) = \sum_{u \in V} \sigma(u)$.

Finally, a {\bf rotor-particle} configuration is an element of $\cR(G) \times \Sigma(G)$.

\subsection{Rotor Routing}

\begin{definition}
Let $G$ be a rotor graph, we define operators indexed by vertices $u \in V_0$ on $\cR(G) \times \Sigma(G)$:
\begin{itemize}
\item $\move^+_u:\cR(G) \times \Sigma(G) \rightarrow \cR(G) \times \Sigma(G)$ is defined by
\[\move^+_u(\rho,\sigma)=(\rho, \sigma + \head(\rho(u)) - u);\]
\item $\turn^+_u: \cR(G) \times \Sigma(G) \rightarrow \cR(G) \times \Sigma(G)$ is defined
by $\turn^+_u(\rho,\sigma)=(\theta_u \circ \rho,\sigma)$.
\end{itemize}
\end{definition}

Note that $\theta_u \circ \rho$ is the rotor configuration equal to $\rho$ on all vertices except in $u$ where $\theta$ has updated the arc.
Applying $\move^+_u$ to $(\rho,\sigma)$ can be interpreted as moving a particle from $u$ to the head of arc $\rho(u)$, whereas applying $\turn^+_u$ updates the rotor configuration at $u$. 
It is easy to see that these operators are bijective on $\cR(G) \times \Sigma(G)$, and we denote by $\move^-_u$ and $\turn^-_u$ their inverses.

We now define the routing operators by $\routing^+_u = \turn^+_u \circ \move^+_u$, and its inverse is obviously $\routing^-_u = \move^-_u \circ  \turn^-_u $. Routing a rotor-particle  configuration $(\rho,\sigma)$ consists in applying a series of 
$\routing^+$ and $\routing^-$ operators. Since they act on different vertices and disjoint sets of arcs, the following result is straightforward. 

\begin{lemma}
    The family of operators $\routing^+_u$ and $\routing^-_u$ for all $u \in V_0$  commute.
\end{lemma}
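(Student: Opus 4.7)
The plan is to reduce commutation of the compound operators $\routing^+_u = \turn^+_u \circ \move^+_u$ to pairwise commutation of their building blocks, then exploit the disjointness of the arc sets $A^+(u)$ and $A^+(v)$ for distinct $u,v \in V_0$. The case $u=v$ is immediate since any operator commutes with itself, so I focus on $u \neq v$. I also observe that it suffices to prove that the $\routing^+_u$ commute pairwise: commutation of $\routing^-_u$ with $\routing^-_v$ then follows by taking inverses of the identity, and commutation of $\routing^+_u$ with $\routing^-_v$ follows by composing with appropriate inverses.

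First I would verify three separate commutation facts for $u \neq v$. (i) $\turn^+_u$ and $\turn^+_v$ commute: both act as the identity on $\sigma$, and on $\rho$ they modify the value at $u$ and at $v$ respectively via $\theta_u$ and $\theta_v$; since $\theta_u$ acts on $A^+(u)$ and $\theta_v$ acts on the disjoint set $A^+(v)$, the two updates are on independent coordinates of $\rho$ and their order is immaterial. (ii) $\move^+_u$ and $\move^+_v$ commute: each leaves $\rho$ untouched and only adds an integer vector to $\sigma$, and addition in the abelian group $\Sigma(G) = \Z^V$ is commutative. (iii) $\turn^+_u$ and $\move^+_v$ commute: the operator $\move^+_v$ reads $\rho(v)$ and modifies $\sigma$, while $\turn^+_u$ modifies $\rho$ only at coordinate $u$; thus $\rho(v)$ is the same before and after $\turn^+_u$, so the head looked up by $\move^+_v$ is the same, and conversely the change performed by $\turn^+_u$ on $\rho$ is unaffected by $\move^+_v$.

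Combining these, I would expand
\[
\routing^+_u \circ \routing^+_v
= \turn^+_u \circ \move^+_u \circ \turn^+_v \circ \move^+_v
\]
and push $\turn^+_v$ leftward past $\move^+_u$ using (iii), then swap $\turn^+_u$ with $\turn^+_v$ using (i), and finally push $\move^+_v$ leftward past $\move^+_u$ using (ii) to reach $\turn^+_v \circ \move^+_v \circ \turn^+_u \circ \move^+_u = \routing^+_v \circ \routing^+_u$.

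I do not anticipate any real obstacle: the statement is essentially a bookkeeping consequence of the disjointness of outgoing arc sets at distinct vertices, together with the abelian nature of $\Sigma(G)$. The only subtle point to state carefully is (iii), where one must emphasise that $\turn^+_u$ only alters the $u$-coordinate of $\rho$, so the arc $\rho(v)$ used by $\move^+_v$ remains the same regardless of the ordering.
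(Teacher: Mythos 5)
Your argument is correct and is essentially a detailed spelling-out of the paper's one-line justification, which declares the lemma straightforward because the operators act on different vertices and on disjoint sets of arcs --- precisely the disjoint-coordinate facts (i)--(iii) that you verify for the $\turn$ and $\move$ building blocks before recombining them. One cosmetic remark: after your three stated swaps you arrive at $\turn^+_v \circ \turn^+_u \circ \move^+_v \circ \move^+_u$, so you need one further application of (iii) to move $\move^+_v$ past $\turn^+_u$ and reach $\routing^+_v \circ \routing^+_u$; this is immediate from what you have already proved and does not affect the validity of the argument.
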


Since the order in which routing operators are applied does not matter, we define a {\bf routing vector} as a map from $V_0$ to $\Z$. We define $\routing^r$ as the operator obtained by composing all elements of the family 
\[ \{ (\routing^+_{u})^{r(u)} \}_{u \in V_0} \]
in any order, where the exponent $r(u)$ stands for composition of the operator or its inverse with itself, depending on the sign of $r(u)$. We shall use the term \emph{routing} when we apply any operator $\routing^r$ as well.

We end this subsection by pointing out that the kind of routing defined here, which we call {\it move and turn routing}, is used in \cite{dohrau2017arrival}  and \cite{gartner2018arrival}, and is more adapted to study the {\sc arrival} problem. Another kind of routing, the {\it turn and move routing}, used for instance in \cite{Holroyd2008,giacaglia2011local}, is more widely used in the literature and is more adapted to study the link between the sandpile group and rotor configurations. However, it is easy to see that these two definitions of routing are conjugate by $\theta$, and all results obtained for one of them can be translated into the other context.

\subsection{Legal Routing and \sc arrival}

 Applying $\routing^+_u$ to $(\rho,\sigma) \in \cR \times \Sigma$ is said to be a {\bf legal routing} if $\sigma(u) > 0$. A sequence of legal routings 
 \[(\rho_0,\sigma_0) \xrightarrow{u_0} (\rho_1,\sigma_1) \xrightarrow{u_1}  \cdots  \xrightarrow{u_{k-1}}(\rho_k,\sigma_k),\]
 where $\xrightarrow{u}$  denotes a legal routing at vertex $u \in V_0$,
 is {\bf maximal}  if for all $u \in V_0$ we have $\sigma_k(u) \leq 0$, i.e. no other legal routing can be applied.
 
 The classic version of the commutativity result for rotor routing is the following:

 \begin{proposition}[\cite{Holroyd2008}]
    For all $(\rho,\sigma) \in \cR \times \Sigma$ with $\sigma \geq 0$, there is a unique $(\rho',\sigma')$ with $\sigma'(u)=0$ for all $u \in V_0$, such that all maximal legal routings from $(\rho,\sigma)$ end in $(\rho',\sigma')$. Furthermore, all legal routings can be continued in such a maximal legal routing.
 \end{proposition}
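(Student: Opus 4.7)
The plan is to combine a termination argument with a confluence/exchange argument, leveraging the commutativity lemma already established. Three ingredients are needed: (i) every legal routing from $(\rho,\sigma)$ with $\sigma\geq 0$ terminates in finitely many steps; (ii) the routing vectors of all maximal legal routings coincide, so that by commutativity the final rotor-particle state $(\rho',\sigma')$ is unique; (iii) any non-maximal legal routing can be extended, so maximality is attainable.

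For termination, I would use the stopping hypothesis on $G$: from every vertex $u\in V_0$ there is a directed path of length at most $|V|$ to some sink. A standard monovariant argument bounds the total number of firings at $u$ in any legal routing by a function of $\deg(\sigma)$, $|V_0|$ and the out-degrees (intuitively, every firing at $u$ either eventually pushes a chip strictly closer to a sink, or is one of finitely many ``recycling'' firings that complete a full rotor cycle at $u$). Since $\sigma\geq 0$, the total chip count remains bounded by $\deg(\sigma)$, and an induction on the maximal distance to a sink yields a finite global bound.

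The heart of the proof is an \emph{exchange property}: if $\routing^+_u$ is legal at state $(\rho,\sigma)$ and $v\neq u$ satisfies $\sigma(v)>0$, then $\sigma(v)$ is unchanged by $\routing^+_u$ (which only decrements $\sigma(u)$ and increments $\sigma(\head(\rho(u)))$), so $\routing^+_v$ remains legal afterwards. Together with the commutativity lemma this gives a local diamond property. From it one extracts the \emph{least action principle}: if $r^*$ is the routing vector of any maximal legal routing from $(\rho,\sigma)$ and $r$ is the routing vector of any legal routing from $(\rho,\sigma)$, then $r(u)\leq r^*(u)$ for all $u\in V_0$. The proof proceeds by induction on the length of the $r$-sequence: using exchange, one can reorder the initial segment of a maximal sequence to match the first firing of the $r$-sequence while keeping legality, then invoke the inductive hypothesis on the shortened remainders. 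Applying the principle symmetrically to two maximal routing vectors $r^*_1,r^*_2$ forces $r^*_1=r^*_2$, and commutativity of the $\routing^+_u$ then fixes the common endpoint $(\rho',\sigma')$. Maximality of this endpoint (all $\sigma'(u)\leq 0$ for $u\in V_0$) is by definition, and extension of a non-maximal legal routing is immediate since any vertex $u$ with positive chip count gives a further legal firing, while termination prevents this continuation from running forever.

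The main obstacle is formulating the least action principle cleanly enough to avoid circular reasoning in the inductive rearrangement step; one must be careful that the exchange used to ``bubble'' the next firing of the $r$-sequence to the front of the maximal sequence preserves legality at every intermediate state, which is precisely what the exchange property guarantees. Termination and extension are comparatively routine given the stopping assumption.
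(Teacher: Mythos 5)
The paper does not prove this proposition at all: it is quoted directly from Holroyd et al.~\cite{Holroyd2008}, so there is no in-paper argument to compare against. Your sketch reconstructs the standard abelian-property proof from that source---termination via the stopping assumption plus conservation of particles, the exchange/diamond property, and the least action principle forcing all maximal routing vectors to coincide, after which commutativity pins down the common endpoint---and it is correct in outline. Two points to make explicit in a full write-up: since $\sigma \geq 0$ and a legal routing at $u$ requires $\sigma(u)>0$, nonnegativity is preserved throughout, so maximality yields $\sigma'(u)=0$ on $V_0$ (the statement asserts equality, not merely $\sigma'(u)\leq 0$); and in the least-action induction you must first observe that any maximal legal sequence routes the initial vertex $u$ of the $r$-sequence at least once (its count starts positive and is never decreased by routings at other vertices, so maximality forces a routing at $u$), which is what licenses bubbling that routing to the front before applying the inductive hypothesis.
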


 The previous result states that we can always route legally all particles to the sinks, in any order by choosing every time a vertex where the routing is legal, and we will always reach the same final configuration. Moreover, it can be shown that the routing vectors corresponding to all maximum legal routings are the same. For such a maximal legal routing, we shall say that $(\rho,\sigma)$ is {\bf fully routed} to sinks, and write 
 \[(\rho',\sigma') = \routing^\infty_L(\rho,\sigma),\]
 where the $L$ stands for legal.

 \vskip .3cm
 
 The original {\sc arrival} problem consists in the following decision problem: 
\emph{if $(\rho, \sigma) \in \cR \times \Sigma$ with $\sigma \geq 0$ and $\deg(\sigma)=1$, if  $(\rho',\sigma') = \routing^\infty_L(\rho,\sigma)$, for a given sink $s \in S_0$, does $\sigma'(s)=1$ ? }

 \vskip .3cm
 
 This problem is known to be in NP and co-NP, but the best algorithm known to this date (see \cite{gartner_et_al:LIPIcs.ICALP.2021.69}) has complexity $2^{O(\sqrt{|V|})})$ in the case of a simple graph.
We shall now generalize this problem to any number of positive and negative particles, and remove the legality assumption.

\subsection{Equivalence classes of Rotors}

\begin{definition}
    Two rotor-particle configurations $(\rho,\sigma)$ and $(\rho',\sigma')$ are said to be equivalent, which we denote by $(\rho,\sigma) \sim (\rho',\sigma')$, if there is a routing vector $r$ such that
    \[ \routing^r(\rho,\sigma) = (\rho',\sigma').\]
\end{definition}

It is easy to see that this defines an equivalence relation on $\cR \times \Sigma$.

\begin{definition}
    Two rotor configurations $\rho, \rho'$ are said to be equivalent, which we denote by $\rho \sim \rho'$, if there is $\sigma \in \Sigma$ such that
    \[ (\rho,\sigma) \sim (\rho',\sigma). \]
\end{definition}

In this case, the relation is true for any $\sigma \in \Sigma$, and it defines an equivalence relation on $\cR$.

\paragraph{Cycle Pushes.}

Suppose that $\rho \in \cR$ and let $C$ be a directed circuit in $G(\rho)$. The \textbf{positive cycle push} of $C$ in $\rho$ transforms $\rho$ into $\theta_C \circ \rho$; see  Figure~\ref{fig:cyclepush}. Similarly, if $C$ is a directed circuit in $G(\theta^{-1}\circ \rho)$, the \textbf{negative cycle push} transforms $\rho$ into $\theta^-_C \circ \rho$. A \textbf{sequence of cycle pushes} is a finite or infinite sequence of rotor configurations $(\rho_i)$ such that each $\rho_{i+1}$ is obtained from $\rho_i$ by a positive or negative cycle push.

\begin{figure}[ht!]
    \centering
    \begin{subfigure}[c]{0.43\linewidth}
    \centering
    \begin{adjustbox}{max totalsize={.8\textwidth}{0.7\textheight},center}
        \begin{tikzpicture}
    \node[shape=circle,draw=black] (A) at (0,0) {$u_0$};
    \node[shape=circle,draw=black] (A') at (-1.5,0) {};    

    \node[shape=circle,draw=black] (B) at (2,1.5) {$u_1$};
    \node[shape=circle,draw=black] (B') at (3.5,1.5) {};

    \node[shape=circle,draw=black] (C) at (2,-1.5) {$u_2$};
    \node[shape=circle,draw=black] (C') at (3.5,-1.5) {};

    \path [->, >=latex](C) edge (A);
    \path [->, >=latex](A) edge (B);
    \path [->, >=latex](B) edge  (C);

    \path [->, >=latex,dashed, very thick](A) edge (A');
    \path [->, >=latex,dashed, very thick](B) edge (B');
    \path [->, >=latex,dashed, very thick](C) edge (C');

 \end{tikzpicture}
        \end{adjustbox}
        %\caption{}
    \end{subfigure}
    ~\vline~
      \begin{subfigure}[c]{0.43\linewidth}
        \begin{adjustbox}{max totalsize={.8\textwidth}{0.7\textheight},center}
        \begin{tikzpicture}
    \node[shape=circle,draw=black] (A) at (0,0) {$u_0$};
    \node[shape=circle,draw=black] (A') at (-1.5,0) {};    

    \node[shape=circle,draw=black] (B) at (2,1.5) {$u_1$};
    \node[shape=circle,draw=black] (B') at (3.5,1.5) {};

    \node[shape=circle,draw=black] (C) at (2,-1.5) {$u_2$};
    \node[shape=circle,draw=black] (C') at (3.5,-1.5) {};

    \path [->, >=latex,dashed, very thick](C) edge (A);
    \path [->, >=latex,dashed, very thick](A) edge (B);
    \path [->, >=latex,dashed, very thick](B) edge  (C);

    \path [->, >=latex](A) edge (A');
    \path [->, >=latex](B) edge (B');
    \path [->, >=latex](C) edge (C');

 \end{tikzpicture}
        \end{adjustbox}
        %\caption{This rotor configuration is obtained from a cycle push operation on the circuit of $G(\rho)$ formed by arcs\{($u_0$,$u_2$),($u_2$,$u_1$),($u_1$,$u_0$)\}.}
    \end{subfigure}

\caption{On the left figure, the initial rotor configuration $\rho$ is given by
full arcs while other arcs are dashed. Applying a positive cycle push on the circuit of $G(\rho)$, formed by the vertices $u_0, u_1$ and $u_2$, results in the configuration shown in the right figure. Note that, in this example, there exists a unique rotor order.  }
\label{fig:cyclepush}
\end{figure}
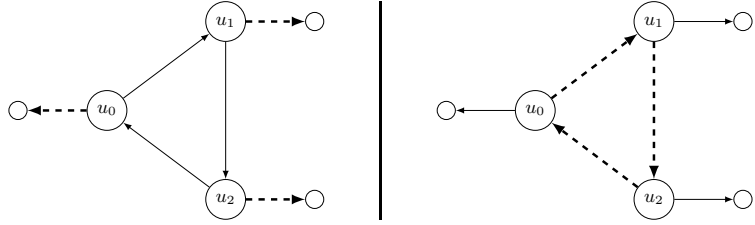

Note that if $C$ is a directed circuit in $G(\rho)$, for any $\sigma \in \Sigma$, we can obtain $(\theta_C \circ \rho, \sigma)$ by applying $\routing^{r_C}$ to $(\rho,\sigma)$, and if $C$ is a circuit in $G(\theta^{-1}\circ \rho)$, then $(\theta^{-1}_C \circ \rho,\sigma)$ is equal to $\routing^{-r_C} (\rho,\sigma)$, where in both cases $r_C$ is the routing vector consisting in routing once every vertex of $C$. In other words, a cycle push is a shortcut in the routing of a particle on the circuit.

\begin{theorem}
\label{thm.equivalence}
Given two rotor configurations $\rho$ and $\rho'$, $\rho \sim \rho'$ if and only if $\rho'$ can be obtained from $\rho$ by a sequence of cycle pushes.
\end{theorem}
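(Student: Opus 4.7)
The ($\Leftarrow$) direction follows from the remark preceding the theorem: a positive cycle push on a cycle $C \subseteq G(\rho_i)$ coincides with the rotor action of $\routing^{r_C}$, which leaves $\sigma$ unchanged because a single particle travels once around $C$ and returns home; similarly a negative cycle push corresponds to $\routing^{-r_C}$. Composing such steps from $\rho$ to $\rho'$ produces a routing vector $r = \sum_{i} \epsilon_i\, r_{C_i}$ with $\routing^r(\rho,\sigma) = (\rho',\sigma)$, so $\rho \sim \rho'$.

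For the ($\Rightarrow$) direction, fix $\sigma$ and $r$ such that $\routing^r(\rho,\sigma) = (\rho',\sigma)$ and proceed by strong induction on $\|r\|_1 := \sum_{u} |r(u)|$; the base case $r=0$ forces $\rho = \rho'$ because the rotor action of $\routing^r$ at $u$ is $\theta_u^{r(u)}$. The crucial lemma to establish is that when $r \neq 0$ one can always peel off a single cycle push: either a positive one on some directed cycle $C \subseteq G(\rho)$ with $r_C \leq r$ componentwise, or a negative one on some directed cycle $C' \subseteq G(\theta^{-1}\circ \rho)$ with $r_{C'} \leq -r$. Once such a $C$ is found, commutativity of the routing operators gives $\routing^{r - r_C}(\theta_C \circ \rho, \sigma) = (\rho',\sigma)$ with $\|r - r_C\|_1 < \|r\|_1$, and the induction closes (symmetrically for $C'$).

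The heart of the argument is the case $r \geq 0$. Since only $\routing^+$ operators are applied and sinks have no outgoing rotor, any particle reaching a sink would accumulate there, contradicting the balance $\sigma'=\sigma$; hence no particle ever reaches a sink. Pick any $u_0$ with $r(u_0) \geq 1$ and iterate $u_{i+1} := \head(\rho(u_i))$. The very first application of $\routing^+_{u_i}$ uses the still-untouched rotor $\rho(u_i)$ and deposits one particle at $u_{i+1}$: by the sink observation $u_{i+1} \in V_0$, and because only forward moves can clear this particle we deduce $r(u_{i+1}) \geq 1$. Thus the sequence $(u_i)$ stays inside $\{u : r(u) \geq 1\}$ and, by finiteness of $V_0$, must enter a directed cycle $C$ of $G(\rho)$, providing the desired positive peel-off. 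The case $r \leq 0$ is obtained by symmetry: apply the preceding argument to $\routing^{-r}(\rho',\sigma) = (\rho,\sigma)$ and reverse the resulting sequence of cycle pushes.

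The main obstacle I anticipate is the mixed-sign case, where naively following the rotor from a vertex with $r(u) > 0$ may leave the positive support of $r$ before closing a cycle, and similarly on the negative side. My plan there is to still exploit the balance condition: it forces the net multigraph $M$ of arc usages (positive $\routing^+$ arcs minus negative $\routing^-$ arcs) to be Eulerian on $V_0$, since no net flux enters the sinks. Extracting from any Eulerian decomposition of $M$ a cycle that respects the cyclic rotor order at each visited vertex — that is, one using the arcs $\rho(u), \theta\rho(u), \theta^2\rho(u), \ldots$ at $u$ in order — should yield a bona fide positive or negative peel-off. Proving that such a compatible cycle can always be extracted is the single delicate step I expect to require genuine care.
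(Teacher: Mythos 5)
Your backward direction and your treatment of the sign-definite cases ($r \geq 0$, and $r \leq 0$ by symmetry) are correct and follow the same induction on $|r|_1$ as the paper. The genuine gap is exactly where you flag it: the mixed-sign case is left as a plan rather than a proof, and that case is the heart of the theorem. Your proposed route through an Eulerian decomposition of the net arc-usage multigraph does not obviously close it: a cycle extracted from such a decomposition need not be a circuit of $G(\rho)$ (as required for a positive peel-off) nor of $G(\theta^{-1}\circ \rho)$ (for a negative one), need not have all of its vertices carrying one sign of $r$, and the compatibility with the cyclic rotor order at every visited vertex is precisely the point you admit you cannot yet establish. As written, the induction therefore cannot be completed.

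The paper closes this case with a short conservation argument that you could adopt. Partition $V_0$ into $P=\{u : r(u)>0\}$, $N=\{u : r(u)<0\}$ and $Z=\{u : r(u)=0\}$, and track the quantity $\sum_{u\in P}\sigma(u)$ along any realization of the routing vector. Every elementary operation that occurs (a positive routing at a vertex of $P$, or a negative routing at a vertex of $N$; nothing happens on $Z$) changes this quantity by $0$ or $-1$; since the final particle configuration equals the initial one, every single operation must change it by $0$. In particular the first routing at each $u\in P$ uses the untouched arc $\rho(u)$ and must keep its particle inside $P$, so $\head(\rho(u))\in P$ for every $u\in P$; hence $\rho(P)$ contains a directed circuit $C$ lying entirely in $P$, and $r-r_C$ has strictly smaller $L^1$-norm (symmetrically, if $P=\emptyset$ one finds a circuit of $G(\theta^{-1}\circ\rho)$ inside $N$ and performs a negative cycle push). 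This shows that the scenario you worry about, where the rotor trajectory leaves the positive support before closing a cycle, simply cannot occur, and it also subsumes your separate sink argument.
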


\begin{proof}
Suppose that $\rho'$ can be obtained from $\rho$ by a sequence of cycle pushes.
Since cycle push operations can as well be obtained by routing operators, we have that for any $\sigma \in \Sigma$, $(\rho, \sigma) \sim (\rho', \sigma)$, and consequently $\rho \sim \rho'$.

Conversely, assume that, for a given $\sigma \in \Sigma$, there is a routing vector $r$ from $(\rho, \sigma)$ to $(\rho', \sigma)$. We show that $\rho'$ can be obtained by cycle pushes from $\rho$, by induction on the $L^1$-norm of $r$, i.e. $|r|_1=\sum_{u \in V_0} | r(u) |.$

If $|r|_1 = 0$, then $\rho = \rho'$.
Otherwise, consider the partition of $V$ in  sets $P$, $N$, and $Z$ corresponding to vertices $u$ such that $r(u)$ is positive, negative and null respectively. Assuming $P$ is nonempty (we can interchange the roles of $P$ and $N$ if needed), we observe that the degree of $\sigma$ on $P$, i.e. $\sum_{u \in P} \sigma(u)$, cannot increase through positive routing on $P$.   Similarly, negative routing on $N$ cannot increase the degree of $\sigma$ on $P$. However, after performing all the routings in $r$, we end up with the same particle configuration $\sigma$, which implies the degree on $P$ remains unchanged.  Consequently, all positive move operations within $P$ have exclusively been performed on arcs with head in $P$. 

In particular, $\rho(P)$ contains a directed circuit $C$. By applying a cycle push on circuit $C$, we obtain $(\theta_C \circ \rho, \sigma)$. Since a routing vector from 
$(\theta_C \circ \rho, \sigma)$ to $(\rho',\sigma')$ is
$r - r_C$, with $|r-r_C|_1 < |r|_1$,  we can apply induction to continue the sequence of cycle pushes.
\qed
\end{proof}

Whenever rotor configurations are equivalent, they eventually route particles identically since positive and negative cycle push correspond to adding or removing closed circuits in trajectories. In particular, it is easy to see that it is always possible to route any $(\rho,\sigma)$ to a $(\rho',\sigma')$ such that $\sigma'(u)=0$ for all $u \in V_0$. Let us denote by $\routing^\infty(\rho,\sigma)$ the nonempty set of these configurations.

\begin{theorem}
    Let $(\rho_1,\sigma_1) \in \routing^\infty(\rho,\sigma)$.
    Then $(\rho_2, \sigma_2) \in \routing^\infty(\rho,\sigma)$
    if and only if $\rho_1 \sim \rho_2$ and $\sigma_1=\sigma_2$.
\end{theorem}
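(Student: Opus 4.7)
The plan is to establish each direction by leaning on Theorem \ref{thm.equivalence} and a chip-firing reduction.

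The backward direction is immediate. If $\rho_1\sim\rho_2$, then by the definition of equivalence of rotor configurations $(\rho_1,\sigma_1)\sim(\rho_2,\sigma_1)$, and since $\sigma_1=\sigma_2$ this is the same as $(\rho_1,\sigma_1)\sim(\rho_2,\sigma_2)$. Combined with $(\rho,\sigma)\sim(\rho_1,\sigma_1)$ by transitivity, and noting that $\sigma_2=\sigma_1$ still vanishes on $V_0$, we obtain $(\rho_2,\sigma_2)\in\routing^\infty(\rho,\sigma)$.

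For the forward direction, transitivity of $\sim$ yields a routing vector $r$ with $\routing^r(\rho_1,\sigma_1)=(\rho_2,\sigma_2)$. I would first prove $\rho_1\sim\rho_2$ by reusing, almost verbatim, the inductive argument of Theorem \ref{thm.equivalence}. The critical observation is that since $\sigma_1$ and $\sigma_2$ both vanish on $V_0$, the degree of the particle configuration on any subset $P\subseteq V_0$ is invariant along this routing. The same monotonicity reasoning then forces $\rho_1(u)$ to have head in $P$ for every $u$ in the positive part $P$ of $r$; this exhibits a directed circuit $C$ of $G(\rho_1)$ supported on $P$, a cycle push on $C$ strictly decreases $|r|_1$ while fixing $\sigma_1$, and induction produces a sequence of cycle pushes from $\rho_1$ to $\rho_2$.

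The remaining step, $\sigma_1=\sigma_2$, is the main obstacle. Using $\rho_1\sim\rho_2$, pick $r'$ with $\routing^{r'}(\rho_1,\sigma_1)=(\rho_2,\sigma_1)$; commutativity of routing operators then gives $\routing^{r-r'}(\rho_2,\sigma_1)=(\rho_2,\sigma_2)$. Because the rotor is preserved and $\theta_u$ has order $\deg^+(u)$, we must have $(r-r')(u)=k(u)\deg^+(u)$ for some integers $k(u)$; and since one full rotor cycle at $u$ uses every arc of $A^+(u)$ exactly once, the net effect of $\routing^{r-r'}$ on $\sigma$ coincides with firing each vertex $u\in V_0$ exactly $k(u)$ times in the chip-firing sense. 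Requiring $\sigma_2-\sigma_1$ to vanish on $V_0$ then produces the harmonic system
\[
k(u)\,\deg^+(u) \;=\; \sum_{v\in V_0} k(v)\,m_{v,u} \qquad (u\in V_0),
\]
with $m_{v,u}$ the number of arcs from $v$ to $u$. To conclude $k\equiv 0$ I would use a maximum-modulus argument: setting $M=\max|k|$ and $V^+=\{u\in V_0 : k(u)=M\}$, saturation of the triangle inequality at each $u\in V^+$ (assuming $M>0$) forces all in-neighbors $v\in V_0$ with $m_{v,u}>0$ to lie in $V^+$; summing the equation over $V^+$ then shows that no arc leaves $V^+$, contradicting the existence of a path to a sink guaranteed by the stopping hypothesis. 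Hence $V^+=\emptyset$, symmetrically $V^-=\{k=-M\}=\emptyset$, so $k\equiv 0$, $r=r'$, and $\sigma_1=\sigma_2$.
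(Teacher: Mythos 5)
Your backward direction matches the paper. For the forward direction, your first step -- rerunning the induction behind Theorem~\ref{thm.equivalence} under the weaker hypothesis that the two endpoint configurations merely agree on $V_0$ -- is sound: since $\sigma_1$ and $\sigma_2$ both vanish on $V_0$, the degree on the positive support $P$ of $r$ equals $0$ at both ends and is non-increasing along the routing, so the circuit extraction and cycle push go through and $r$ decomposes into cycle-push vectors. This is a legitimate alternative to the paper's route, which instead merges all sinks into one (so that conservation of the total degree forces the two endpoint configurations to be literally equal in the merged graph), applies Theorem~\ref{thm.equivalence} there verbatim, and observes that circuits of a rotor-induced graph avoid sinks, so the cycle pushes lift back to $G$. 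But you stop short of the payoff of your own construction: once $r$ is a sum of cycle-push vectors, and cycle pushes do not move particles, $\routing^r(\rho_1,\sigma_1)=(\rho_2,\sigma_1)$; comparing with $\routing^r(\rho_1,\sigma_1)=(\rho_2,\sigma_2)$ gives $\sigma_1=\sigma_2$ at once (equivalently, your induction already carries both conclusions, since the base case gives $\sigma_1=\sigma_2$ and each step fixes $\sigma_1$). This is exactly how the paper concludes.

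The separate argument you give for $\sigma_1=\sigma_2$ has a genuine flaw. The relation you derive, $k(u)\deg^+(u)=\sum_{v\in V_0}k(v)\,m_{v,u}$, says that $k$ lies in the \emph{left} kernel of the reduced Laplacian: it constrains $k$ at $u$ through the \emph{in}-neighbours of $u$, while the diagonal weight is the \emph{out}-degree. The maximum-modulus step does not apply to this system as stated: at $u\in V^+$ one only gets $M\deg^+(u)\le M\sum_{v\in V_0}m_{v,u}$, and the right-hand weights sum to the in-degree of $u$ from $V_0$, not to $\deg^+(u)$, so there is no ``saturation of the triangle inequality'' forcing the in-neighbours into $V^+$; when the in-degree exceeds the out-degree there is slack, and summing over $V^+$ only yields that the number of arcs with tail in $V^+$ is at most the number of arcs from $V_0$ entering $V^+$ -- an inequality in the wrong direction to contradict the stopping hypothesis, which concerns arcs \emph{leaving} $V^+$ towards the sinks. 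The conclusion $k\equiv 0$ is of course true, because the reduced Laplacian of a stopping graph is nonsingular; an elementary way to see this is to apply the maximum principle to the transposed (right-kernel) system $f(v)\deg^+(v)=\sum_{u}m_{v,u}f(u)$ with $f=0$ on sinks, and then use nonsingularity to kill the left kernel as well. But the cleanest repair is simply to delete this whole step in favour of the cycle-push observation above.
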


\begin{proof}
    First, if $\rho_1 \sim \rho_2$ and $\sigma_1=\sigma_2$,
    by definition
    \[(\rho,\sigma) \sim (\rho_1,\sigma_1) \sim (\rho_2,\sigma_1) =  (\rho_2,\sigma_2), \]
    and $\sigma_2(u)=0$ for all $u \in V_0$,
    so that $(\rho_2,\sigma_2) \in \routing^\infty(\rho,\sigma)$.

    Conversely, suppose first that $|S_0| = 1$. Since $\deg(\sigma_1) = \deg(\sigma_2)$, one has $\sigma_1 = \sigma_2$, and by consequence $\rho_1 \sim \rho_2$.

    If $|S_0| > 1$, consider the rotor graph $G'$ obtained from $G$ by merging all sinks into a unique sink $s$. 
    Let $r$ be a routing vector  from $(\rho_1,\sigma_1)$ to $ (\rho_2,\sigma_2)$ in $G$. The same routing vector will also lead from $(\rho_1,\sigma'_1)$ to $ (\rho_2,\sigma'_2)$ in $G'$, where $\sigma'_1(u) = 0 $ for all $u \in V_0$ and $\sigma'_1(s) = \sum_{s' \in S_0} \sigma_1(s')$ (and $\sigma'_2$ defined accordingly). We deduce from the case $|S_0| = 1$  that $\sigma_1' = \sigma_2'$, and, by Theorem~\ref{thm.equivalence}, that $r$ corresponds to a sequence of cycle pushes in $G'$ and hence also in $G$. Since cycle push operations do not modify particle configurations, we have $\sigma_1 = \sigma_2$. 
\end{proof}

\begin{corollary} \label{cor:genarrival}
If $\sigma \geq 0$, then if $(\rho',\sigma') = \routing^\infty_L(\rho,\sigma)$ and $(\rho_1,\sigma_1) \in \routing^\infty(\rho,\sigma)$, we have $\sigma_1=\sigma'$ and
$\rho' \sim \rho_1$.
\end{corollary}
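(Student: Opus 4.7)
The plan is to argue that the legal maximal routing is a particular instance of the more general routing considered in the preceding theorem, and then invoke that theorem directly.

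First I would observe that since $\sigma \geq 0$, the proposition attributed to Holroyd et al.\ applies: any maximal legal routing starting from $(\rho,\sigma)$ terminates in a configuration $(\rho',\sigma')$ with $\sigma'(u)=0$ for every $u \in V_0$, and this terminal configuration is unique. By definition of $\routing^\infty_L$, this $(\rho',\sigma')$ is precisely the one given in the statement.

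Next I would note that a legal routing is, in particular, a routing: the sequence of legal moves corresponds to some routing vector $r$ with nonnegative entries, so $\routing^r(\rho,\sigma) = (\rho',\sigma')$. Since $\sigma'(u)=0$ for all $u \in V_0$, this shows $(\rho',\sigma') \in \routing^\infty(\rho,\sigma)$.

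Finally, applying the previous theorem to the two elements $(\rho',\sigma')$ and $(\rho_1,\sigma_1)$ of $\routing^\infty(\rho,\sigma)$ immediately yields $\sigma' = \sigma_1$ and $\rho' \sim \rho_1$, which is the claim. There is no real obstacle here: the corollary is essentially a restatement of the previous theorem in the special case where one of the two terminal configurations is produced by a maximal legal routing, so the only thing to check is that such a terminal configuration indeed lies in $\routing^\infty(\rho,\sigma)$, which is immediate from the cited proposition.
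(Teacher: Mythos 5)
Your proof is correct and follows exactly the intended argument: the paper states this as an immediate corollary of the preceding theorem, the only observation needed being that the terminal configuration of a maximal legal routing (guaranteed by the cited proposition since $\sigma \geq 0$) has zero particles on $V_0$ and is reached by a routing vector, hence lies in $\routing^\infty(\rho,\sigma)$. Nothing to add.
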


The {\sc generalized arrival} problem is:  \emph{given any $(\sigma,\rho)$, compute $\sigma_1$ for any $(\rho_1,\sigma_1) \in \routing^\infty(\rho,\sigma)$.}

Corollary \ref{cor:genarrival} shows that this problem contains the original {\sc arrival} problem. On the other hand, the decision version of {\sc generalized arrival} belongs to NP and co-NP, a certificate being a routing vector $r$; one may compute efficiently the configuration $\routing^r(\rho,\sigma)$ and check that we obtain 0 particles on $V_0$.

\subsubsection{Acyclic configurations}

We say that $\rho \in \cR$ is {\bf acyclic} if $G(\rho)$ contains no directed cycles. It amounts to saying that the set of arcs $\rho(V_0)$ forms in $G$ a directed forest, rooted in the sinks of $G$.

\begin{proposition}[\cite{giacaglia2011local}]
\label{prop.acyclic}
    Each equivalence class of rotor configurations contains exactly one acyclic configuration.
\end{proposition}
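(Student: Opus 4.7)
The plan is to prove the two halves of the statement -- existence and uniqueness of an acyclic representative in each equivalence class -- separately.

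For existence, my approach is to iterate positive cycle pushes. Starting from an arbitrary $\rho \in \cR$, while $G(\rho)$ contains a directed circuit $C$, I replace $\rho$ by $\theta_C \circ \rho$. By the remark preceding Theorem~\ref{thm.equivalence}, this push is realized by the routing vector $r_C$, so the equivalence class is preserved throughout. When no directed circuit remains, the resulting $\rho$ is acyclic. The crucial technical point is to show that this process terminates. Since the multigraph is stopping, every strongly connected subgraph of $G(\rho)$ contains at least one vertex whose rotor order includes an arc leaving the subgraph; one then sets up a potential function based on distance-to-sinks to show that the cumulative number of pushes at each vertex is bounded. A cleaner alternative is the classical cycle-popping argument in the spirit of Wilson's algorithm, whose termination on stopping multigraphs is well-known.

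For uniqueness, suppose $\rho_1, \rho_2 \in \cR$ are both acyclic with $\rho_1 \sim \rho_2$. By definition there exist a routing vector $r : V_0 \to \Z$ and a particle configuration $\sigma$ such that $\routing^r(\rho_1, \sigma) = (\rho_2, \sigma)$. Since the particle configuration returns to itself, conservation at each vertex gives that the signed count of arc-uses ending at $v \in V_0$ during the routing equals $r(v)$. I would then argue $r \equiv 0$ by induction on the in-forest structure of $G(\rho_1)$: choose a vertex $u$ that is a source of $G(\rho_1)$, i.e. no other vertex $v \in V_0$ has $\rho_1(v)$ ending at $u$. At such a $u$, combining the identity $\rho_2(u) = \theta_u^{r(u)} \rho_1(u)$, the particle-balance equation at $u$, and the fact that $\rho_2$ is also acyclic, should force $r(u) = 0$. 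Removing $u$ and iterating then yields $r \equiv 0$, hence $\rho_1 = \rho_2$.

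The principal obstacle lies in the uniqueness step: during the $r(v)$ routings at a vertex $v \neq u$, the rotor at $v$ advances through arcs that may transiently point to $u$, even though neither $\rho_1(v)$ nor $\rho_2(v)$ does. Bookkeeping these transient contributions to the flow into $u$ requires a careful case analysis driven by the rotor order. A more robust alternative is a cardinality argument: independently show that the number of acyclic configurations and the number of equivalence classes of $\cR$ both equal $\det(L)$, where $L$ is the reduced Laplacian of $G$ (via the Matrix--Tree theorem and the classical description of the sandpile group). Combined with existence, this count yields uniqueness by pigeonhole, avoiding any delicate structural induction.
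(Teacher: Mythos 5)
This statement is quoted by the paper from \cite{giacaglia2011local} without proof, so your attempt must stand on its own. The existence half of your plan is essentially the standard cycle-popping argument and is sound in spirit: positive cycle pushes preserve the equivalence class by Theorem~\ref{thm.equivalence} (or the remark preceding it), and termination does follow from the stopping hypothesis. But the termination argument you gesture at (a potential built from distance-to-sinks) is the weakest point of that half: a single cycle push does not obviously decrease such a potential. The clean argument is different: in an infinite sequence of positive pushes, the set $W$ of vertices turned infinitely often is closed under taking \emph{all} out-neighbours in $G$ (as a vertex keeps turning, its rotor visits every outgoing arc at push times, and the head of the pushed arc is turned simultaneously), so $W$ would contain a sink by the stopping property, a contradiction since sinks carry no rotor. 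As written, your existence step is a plausible sketch rather than a proof, but the idea is the right one.

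The genuine gap is in uniqueness, and neither of your two routes closes it. You concede yourself that the induction on a source of $G(\rho_1)$ founders on the transient arc-uses at other vertices; indeed the particle-balance bookkeeping at a single vertex does not force $r(u)=0$, and a correct structural proof has to control entire trajectories, not one vertex at a time. The fallback cardinality argument is circular as stated: existence gives that the number of equivalence classes is at most the number of acyclic configurations, i.e.\ at most $\det$ of the reduced Laplacian by Matrix--Tree, but the pigeonhole needs the reverse inequality, namely that the number of cycle-push classes of rotor configurations equals that determinant. That count is not supplied by ``the classical description of the sandpile group'': the sandpile group is a quotient of \emph{particle} configurations, and the classical bridge to rotor configurations (the free transitive rotor-routing action of \cite{Holroyd2008}) is an action on acyclic configurations, so invoking it to count \emph{classes} presupposes exactly the class--acyclic bijection you are trying to prove. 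The paper itself goes in the opposite direction: it deduces the number of classes from Proposition~\ref{prop.acyclic}, not the other way around. So uniqueness is left unproven in your proposal; to repair it you would need either the direct argument of \cite{giacaglia2011local} (or class-invariants in the spirit of the paper's arcmonic function $g$, which the paper only constructs for $P^{x,y}_n$), rather than the counting shortcut.
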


We can deduce from this result that the number of equivalence classes of rotor configurations is the number of rooted forests in $G$. By Kirchoff's  Matrix-Tree Theorem~\cite{pitman2018tree}, this is exactly the determinant of the Laplacian matrix 
of $G$ where we remove lines and columns corresponding to sinks; it also follows that this is the cardinal of the Sandpile Group of $G$ (see~\cite{Holroyd2008} and \ref{sec:sandpile}).

\subsection{Equivalence classes of particles}

\begin{definition}
    Two particle configurations $\sigma, \sigma'$ are said to be equivalent, which we denote by $\sigma \sim \sigma'$, if there is $\rho \in \cR$ such that
    \[ (\rho,\sigma) \sim (\rho,\sigma'). \]
\end{definition}

In this case, the relation is true for any $\rho \in \cR$, and it defines an equivalence relation on $\Sigma$. 

Define the Laplacian operator $\Delta$ as the linear operator from $\Z^{V_0}$ to $\Sigma$, defined for $u \in V_0$ by
\[ \Delta(u) = \sum_{a \in A^+(u)} (\head(a) - \tail(a))\]
The vector $\Delta(u)$, when added to a particle configuration $\sigma$, corresponds to transferring a total of $\deg^+(u)$ particles from $u$ to every outneighbour of $u$. The transformation from $\sigma$ to $\sigma + \Delta(u)$ is called ${\bf firing}$ $\sigma$ at $u$. This firing is legal if $\sigma(u) \geq \deg^+(u)$.

A {\bf firing vector} is simply an element of $r \in \Z^{V_0}$, and we can fire simultaneously vertices according to this vector by
\[\sigma + \Delta(r) = \sigma + \sum_{u \in V_0} r(u) \Delta(u).\]

\begin{proposition}
    For any two particle configurations $\sigma, \sigma'$ we have $\sigma \sim \sigma'$ if and only if there exists a firing vector $r$ with
    \[\sigma' = \sigma + \Delta(r).\]
\end{proposition}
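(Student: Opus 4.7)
The proof plan hinges on one simple observation: applying $\routing^+_u$ exactly $\deg^+(u)$ times starting from any rotor $\rho$ uses the arcs $\rho(u), \theta_u(\rho(u)), \ldots, \theta_u^{\deg^+(u)-1}(\rho(u))$, which is all of $A^+(u)$, and returns the rotor at $u$ to $\rho(u)$. Hence its net effect on the particle configuration is to remove $\deg^+(u)$ particles from $u$ and add one particle at $\head(a)$ for every $a \in A^+(u)$, which is exactly $\Delta(u)$. An analogous statement holds for $\routing^-_u$. I will use this in both directions.

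For the ``if'' direction, suppose $\sigma' = \sigma + \Delta(r)$ for some $r \in \Z^{V_0}$. Fix an arbitrary $\rho \in \cR$ and define a routing vector $r'$ by $r'(u) = \deg^+(u) \cdot r(u)$. Since the operators $\routing^{\pm}_u$ commute across distinct vertices, the net effect of $\routing^{r'}$ on $(\rho,\sigma)$ is obtained by summing the individual contributions, giving
\[\routing^{r'}(\rho,\sigma) = \Bigl(\rho,\; \sigma + \sum_{u \in V_0} r(u) \Delta(u)\Bigr) = (\rho,\sigma').\]
Therefore $(\rho,\sigma) \sim (\rho,\sigma')$, whence $\sigma \sim \sigma'$.

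For the ``only if'' direction, suppose $\sigma \sim \sigma'$, so there exists $\rho \in \cR$ and a routing vector $r'$ with $\routing^{r'}(\rho,\sigma) = (\rho,\sigma')$. At each vertex $u$, the rotor component of $\routing^{r'}$ acts by $\theta_u^{r'(u)}$; since this must fix $\rho(u)$ and $\theta_u$ has order $\deg^+(u)$, we conclude that $r'(u)$ is divisible by $\deg^+(u)$. Setting $r(u) = r'(u) / \deg^+(u) \in \Z$, the same commutation and cycling argument used above shows that the particle change induced by $\routing^{r'}$ is $\sum_{u \in V_0} r(u)\Delta(u) = \Delta(r)$, hence $\sigma' = \sigma + \Delta(r)$.

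The only mild care required is in handling negative entries of $r$ (resp. $r'$), but this is symmetric: $\routing^-_u$ applied $\deg^+(u)$ times contributes $-\Delta(u)$ to the particle configuration and leaves the rotor fixed, so the formula extends verbatim. There is no real obstacle here; the key insight is simply that routing at $u$ is periodic with period $\deg^+(u)$ on the rotor component, and that one full period on the particle component coincides with firing $u$.
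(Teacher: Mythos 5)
Your proof is correct and follows essentially the same route as the paper: divisibility of the routing vector by $\deg^+(u)$ from the rotor returning to its position in one direction, and the identification of one firing at $u$ with $\deg^+(u)$ routings at $u$ in the other. You simply spell out more explicitly (including the negative case) why a full period of routing at $u$ yields exactly $\Delta(u)$, which the paper states in one line.
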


\begin{proof}
    Let $r$ be a routing vector from $(\rho,\sigma)$ to $(\rho,\sigma')$.
    It follows that for all $u \in V_0$ we have $\theta^{r(u)}(\rho(u)) = \rho(u)$, so that $r(u)$ must be a multiple of $\deg^+(u)$ and we can write $r(u) = r'(u) \cdot \deg^+(u)$ with $r'(u) \in \Z$. From this follows that
    \[\sigma = \sigma + \Delta(r').\]
    Conversely, firing $\sigma$ at $u$ corresponds to $\deg^+(u)$ routings at $u$, which leaves the rotor configuration unchanged.
    \qed
\end{proof}

By analogy with maximal legal routings, define a maximal legal firing as a sequence of legal firings from $\sigma$ to another particle configuration $\sigma'$ such that finally $\sigma'$ is {\bf stable}, meaning that $\sigma'(u) < \deg^+(u)$ for all $u \in V_0$, i.e. no more legal firing are possible.

\begin{proposition}[\cite{bjorner1992chip}] \label{prop:stablechip}
    If $G$ is stopping, for all particle configurations $\sigma$ there is a unique configuration $\sigma'$ such that every maximal sequence of legal firings leads to $\sigma'$, and every sequence of legal firings can be continued in such a maximal sequence (in particular, all legal sequences are finite).
\end{proposition}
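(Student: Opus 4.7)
My proof would rest on the classical \emph{least action principle}: for any nonnegative legal firing vector $r$ from $\sigma$ and any nonnegative firing vector $r^{*}$ with $\sigma + \Delta(r^{*})$ stable, one has $r \leq r^{*}$ coordinatewise. I would prove this by induction on $|r|_{1}$. If the first firing of the legal sequence occurs at $u$, then $\sigma(u) \geq \deg^{+}(u)$. Off-diagonal entries of $\Delta$ are nonnegative, so for any $r^{*} \geq 0$ the $u$-coordinate of $\sigma + \Delta(r^{*})$ is at least $\sigma(u) - r^{*}(u) \deg^{+}(u)$. If $r^{*}(u) = 0$ this is $\geq \deg^{+}(u)$, contradicting stability; hence $r^{*}(u) \geq 1$, and one applies induction with $\sigma$ replaced by $\sigma + \Delta(u)$ and each of $r, r^{*}$ decremented at $u$.

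\textbf{Termination.} Next I would use the stopping hypothesis to show that legal firing sequences terminate and that a stable configuration is always reachable. Since $G$ is stopping, the reduced Laplacian $L_{V_{0}}$ (obtained from $\Delta$ by discarding sink rows) is a nonsingular $M$-matrix with nonnegative inverse; in particular the integer vector $z = \lceil L_{V_{0}}^{-1} \mathbf{1} \rceil$ is strictly positive on $V_{0}$ and satisfies $\sum_{a \in A^{+}(u)} \bigl( z(\tail(a)) - z(\head(a)) \bigr) \geq 1$ for every $u \in V_{0}$ (with $z \equiv 0$ on sinks). A legal firing at $u$ therefore strictly decreases the nonnegative integer potential $\langle z, \sigma_{+} \rangle$ by at least $1$, where $\sigma_{+}$ denotes the positive part; bounding this potential by its initial value yields a universal finite bound on the length of any legal firing sequence. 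Firing greedily until no more legal firing exists then produces a nonnegative vector $r^{*}$ with $\sigma + \Delta(r^{*})$ stable.

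\textbf{Uniqueness and continuation.} Given any two maximal legal firing sequences with vectors $r$ and $r'$, both lead to stable configurations, so the least action principle applied in both directions gives $r \leq r'$ and $r' \leq r$; hence $r = r'$ and $\sigma + \Delta(r) = \sigma + \Delta(r')$, proving uniqueness. For the continuation statement, let $r$ be the firing vector of an arbitrary legal sequence and let $r^{*}$ be as above; least action gives $r \leq r^{*}$. If $\sigma + \Delta(r)$ were already stable we would be done; otherwise some vertex $u$ has $(\sigma+\Delta(r))(u) \geq \deg^{+}(u)$, which is a legal firing extending the sequence, and by the termination bound this extension process must halt, producing a maximal legal sequence.

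\textbf{Main obstacle.} Step 1 is essentially formal manipulation, so the real work lies in the termination argument of the second paragraph: it is the only place the stopping hypothesis intervenes, and one must convert the qualitative assumption ``every vertex reaches a sink'' into a quantitative integer potential. The cleanest route, as above, is through the positive Green function $L_{V_{0}}^{-1}\mathbf{1}$, but a purely combinatorial alternative—assigning to each $u$ a weight exponential in its distance to the sinks—would also work if one prefers to avoid $M$-matrix theory.
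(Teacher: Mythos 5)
Your overall architecture is sound, and it is worth noting that the paper itself gives no proof of this proposition: it is quoted from Bj\"orner--Lov\'asz, so the comparison is with the classical argument from the literature, which you essentially reconstruct. The least action principle proved by induction on $|r|_1$, a potential-function termination argument that is the only place where the stopping hypothesis enters, and then the abelian conclusion (equality of the firing vectors of any two maximal legal sequences, hence of the final configurations, plus greedy continuation of any legal sequence) is exactly the standard route. Your induction step for least action is correct, and your use of the positive part $\sigma_+$ is the right device to handle configurations with antiparticles: a legal firing at $u$ decreases $\langle z,\sigma_+\rangle$ by at least $\deg^+(u)z(u)-\sum_{a\in A^+(u)}z(\head(a))$, since the fired vertex stays nonnegative and each receiving vertex increases its positive part by at most the number of chips received.

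The one step that fails as written is the construction $z=\lceil L_{V_0}^{-1}\mathbf{1}\rceil$. Because $L_{V_0}$ has negative off-diagonal entries, rounding up does not preserve the inequality $L_{V_0}z\geq\mathbf{1}$. Concretely, take $V_0=\{v_1,v_2,v_3\}$ with one arc $v_1\to v_2$, one arc $v_1\to v_3$, and two arcs from each of $v_2,v_3$ to a sink; this graph is stopping, and $L_{V_0}^{-1}\mathbf{1}=(1,\tfrac12,\tfrac12)$, whose ceiling $z=(1,1,1)$ gives $\deg^+(v_1)z(v_1)-z(v_2)-z(v_3)=2-1-1=0$. A legal firing at $v_1$ then leaves your potential unchanged, so the claimed strict decrease, and with it the length bound, collapses. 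The repair is immediate: do not round, but clear denominators, taking $z=D\,L_{V_0}^{-1}\mathbf{1}$ for a common denominator $D$ (or any sufficiently large integer multiple), so that $L_{V_0}z=D\mathbf{1}\geq\mathbf{1}$ holds exactly while $z$ remains a strictly positive integer vector; your alternative of weights exponential in the distance to the sinks would also do. With that one-line fix, and granting the standard fact (which you invoke correctly) that stopping makes $L_{V_0}$ a nonsingular M-matrix with nonnegative inverse, the proof is complete.
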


This stable configuration $\sigma'$ is the {\bf stabilization} of $\sigma$ and denoted $\sigma^\circ$.

\subsection{Sandpile Group} \label{sec:sandpile}

We point out that the equivalence relation on particles defined in the previous section is not equivalent to the construction of the so-called Sandpile Group.
In the case of a stopping rotor graph, the Sandpile Group is obtained from particle configurations equivalence classes by furthermore identifying configurations which have the same value on $V_0$. More precisely, define a relation $\sim_S$ by 

\[\sigma \sim_S \sigma' \Leftrightarrow \exists \sigma_1, \sigma \sim \sigma_1 \text{ and } \forall u \in V_0, \sigma'(u) = \sigma_1(u). \]
It is equivalent to requiring the existence of a
firing vector $r$ such that
 \[ \forall u \in V_0, \sigma'(u) = (\sigma + \Delta(r))(u).\]

\begin{proposition}[\cite{Holroyd2008}]
\begin{itemize}
    \item The quotient of $\Sigma$ by $\sim_S$ has an additive structure inherited from $\Sigma$, and it
    is a finite abelian group called the Sandpile Group
    and denoted by $SP(G)$;
    \item the order of $SP(G)$ is equal to the number of acyclic rotor configurations in $G$.
\end{itemize}    
\end{proposition}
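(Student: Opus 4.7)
The proof splits along the two bullets.

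For the first bullet, I would verify that $\sim_S$ is a congruence on $\Sigma$: if $\sigma \sim_S \sigma_1$ is witnessed by a firing vector $r$ and $\tau \sim_S \tau_1$ by $r'$, then linearity of $\Delta$ gives $\sigma+\tau \sim_S \sigma_1+\tau_1$ via $r+r'$, so $SP(G)$ inherits an abelian group structure with identity $[0]$ and inverses $[-\sigma]$. Finiteness reduces to showing that the reduced Laplacian $\tilde L \in \Z^{V_0 \times V_0}$ (the matrix whose column $u$ is $\Delta(u)|_{V_0}$, for $u \in V_0$) is invertible over $\mathbb{Q}$: since $\sim_S$ is exactly the relation $\sigma|_{V_0} \equiv \sigma'|_{V_0} \pmod{\tilde L \Z^{V_0}}$, we have $SP(G) \cong \Z^{V_0}/\tilde L \Z^{V_0}$, of cardinality $|\det \tilde L|$ by Smith normal form. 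Invertibility of $\tilde L$ follows from a harmonic-function argument: a kernel vector of $\tilde L^T$, extended by $0$ on the sinks, satisfies $\deg^+(u)\, w(u) = \sum_{a \in A^+(u)} w(\head(a))$; at a vertex $u^*$ achieving $\max |w|$, this equality forces $w(\head(a)) = w(u^*)$ for every $a \in A^+(u^*)$, and iterating along any directed path from $u^*$ to a sink (where $w = 0$) gives $w \equiv 0$.

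For the second bullet, Proposition \ref{prop.acyclic} identifies the acyclic rotor configurations with the equivalence classes $\cR/\sim$, so I exhibit a bijection $\Psi : SP(G) \to \cR/\sim$. Fix $\rho_0 \in \cR$ and define $\Psi([\sigma])$ to be the common rotor class of any element of $\routing^\infty(\rho_0, \sigma)$, which is well defined by the equivalence theorem preceding Corollary \ref{cor:genarrival}. For well-definedness on $\sim_S$-classes: adding $\Delta(k)$ to $\sigma$ corresponds to performing $\deg^+(u) k(u)$ extra routings at each vertex $u$, which returns the rotor to itself and therefore preserves the final rotor class, while shifts supported on sinks do not participate in any routing. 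For surjectivity: given $\rho \in \cR$, pick $r$ with $\theta^r \circ \rho_0 = \rho$ (possible since each $\theta_u$ is cyclic on $A^+(u)$) and write $\routing^r(\rho_0, 0) = (\rho, \sigma_r)$; since the routing operators are additive in the particle coordinate, $\routing^r(\rho_0, -\sigma_r) = (\rho, 0)$, so $\Psi([-\sigma_r]) = [\rho]$. For injectivity, suppose $\Psi([\sigma_1]) = \Psi([\sigma_2])$: pick routings $\routing^{r_i}(\rho_0, \sigma_i) = (\rho_i, \tau_i)$ with $\tau_i|_{V_0} = 0$ and $\rho_1 \sim \rho_2$, then compose with a routing vector realizing $\rho_1 \sim \rho_2$ to obtain, by additivity, a routing from $(\rho_0, \sigma_1)$ to $(\rho_0, \sigma_2 + \tau_1 - \tau_2)$. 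Since $\tau_1 - \tau_2$ is supported on sinks, this is precisely the relation $\sigma_1 \sim_S \sigma_2$.

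The main obstacle is the injectivity step: composing three routings and tracking how particle contributions accumulate requires repeated use of the additivity of routing in the particle coordinate (that is, $\routing^r(\rho, \sigma) = \routing^r(\rho, 0) + (0, \sigma)$, direct from the definitions of $\move^+$ and $\turn^+$), in order to transfer equalities between rotor-particle configurations into equalities of particle configurations modulo shifts on sinks. The harmonic-function argument for invertibility of $\tilde L$ and the surjectivity construction are more direct; Proposition \ref{prop.acyclic} and the earlier theorems of this subsection absorb most of the structural combinatorics.
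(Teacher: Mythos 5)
Your proposal is correct, but note that the paper does not prove this proposition at all: it is quoted from Holroyd et al.~\cite{Holroyd2008}, so there is no internal proof to compare against. What you have written is a sound self-contained reconstruction of the standard argument. For the first bullet, your observation that $\sim_S$ is a congruence and that $\Sigma/\!\sim_S\ \cong \Z^{V_0}/\tilde L\Z^{V_0}$ is exactly right, and the maximum-principle argument for nonsingularity of the reduced Laplacian is the usual one for stopping graphs (each head attaining the maximum propagates the maximal value along a directed path to a sink, where the extension is $0$). For the second bullet, your bijection $\Psi$ is in essence the rotor-routing action of the sandpile group on rotor classes, whose freeness and transitivity is the core of the cited result; combined with Proposition~\ref{prop.acyclic} it gives the count by acyclic configurations, and since you only need a bijection (not a group action or homomorphism), your weaker statement suffices. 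The well-definedness, surjectivity and injectivity steps all check out: the key facts you use --- that routing is additive in the particle coordinate, that firing equals $\deg^+(u)$ routings at $u$, that rotor equivalence holds for every particle configuration once it holds for one, and that a routing fixing the rotor configuration forces the particle configurations to differ by $\Delta$ of a firing vector --- are all available in the paper, and your injectivity computation correctly lands on a difference supported on sinks, which is precisely $\sim_S$. One small economy you could note: once $\Psi$ is a bijection, finiteness of $SP(G)$ already follows from finiteness of $\cR/\!\sim$, so the Smith-normal-form/determinant step is only needed if you want the order as $|\det\tilde L|$ (which, via the Matrix-Tree theorem, is the route the paper alludes to after Proposition~\ref{prop.acyclic}).
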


\section{Main Results for Path Multigraphs}
In this part, we summarize our results, and the rest of the paper  will introduce the tools used to prove them. From now on, we consider  only graphs of the family $P^{x,y}_n$, and the letter $G$ denotes such a graph.

\subsection{The case $x=y=1$}
First, let us recall the results obtained about Path Graphs $P^{1,1}_n$ in \cite{auger2022polynomial} in order to understand how they compare to the case $P^{x,y}_n$ when $0<x<y$ are coprime. Technically, these results were stated only for nonnegative particle configurations but they still hold in the general case.

In the case $x=y=1$, define for any particle configuration $\sigma$
    \[h(\sigma) = \sum_{i=0}^{n+1} i \cdot \sigma(u_i)\]
    and for any rotor configuration $\rho$, define $g(\rho)$ as 
    \[g(\rho) = | i : \head(\rho(u_i)) = u_{i-1} |, \]
    i.e. $g(\rho)$ is the number of arcs in $G(\rho)$ pointing
    to the left.
    
    The next result completely solves {\sc generalized arrival} in $P^{1,1}_n$
    for any number of particles and antiparticles.

\begin{theorem} \label{thm:11arrival}
In the case $x=y=1$,
    for all $(\rho,\sigma) \in \cR \times \Sigma$, the number of particles on sink $u_{n+1}$ in any configuration of $\routing^\infty(\rho,\sigma)$
    is equal to the unique $m \in \Z$ such that 
    \[ 0 \leq g(\rho) -h(\sigma) + m (n+1) \leq n,\]
    i.e.
    \[  m = \lceil \frac{h(\sigma) - g(\rho)}{n+1} \rceil.\]
\end{theorem}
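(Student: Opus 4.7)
The plan is to identify $h(\sigma) - g(\rho)$ as an invariant of $\cR \times \Sigma$ under every elementary routing operation $\routing^\pm_{u_i}$ for $i \in \ll 1, n\rr$, and then to evaluate that invariant on a fully routed configuration. Concretely I would proceed in three steps: (i) prove that $h - g$ is preserved by each elementary routing; (ii) fix any $(\rho^*, \sigma^*) \in \routing^\infty(\rho, \sigma)$ and compute the value of the invariant there; (iii) match the two expressions and use $g(\rho^*) \in \ll 0, n \rr$ to pin down $m$.

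For step (i), the argument reduces to a finite case analysis at each internal vertex. In $P^{1,1}_n$ every $u_i$ with $1 \leq i \leq n$ has exactly one outgoing arc to $u_{i+1}$ and one to $u_{i-1}$, and $\theta_{u_i}$ swaps them. If $\rho(u_i)$ points to $u_{i+1}$, then $\move^+_{u_i}$ adds $u_{i+1} - u_i$ to $\sigma$, increasing $h$ by $1$, and $\turn^+_{u_i}$ flips the rotor so that it now points to $u_{i-1}$, increasing $g$ by $1$; hence $h-g$ is unchanged. If $\rho(u_i)$ points left, the signs flip and one gets $\Delta h = \Delta g = -1$, again giving $\Delta(h-g) = 0$. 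A symmetric check for $\routing^-_{u_i} = \move^-_{u_i} \circ \turn^-_{u_i}$ (where one first applies $\theta^{-1}$ and then moves a particle backwards along the new arc) yields the same invariance. By composition, $h - g$ is preserved by $\routing^r$ for any routing vector $r$.

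For step (ii), pick any $(\rho^*, \sigma^*) \in \routing^\infty(\rho,\sigma)$; such a configuration exists by the discussion preceding the theorem and satisfies $\sigma^*(u) = 0$ for all $u \in V_0$. Setting $m = \sigma^*(u_{n+1})$, we have
\[
h(\sigma^*) \;=\; 0 \cdot \sigma^*(u_0) + (n+1)\, m \;=\; m(n+1),
\]
so the invariance yields
\[
h(\sigma) - g(\rho) \;=\; h(\sigma^*) - g(\rho^*) \;=\; m(n+1) - g(\rho^*),
\]
which rewrites as $g(\rho) - h(\sigma) + m(n+1) = g(\rho^*)$. For step (iii), $g(\rho^*)$ counts a subset of the $n$ internal vertices of $V_0$, hence $g(\rho^*) \in \ll 0, n\rr$; this is the required double inequality. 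Uniqueness of $m$ is then immediate because shifting $m$ by a nonzero integer shifts $g(\rho) - h(\sigma) + m(n+1)$ by a nonzero multiple of $n+1$, which cannot remain in an interval of length $n$.

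The whole argument hinges on the invariance of step (i); the only delicate point is to remember that the signs of $\Delta h$ and $\Delta g$ individually depend on the current rotor direction at $u_i$, but their difference is systematically zero. I do not expect any substantive obstacle beyond running this case check cleanly and invoking the existence of a fully routed configuration from the earlier part of the paper.
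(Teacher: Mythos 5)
Your proof is correct: $h(\sigma)-g(\rho)$ is indeed preserved by every $\routing^\pm_{u_i}$ in $P^{1,1}_n$, and evaluating this invariant on a fully routed configuration, where $h(\sigma^*)=m(n+1)$ and $g(\rho^*)\in\ll 0,n\rr$, yields the stated double inequality and the uniqueness of $m$. This is essentially the same invariant-based approach the paper uses for its general result (Proposition~\ref{prop.gh_characterize} together with evaluation on an element of $\routing^\infty(\rho,\sigma)$), specialized to $x=y=1$, where $F=n+1$ and $g(\cR)=\ll 0,n\rr$ is a full interval so none of the stable-decomposition machinery is needed.
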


Together with this result, we can describe the structure of the Sandpile Group of $P^{1,1}_n$ and its action on rotor configurations.
Define $\bar{h}$ and $\bar{g}$  as $h$ and $g$ modulo $n+1$.

\begin{theorem} \label{thm:11sandpiles}
\begin{enumerate}[(i)]
    \item The Sandpile Group $SP(P^{1,1}_n)$ is cyclic of order $n+1$;
    \item the map $\bar{h} : \Sigma \rightarrow \Z/(n+1)\Z$ quotients by $\sim_S$ into an isomorphism between $SP(P^{1,1}_n)$ and $\Z/(n+1)\Z$;
    \item the map $\bar{g} : \cR \rightarrow \Z/(n+1)\Z$ quotients into a bijection between rotor equivalence classes  and $\Z/(n+1)\Z$;
    \item the action of the sandpile group on rotor equivalence classes can be understood in the following way: let $(\rho,\sigma)$ be a rotor-particle configuration and $(\rho',\sigma') \in \routing^\infty(\rho,\sigma)$.  Then $\rho'$ is in class 
    \[\bar{g}(\rho') = \bar{g}(\rho) - \bar{h}(\sigma) .\]
\end{enumerate}
\end{theorem}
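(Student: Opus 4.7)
The plan is to prove (iii) first by a classification of acyclic rotor configurations, then to deduce (i) and (ii) by a counting argument on $SP(G)$, and finally to get (iv) by verifying that $g-h$ is invariant under a single routing step. In $P^{1,1}_n$ every internal vertex has outdegree $2$ and $\theta_{u_i}$ swaps its two arcs, so in the induced graph $G(\rho)$ each internal vertex has outdegree $1$; any directed circuit is therefore a 2-cycle $u_i \rightleftarrows u_{i+1}$ with $\rho(u_i)$ pointing right and $\rho(u_{i+1})$ pointing left. A positive cycle push on such a circuit flips both rotors, so $g$ gains $1$ at $u_i$ and loses $1$ at $u_{i+1}$; hence $g$ is invariant under cycle pushes, and by Theorem~\ref{thm.equivalence} constant on $\sim$-classes. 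The acyclic configurations are exactly those avoiding any ``right-then-left'' adjacency, i.e.\ the $n+1$ threshold configurations $L^a R^{n-a}$ for $a \in \{0,\dots,n\}$; on these $g$ takes each value in $\{0,\dots,n\}$ exactly once. Combined with Proposition~\ref{prop.acyclic}, this proves (iii) and shows there are exactly $n+1$ rotor equivalence classes.

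By the proposition at the end of Section~\ref{sec:sandpile}, this yields $|SP(G)| = n+1$. To see that $\bar h$ descends to $SP(G)$, I check that firing any $u_i$ changes $h$ by $(i-1) + (i+1) - 2i = 0$, and that adding any integer multiple of $u_0$ or $u_{n+1}$ to $\sigma$ contributes either $0$ or a multiple of $n+1$ to $h$. Hence $\bar h$ is invariant under $\sim_S$ and defines a group homomorphism $SP(G) \to \Z/(n+1)\Z$, which is surjective since $\bar h(u_1) = 1$ and therefore an isomorphism by equal cardinality. This gives (ii), and (i) follows immediately since $\Z/(n+1)\Z$ is cyclic.

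For (iv) I verify that $g - h$ is preserved by every elementary routing $\routing^+_{u_i}$: if $\rho(u_i)$ points right, the particle moves to $u_{i+1}$ (so $h$ increases by $1$) and $\rho(u_i)$ flips to pointing left (so $g$ increases by $1$); the case $\rho(u_i) = L$ is symmetric, and the same holds for the inverse operator. Thus $g - h$ is invariant under any $\routing^r$. For $(\rho', \sigma') \in \routing^\infty(\rho, \sigma)$ the configuration $\sigma'$ is supported on the sinks, so $h(\sigma') = (n+1)\sigma'(u_{n+1}) \equiv 0 \pmod{n+1}$, which yields $\bar g(\rho') \equiv \bar g(\rho) - \bar h(\sigma) \pmod{n+1}$, as desired. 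The whole argument is essentially bookkeeping; the one substantive insight is the explicit classification of acyclic configurations as the threshold patterns $L^a R^{n-a}$, and the only subtle point is to verify that $\bar h$ is well-defined on the finer quotient $SP(G)$ rather than merely on $\Sigma/{\sim}$, which is precisely why the reduction modulo $n+1$ is unavoidable.
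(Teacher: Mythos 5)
Your proof is correct, and it is worth noting that the paper itself does not prove Theorem~\ref{thm:11sandpiles} in-text: it recalls it from \cite{auger2022polynomial}, and its general machinery (Theorems~\ref{thm:arrivalcoprime} and~\ref{thm.recap}) is stated only for $0<x<y$ coprime, so it does not literally cover $x=y=1$. Your route is the natural elementary specialization of the paper's philosophy: you use the same two invariants ($h$ harmonic under firing, $g$ invariant under cycle pushes, hence constant on rotor classes by Theorem~\ref{thm.equivalence}), the same reduction to acyclic representatives via Proposition~\ref{prop.acyclic}, and the same cardinality argument through the Holroyd et al.\ proposition identifying $|SP(G)|$ with the number of acyclic configurations. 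What you do differently is replace the heavy part of the general proof --- the stable-decomposition/Engel-machine/transducer analysis that the paper needs (Lemmas~\ref{lem.decompose_tree}, \ref{lem.language}, \ref{thm.unique_value_modulo_F}) to show that $g$ separates rotor classes and to locate $g(\cR)$ --- by the direct observation that in $P^{1,1}_n$ every directed circuit of $G(\rho)$ is a $2$-cycle between adjacent vertices and the acyclic configurations are exactly the $n+1$ threshold words $L^aR^{n-a}$, on which $g$ takes the values $0,\dots,n$ bijectively; likewise your hand check that $g-h$ is preserved by each elementary routing is the $x=y=1$ instance of Proposition~\ref{prop.gh_characterize}. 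The trade-off is clear: your argument is short and self-contained but relies on the trivial cycle structure of the simple path, whereas the paper's apparatus is what makes the same scheme survive when $x>1$, where acyclic configurations can no longer be enumerated as threshold patterns and deciding membership in $g(\cR)$ genuinely requires the rational-base decomposition machinery. One cosmetic remark: when you invoke Theorem~\ref{thm.equivalence} you should note that negative cycle pushes also preserve $g$ (immediate, being inverses of positive ones), since equivalence is generated by both.
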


As an example, consider the case $P^{1,1}_3$, which is depicted on Fig.~\ref{fig.example_simple}, with the particle configuration $\sigma$ equal to $(-8,5,10,-5,12)$ from left to right and $\rho$ as depicted. We see that $\rho$ has $2$ arcs going left so that $g(\rho)=2$, while we have
\[h(\sigma) = -8\cdot 0 +5\cdot 1+10\cdot 2 - 5\cdot 3 + 12\cdot 4 = 58.     \]

From Thm.~\ref{thm:11arrival}, we deduce the final configuration $\sigma'$ of the full routing of $(\rho,\sigma)$ counts $m=14$ particles ending on the right sink $u_4$ and $-8+5+10-5+12-14=0$ particles on $u_0$.

From Thm.~\ref{thm:11sandpiles}, we deduce that any final rotor configuration $\rho'$ in the routing will be such that $\bar{g}(\rho') = 2 - 58 = 0 \mod 4$,  so that all its arcs will point right, hence $\rho'$ is the acyclic configuration of this class.

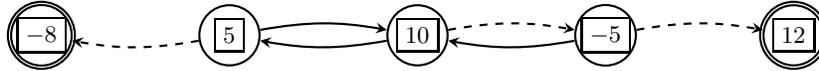
\begin{figure}[htbp]
\centering
\begin{tikzpicture}[>=stealth, auto, node distance=2.5cm, thick]

% Style des sommets
\tikzset{
  state/.style={circle, draw, inner sep=0pt, minimum size=8mm}
}

% Sommets
\node[state] (0) {$\boxed{-8}$};
\node[state, right of= 0] (1) {$\boxed{5}$};
\node[state, right of= 1] (2) {$\boxed{10}$};
\node[state, right of= 2] (3) {$\boxed{-5}$};
\node[state,  right of= 3] (4) {$\boxed{12}$};

\node[shape=circle,draw=black] (00) at (0) {$~~~~~~$};
\node[shape=circle,draw=black] (44) at (4) {$~~~~~~$};

% Arcs sortants
\draw[->, dashed] (1) to[bend left=10]  (0);
\draw[->,thick] (1) to[bend left=10]  (2);

\draw[->,thick] (2) to[bend left=10]  (1);
\draw[->, dashed] (2) to[bend left=10]  (3);

\draw[->,thick] (3) to[bend left=10]  (2);
\draw[->, dashed] (3) to[bend left=10] (4);

\end{tikzpicture}
\caption{Rotor routing a particle configuration in $P^{1,1}_3$. The particle configuration is written in squares in vertices and the initial rotor configuration is given by full arcs while other arcs are dashed. Note that there is a unique rotor order in this graph.}
\label{fig.example_simple}
\end{figure}

\subsection{Case $0 < x < y$ coprime}

We now state our results in the case this paper is concerned about. Compare this with Theorem  \ref{thm:11arrival}.
In both theorems, we use
\[F = \sum_{i=0}^n x^{n-i} y^i.\]

\begin{theorem} \label{thm:arrivalcoprime}
Suppose that $0 < x < y$ are coprime and consider the rotor multigraph $P^{x,y}_n$.
\begin{enumerate}[(i)]
    \item There exists a linear function 
\[h : \Sigma \rightarrow \Z\]
and a function
\[g : \cR \rightarrow \Z\]
such that, for all $(\rho,\sigma) \in \cR \times \Sigma$, the number of particles on sink $u_{n+1}$ in any configuration of $\routing^\infty(\rho,\sigma)$
is equal to $m$ if and only if
\[g(\rho) -h(\sigma) +m F \in g(\cR); \]

\item the set $g(\cR)$ is a finite set of nonnegative integers, and membership in $g(\cR)$ can be tested in linear time; moreover the unique integer $m $ satisfying the previous condition can be found in time $O(n \log x)$, and it satisfies
\[ m - \lceil  \frac{h(\sigma) - g(\rho)}{F} \rceil \in \ll 0, x-1 \rr .\]
\item  More generally, if $(\rho,\sigma)$ and $(\rho',\sigma')$ are rotor-particle configurations, then  $(\rho,\sigma) \sim (\rho',\sigma')$ if and only if
    \[g(\rho) - h(\sigma) = g(\rho') - h(\sigma') \text{ and } \deg(\sigma)=\deg(\sigma').\]
\end{enumerate}
\end{theorem}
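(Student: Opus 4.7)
The plan is to construct a linear harmonic function $h:\Sigma \to \Z$ and a piecewise-linear arcmonic function $g:\cR \to \Z$, and to show that the pair $(g(\rho)-h(\sigma),\, \deg(\sigma))$ is a complete invariant for $\sim$. I define $h$ as the unique linear extension of the boundary condition $h(u_0)=0$ together with the harmonic recursion $x\cdot h(u_{k+1}) + y\cdot h(u_{k-1}) = (x+y)\cdot h(u_k)$ for each $k \in \ll 1,n \rr$; this forces $h(u_i)=\sum_{j=0}^{i-1} x^{n-j}y^j$ and in particular $h(u_{n+1})=F$. Linearity plus harmonicity give $h\circ\Delta=0$, so $h$ factors through particle equivalence. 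For rotors I build a local potential $\gamma_k$ on $\ll 0,x+y-1 \rr$ whose successive differences mirror the head-minus-tail increments of $h$ along the rotor order $\theta_{u_k}$: set $\gamma_k(0)=0$, $\gamma_k(i+1)-\gamma_k(i) = x^{n-k}y^k$ for $0\le i<x$, and $\gamma_k(i+1)-\gamma_k(i) = -x^{n-k+1}y^{k-1}$ for $x\le i<x+y$; cyclic closure $\gamma_k(x+y)=\gamma_k(0)$ is immediate from $x\cdot x^{n-k}y^k = y\cdot x^{n-k+1}y^{k-1}$. Finally, $g(\rho) := \sum_{k=1}^{n}\gamma_k(i_k(\rho))$, where $i_k(\rho)\in\ll 0,x+y-1 \rr$ is the rotor index of $\rho(u_k)$.

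By construction, a single $\routing^+_{u_k}$ step changes $h(\sigma)$ and $g(\rho)$ by exactly the same integer ($+x^{n-k}y^k$ when $\rho(u_k)$ points right and $-x^{n-k+1}y^{k-1}$ when it points left, including the wrap step from index $x+y-1$ to $0$). Therefore $g-h$ and $\deg$ are preserved by every routing vector, which yields the forward direction of (iii) at once. Because cycle pushes are routings that fix $\sigma$, $g$ is also constant on rotor equivalence classes $\cR/\!\sim$ by Theorem~\ref{thm.equivalence}.

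For the converse of (iii) I route both $(\rho,\sigma)$ and $(\rho',\sigma')$ into $\routing^\infty$, obtaining $(\rho_1,\sigma_1)$ and $(\rho_1',\sigma_1')$ with all particles on the sinks; writing $m=\sigma_1(u_{n+1})$ gives $h(\sigma_1)=mF$, and matching invariants reduce to $g(\rho_1) - g(\rho_1') = (m-m')F$. The crux is to show the induced map
\[
\bar g\ :\ \cR/\!\sim\ \longrightarrow\ \Z/F\Z
\]
is a bijection. Since $|\cR/\!\sim|=F$ by Proposition~\ref{prop.acyclic} combined with Kirchhoff's theorem, only surjectivity is needed: routing a single chip placed at $u_k$ to the sinks shifts $\bar g$ by $-h(u_k)\bmod F$, so the image of $\bar g$ contains any attained value shifted by the subgroup generated by $\{h(u_k)\bmod F\}_{k=1}^n$. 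A short gcd computation,
\[
\gcd(h(u_1),\, h(u_2)-h(u_1),\, \ldots,\, h(u_n)-h(u_{n-1})) = \gcd(x^n,\, x^{n-1}y,\, \ldots,\, xy^{n-1}) = x,
\]
together with $\gcd(x,F)=1$ (because $F\equiv y^n \pmod x$ and $\gcd(x,y)=1$), shows this subgroup is all of $\Z/F\Z$. With $\bar g$ bijective, $(m-m')F\equiv 0\pmod F$ forces $\rho_1\sim\rho_1'$ and $m=m'$; hence $\sigma_1=\sigma_1'$ and transitively $(\rho,\sigma)\sim(\rho',\sigma')$.

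Part (i) is then the specialization of (iii) to $(\rho',\sigma')\in\routing^\infty(\rho,\sigma)$, where $h(\sigma')=mF$ so that $g(\rho)-h(\sigma)+mF = g(\rho')\in g(\cR)$ uniquely determines $m$. For part (ii), the pointwise bounds $0 \le \gamma_k(i) \le x^{n-k+1}y^k$ give $g(\cR)\subseteq \ll 0,\, x(F-x^n) \rr \subset \ll 0,\, xF-1 \rr$, confining $mF$ to a window of length $xF$ and yielding $m - \lceil (h(\sigma)-g(\rho))/F\rceil \in \ll 0, x-1 \rr$; efficient determination of $m$ reduces to deciding membership in $g(\cR)$, which we later handle via the rational-base decomposition of the unique acyclic representative of each class. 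The principal obstacle throughout is the bijectivity of $\bar g$: all three parts of the theorem depend on it, and it is where the coprimality $\gcd(x,y)=1$ is genuinely used, tying the construction to rational-base numeration.
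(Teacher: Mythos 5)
Your treatment of parts (i) and (iii) is correct, but it follows a genuinely different route from the paper at the crucial point. Both you and the paper use the same $h$ and the same arcmonic $g$ (your $\gamma_k$ reproduces exactly the values of Proposition~\ref{prop.arcmonic_expr}), and the same invariance of $g(\rho)-h(\sigma)$ under a single $\routing^+_{u_k}$ (Proposition~\ref{prop.gh_characterize}). Where you diverge is in proving that $\bar g$ induces a bijection between rotor classes and $\Z/F\Z$ (equivalently, that each residue modulo $F$ meets $g(\cR)$ exactly once, which is the paper's Lemma~\ref{thm.unique_value_modulo_F}). The paper gets this through the stable decomposition in the rational base (Theorem~\ref{thm.decomposition_unique}, the Engel machine, the languages $L_a$, $L_d$ and the transducer computations of Lemmas~\ref{lem.decompose_tree}--\ref{lem.monotony}, culminating in Proposition~\ref{cor.g_unique}). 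You instead argue surjectivity plus counting: fully routing one chip placed on $u_k$ shifts $\bar g$ of the rotor class by $-\bar h(u_k)$, the image of $\bar g$ is therefore closed under the subgroup generated by the $\bar h(u_k)$, and $\gcd\bigl(h(u_1),\dots,h(u_n)\bigr)=x$ with $\gcd(x,F)=1$ makes that subgroup all of $\Z/F\Z$; since the number of classes is $F$ (the acyclic representatives are easy to count directly in the path, which is cleaner than invoking Kirchhoff), surjectivity forces bijectivity. This is a correct and more elementary argument for (i) and (iii), and it isolates where coprimality is used; what it does not give you is any effective description of $g(\cR)$ or of the stable decompositions, which is precisely what the paper's automata machinery buys.

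That loss shows up as a real gap in part (ii). You establish the window $m-\lceil (h(\sigma)-g(\rho))/F\rceil\in\ll 0,x-1\rr$ correctly from the bound $g(\cR)\subseteq\ll 0,xF-1\rr$, but the two algorithmic claims are not proved. Linear-time membership testing in $g(\cR)$ is exactly the content of Theorem~\ref{thm.v_decomposition} (a value $v$ lies in $g(\cR)$ iff its stable decomposition $c[v]$ is matched by $e_d$), and your proposal only gestures at it (``which we later handle via the rational-base decomposition of the unique acyclic representative''), without the decomposition theory or the language characterization that make the test work. Moreover, knowing only that $m$ lies in a window of length $x$ and that membership can be tested yields an $O(nx)$ search, not $O(n\log x)$: the bisection requires a monotone criterion in $k$, namely that $c[v+kF](u_{n+1})$ is nondecreasing and that the correct $k$ is the smallest one making it nonnegative (Lemmas~\ref{cor.monotony} and~\ref{thm.unique_value_modulo_F}). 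Neither the monotonicity nor the characterization of the correct $k$ follows from your group-theoretic bijectivity argument, so part (ii) still needs the stable-decomposition apparatus (or a substitute for it) to be complete.
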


Note that, in the case $x=1$, we have 
\[ m = \lceil  \frac{h(\sigma) - g(\rho)}{F} \rceil \]
as in the case $x=y=1$ and no further algorithm is needed.

This is now the version of Theorem~\ref{thm:11sandpiles} in our present case. We define $\bar{h}$ and $\bar{g}$ as equal respectively to $h$ and $g$ modulo $F$.

\begin{theorem}
\label{thm.recap}
    Suppose that $0 < x < y$ are coprime and consider the rotor multigraph $P^{x,y}_n$.
    \begin{enumerate}[(i)]
        \item The Sandpile Group of $P^{x,y}_n$ is cyclic of order $F$;
        \item The map $\bar{h} : \Sigma \rightarrow \Z/F\Z$ quotients by $\sim_S$ into an isomorphism between $ SP(P^{x,y}_n)$ and $\Z/F\Z$;
        \item The map $\bar{g}$ quotients by $\sim$ into a bijection between rotor equivalence classes  and $\Z/F\Z$;
        \item The action of the sandpile group on rotor equivalence classes can be understood in the following way: let $(\rho,\sigma)$ be a rotor-particle configuration and $(\rho',\sigma') \in routing^\infty (\rho,\sigma)$.  Then $\rho'$ is in class 
        \[\bar{g}(\rho') = \bar{g}(\rho) - \bar{h}(\sigma) .\]
    \end{enumerate}
\end{theorem}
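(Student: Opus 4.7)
The strategy is to derive all four parts from Theorem~\ref{thm:arrivalcoprime}(iii). Two specializations are essential: taking $\rho = \rho'$ gives $\sigma \sim \sigma' \iff h(\sigma) = h(\sigma')$ and $\deg(\sigma) = \deg(\sigma')$, while taking $\sigma = \sigma' = 0$ gives $\rho \sim \rho' \iff g(\rho) = g(\rho')$. I would also use the normalizations $h(u_0) = 0$ and $h(u_{n+1}) = F$, which are forced by Theorem~\ref{thm:arrivalcoprime}(i): moving one particle from $u_{n+1}$ increments the arrival count $m$ by one and so must decrease $h$ by exactly $F$, while doing the same for $u_0$ must leave both quantities unchanged.

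For part (ii), the linearity of $h$ makes $\bar h : \Sigma \to \Z/F\Z$ a surjective group homomorphism, with surjectivity already witnessed by $h(u_1) = 1$. The core step is showing that $\bar h(\sigma) = \bar h(\sigma')$ is equivalent to $\sigma \sim_S \sigma'$. One direction is routine: if $\sigma \sim_S \sigma'$, take a witness $\sigma_1 \sim \sigma$ with $\sigma_1|_{V_0} = \sigma'|_{V_0}$; then $h(\sigma_1) = h(\sigma)$, while $\sigma_1 - \sigma'$ is supported on sinks so its $h$-value is a multiple of $F$. For the converse, assuming $F \mid h(\sigma) - h(\sigma')$, I would construct a witness $\sigma_1$ matching $\sigma'$ on $V_0$ by solving for the two scalars $\sigma_1(u_0), \sigma_1(u_{n+1}) \in \Z$ so that $h(\sigma_1) = h(\sigma)$ and $\deg(\sigma_1) = \deg(\sigma)$: the first equation pins $\sigma_1(u_{n+1})$ thanks to the divisibility hypothesis, and the second then fixes $\sigma_1(u_0)$. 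Theorem~\ref{thm:arrivalcoprime}(iii) yields $\sigma \sim \sigma_1$, hence $\sigma \sim_S \sigma'$. Part (i) is then immediate, as $\Z/F\Z$ is cyclic of order $F$.

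For part (iii), the second specialization above says that $g$ is constant on rotor equivalence classes and that the induced map $\cR/\sim \to \Z$ is injective. By part (i) together with Proposition~\ref{prop.acyclic}, the number of rotor equivalence classes equals $|SP(P^{x,y}_n)| = F$. Theorem~\ref{thm:arrivalcoprime}(ii) states that $g(\cR)$ contains exactly one nonnegative representative of each residue class modulo $F$, so its $F$ elements are pairwise distinct modulo $F$; hence $\bar g$ descends to a bijection between rotor equivalence classes and $\Z/F\Z$. For part (iv), given $(\rho', \sigma') \in \routing^\infty(\rho, \sigma)$, one has $(\rho, \sigma) \sim (\rho', \sigma')$, so Theorem~\ref{thm:arrivalcoprime}(iii) yields $g(\rho) - h(\sigma) = g(\rho') - h(\sigma')$; since $\sigma'$ is supported on sinks, $h(\sigma') \equiv 0 \pmod F$, and the identity $\bar g(\rho') = \bar g(\rho) - \bar h(\sigma)$ follows by reducing modulo $F$.

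The main obstacle will be the converse direction of part (ii): showing that the two-variable linear system for $\sigma_1(u_0)$ and $\sigma_1(u_{n+1})$ has an integer solution uses in a crucial way that $h(u_{n+1}) = F$ (with $h(u_0) = 0$ making the second equation trivially solvable). Everything else reduces to unwinding definitions once Theorem~\ref{thm:arrivalcoprime} is in hand.
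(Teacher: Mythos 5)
Your overall strategy is the paper's own: everything is reduced to Theorem~\ref{thm:arrivalcoprime}(iii) together with the values of $h$ on the two sinks; the witness $\sigma_1$ in the converse direction of (ii) is built exactly as in the paper by adjusting the two sink coordinates (the divisibility by $F$ pinning $\sigma_1(u_{n+1})$, the degree pinning $\sigma_1(u_0)$); and your treatments of (iii) and (iv) are sound. For (iii) you take a slightly different route: you count rotor classes via Proposition~\ref{prop.acyclic} and the order of the Sandpile Group, and get injectivity of $\bar{g}$ modulo $F$ from the uniqueness of $m$ in Theorem~\ref{thm:arrivalcoprime}, whereas the paper simply invokes Lemma~\ref{thm.unique_value_modulo_F}; the two arguments carry essentially the same content, so this is a cosmetic difference.

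There is, however, one genuine error: $h(u_1)$ is not $1$, it is $x^n$. This is immediate from the paper's definition $h(u_k)=\sum_{i=0}^{k-1}x^{n-i}y^{i}$, and it is not a matter of normalization: harmonicity (which your own specialization of Theorem~\ref{thm:arrivalcoprime}(iii) forces) together with $h(u_0)=0$ and $h(u_{n+1})=F$ determines $h$ completely, so no admissible $h$ has $h(u_1)=1$ unless $x=1$. Since your only justification of the surjectivity of $\bar{h}:\Sigma\rightarrow\Z/F\Z$ is this false value, the step fails as written, and surjectivity is precisely what you need to upgrade the injection of (ii) to an isomorphism and to conclude in (i) that $SP(P^{x,y}_n)$ is cyclic of order $F$ (without it you only get an embedding into $\Z/F\Z$). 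The repair is short and is exactly the paper's argument: $F\equiv y^{n}\pmod{x}$ and $\gcd(x,y)=1$, hence $\gcd(x^{n},F)=1$, so the class of a single particle on $u_1$, whose $\bar{h}$-value is $x^{n}\bmod F$, generates $\Z/F\Z$. With that substitution the rest of your proof goes through.
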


As an example, we consider the Path Multigraph $P^{2,3}_3$. The graph is depicted on Fig.~\ref{fig.gh}, together with harmonic values (values of $h$, inside vertices) and arcmonic values (values of $g$, on arcs).

Consider for instance the particle configuration $\sigma = (-8,5,13,-5,12)$ from left to right  such that 
\[h(\sigma) = -8 \times 0 + 5 \times 8 + 13 \times 20 - 5 \times 38 + 12 \times 65 = 890,\]
and the rotor configuration $\rho = (a_1^1, a^2_1, a^3_1)$ such that $g(\rho) = 12 +18 +27 = 57$.
We have $F=65$, and 
$g(\cR) = \{ 
0, 8, 12, 16, 18, 20, 24, 26,\allowbreak 27, 28, 30, 32,\allowbreak 34, 35, 36, \allowbreak38, 39, 40,\allowbreak 42, 43, \allowbreak 44, 45, 46,\allowbreak 47, 48, \allowbreak50, 51, 52, \allowbreak53, 54, 55, 56, 57,\allowbreak 58, 59, 60, 61, \allowbreak 62, \allowbreak 63, \allowbreak 64, \allowbreak 66,\allowbreak 67, \allowbreak 68, 69, 70, 71,\allowbreak 72, 74, 75, 76, 78, 79, 80,\allowbreak 82, 84, 86, 87, 88, \allowbreak 90, 94, 96, 98, \allowbreak 102, 106, \allowbreak 114 \}$.

The only value $v$ in $g(\cR)$ equal to $g(\rho)-h(\sigma)=-833 \mod 65$ is $12 = -833 + 13*65$. Since $\deg(\sigma)=17$, in the end of the routing there are $13$ particles on sink $u_4$ and $4$ particles on sink $u_0$. The final rotor configuration $\rho'$ satisfies
\[\bar{g}(\rho') = \bar{g}(\rho) - \bar{h}(\sigma) = -833 \mod 65 = 12 \mod 65\]
so $g(\rho')=12$ by looking in $g(\cR)$.

\begin{figure}[htbp]
\centering
\begin{tikzpicture}[>=stealth, auto, node distance=2.5cm, thick]

% Style des sommets
\tikzset{
  state/.style={circle, draw, inner sep=0pt, minimum size=8mm}
}

% Sommets
\node[state,text=red] (01) {$0$};
\node[state,text=red, right of= 01] (1) {$8$};
\node[state,text=red, right of= 1] (2) {$20$};
\node[state,text=red, right of= 2] (3) {$38$};
\node[state,text=red, right of= 3] (41) {$65$};

\node[shape=circle,draw=black] (0) at (01) {$~~~~~~$};
\node[shape=circle,draw=black] (4) at (41) {$~~~~~~$};
\draw [ thick,->,>=stealth](8,2) arc (0:330:0.4cm);

% Arcs sortants
\draw[->] (1) to[bend left=10] node[above,text=blue] {24} (0);
\draw[->] (1) to[bend left=40] node[above,text=blue] {16} (0);
\draw[->] (1) to[bend left=80] node[above,text=blue] {8} (0);
\draw[->] (1) to[bend left=20] node[above,text=blue] {0} (2);
\draw[->] (1) to[bend left=60] node[above,text=blue] {12} (2);

\draw[->] (2) to[bend left=10] node[above,text=blue] {36} (1);
\draw[->] (2) to[bend left=40] node[above,text=blue] {24} (1);
\draw[->] (2) to[bend left=80] node[above,text=blue] {12} (1);
\draw[->] (2) to[bend left=20] node[above,text=blue] {0} (3);
\draw[->] (2) to[bend left=60] node[above,text=blue] {18} (3);

\draw[->] (3) to[bend left=10] node[above,text=blue] {54} (2);
\draw[->] (3) to[bend left=40] node[above,text=blue] {36} (2);
\draw[->] (3) to[bend left=80] node[above,text=blue] {18} (2);
\draw[->] (3) to[bend left=20] node[above,text=blue] {0} (4);
\draw[->] (3) to[bend left=60] node[above,text=blue] {27} (4);

\end{tikzpicture}
\caption{Harmonic and arcmonic values on $P^{2,3}_3$. The values of
$h$ and $g$ are given respectively in vertices and on arcs.}
\label{fig.gh} 
\end{figure}
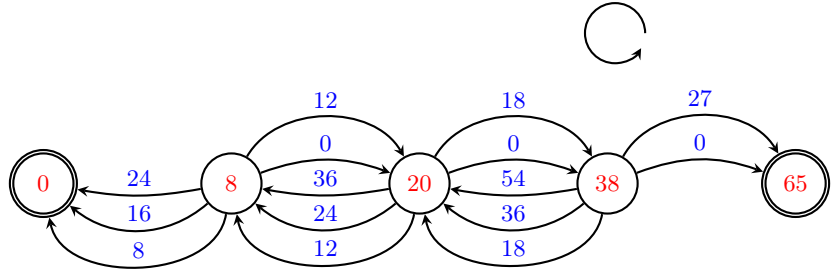

\section{Harmonic and Arcmonic Functions in the Path}

In the rest of the paper, we fix $n>0$ and coprime integers $x,y$ such that $0 < x < y$,  and consider the Path Multigraph $P_n^{x,y}$ as defined in Subsection~\ref{sec.graph_def}.

First, let us define the linear function $h : \Sigma \rightarrow \Z$, which will serve as an invariant for the firing operation and enable the characterization of particle equivalence classes. Initially, we define $h$ on  vertices and then extend it by linearity to $\Sigma$.

\begin{lemma}
The linear function $h:\Sigma \rightarrow \Z$ defined by  $h(u_0)=0$ and \[h(u_k)=\sum_{i=0}^{k-1}x^{n-i}y^{i}\]
for $k\in\llbracket1,n+1\rrbracket$ is harmonic on $G$, i.e. for any $u \in V_0$ we have
\[h(\Delta(u)) = 0.\]
\end{lemma}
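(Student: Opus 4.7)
The plan is a direct computation exploiting the telescoping form of $h$. Fix $u = u_k$ with $k \in \llbracket 1, n \rrbracket$. By the definition of $P^{x,y}_n$, vertex $u_k$ has $x$ outgoing arcs to $u_{k+1}$ and $y$ outgoing arcs to $u_{k-1}$, hence
\[\Delta(u_k) = x\,(u_{k+1} - u_k) + y\,(u_{k-1} - u_k) = x\,u_{k+1} + y\,u_{k-1} - (x+y)\,u_k.\]
By linearity of $h$ it therefore suffices to show
\[x\,h(u_{k+1}) + y\,h(u_{k-1}) - (x+y)\,h(u_k) = 0,\]
which rearranges as
\[x\bigl(h(u_{k+1}) - h(u_k)\bigr) = y\bigl(h(u_k) - h(u_{k-1})\bigr).\]

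The key step is to observe that $h$ is defined precisely so that its successive increments are geometric. From the formula $h(u_k) = \sum_{i=0}^{k-1} x^{n-i} y^i$ one reads off, for every $k \in \llbracket 0, n \rrbracket$,
\[h(u_{k+1}) - h(u_k) = x^{n-k} y^{k}.\]
In particular, $h(u_{k+1}) - h(u_k) = x^{n-k} y^k$ and $h(u_k) - h(u_{k-1}) = x^{n-k+1} y^{k-1}$, so
\[x\bigl(h(u_{k+1}) - h(u_k)\bigr) = x^{n-k+1} y^{k} = y\bigl(h(u_k) - h(u_{k-1})\bigr),\]
which is the desired identity. This handles every $u_k \in V_0 = \{u_1, \dots, u_n\}$, so $h(\Delta(u)) = 0$ for all $u \in V_0$.

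There is no genuine obstacle: the proof is entirely a one-line verification once the increment formula $h(u_{k+1}) - h(u_k) = x^{n-k} y^k$ is written down. The only point worth noting is that the boundary cases $k=1$ and $k=n$ require no separate treatment, since the increment formula holds uniformly for $k \in \llbracket 0, n \rrbracket$ (with $h(u_0) = 0$ by convention and the sum defining $h(u_{n+1})$ well defined up to $i = n$). The whole argument is in fact what motivates the definition of $h$: one is solving the two-term recurrence $x\,(h(u_{k+1}) - h(u_k)) = y\,(h(u_k) - h(u_{k-1}))$ with $h(u_0) = 0$ and $h(u_1) = x^n$, whose solution is exactly the stated geometric sum.
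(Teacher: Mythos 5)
Your proof is correct and follows essentially the same route as the paper: both reduce $h(\Delta(u_k))=0$ to the identity $x\bigl(h(u_{k+1})-h(u_k)\bigr)=y\bigl(h(u_k)-h(u_{k-1})\bigr)$ and verify it via the increment formula $h(u_{k+1})-h(u_k)=x^{n-k}y^{k}$, both sides equalling $x^{n-k+1}y^{k}$. Your explicit write-out of $\Delta(u_k)$ and the remark on the boundary cases are just slightly more detailed presentations of the same computation.
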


\begin{proof}
For $k\in\ll 1,n \rr$: 
\[
y(h(u_{k})-h(u_{k-1}))=y(x^{n-k+1}y^{k-1})=x^{n-k+1}y^{k}
\]
 and 
\[
x(h(u_{k+1})-h(u_{k}))=x(x^{n-k}y^{k})=x^{n-k+1}y^{k}.
\]
Hence,
\begin{align*}
& y(h(u_{k})-h(u_{k-1}))  =x(h(u_{k+1})-h(u_{k}))\\
\Leftrightarrow \quad & (x+y)h(u_{k}) - yh(u_{k-1})-xh(u_{k+1})  = 0\\
\Leftrightarrow \quad & h(\Delta(u_k))  = 0.
\end{align*}
\qed
\end{proof}

\begin{corollary}
    For any particle configurations $\sigma, \sigma'$, if $\sigma \sim \sigma'$ then $h(\sigma) = h(\sigma')$.
\end{corollary}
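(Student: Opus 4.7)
The plan is to reduce this immediately to the harmonicity of $h$ just established in the preceding lemma, via the characterization of $\sim$ on particle configurations given in the proposition stated earlier.

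First, I would invoke the proposition that characterizes particle equivalence: $\sigma \sim \sigma'$ if and only if there exists a firing vector $r \in \Z^{V_0}$ such that
\[\sigma' = \sigma + \Delta(r) = \sigma + \sum_{u \in V_0} r(u)\, \Delta(u).\]
Next, I would apply $h$ to both sides and use linearity of $h$ (which is part of its definition as a linear map $\Sigma \to \Z$) to obtain
\[h(\sigma') = h(\sigma) + \sum_{u \in V_0} r(u)\, h(\Delta(u)).\]
Finally, by the lemma just proved, $h(\Delta(u)) = 0$ for every $u \in V_0$, so the sum vanishes and $h(\sigma') = h(\sigma)$.

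There is essentially no obstacle here: the corollary is a one-line consequence of (i) the characterization of $\sim$ via firing vectors and (ii) the pointwise harmonicity of $h$ at interior vertices, combined through linearity. The only care needed is to make sure the linearity of $h$ is invoked explicitly, since the firing vector $r$ may have negative components and $\Delta(r)$ is a $\Z$-linear (not merely $\mathbb{N}$-linear) combination of the $\Delta(u)$'s; but this is immediate from the definition of $h$ as a linear function on $\Sigma \cong \Z^V$.
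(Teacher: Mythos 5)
Your proof is correct and is precisely the intended argument: the paper leaves the corollary without proof because it follows immediately from the firing-vector characterization of $\sigma \sim \sigma'$, the linearity of $h$, and the harmonicity $h(\Delta(u))=0$ established in the preceding lemma. No gaps; your remark about $\Z$-linearity (negative components of $r$) is a fine point but already covered by $h$ being defined as a linear map on $\Sigma \cong \Z^V$.
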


 It turns out that $h(u_k)$ is the number of acyclic configurations in $P^{x,y}_n$ that contain a directed path from $u_k$ to $u_{n+1}$. In particular, $h(u_{n+1})$ is the number of rooted forests, which is also the number of particle equivalence classes and rotor equivalence classes~\cite{Holroyd2008}.

 In the rest of the document, we denote by $F$ this value, i.e.
 \[F = \sum_{i=0}^n x^{n-i} y^i.\]

We now define a similar function for rotor configurations, designed to be invariant on equivalence classes of rotors configurations. We introduce the term arcmonic for these functions that correspond to harmonic functions but on arcs.

\begin{proposition}
\label{prop.g_invariant}
 The linear function $g : \Z^A \rightarrow \Z$, defined by
\[g(a^k_j) = \sum_{i=0}^{j-1} (h(\head(a^k_i)) - h(u_k) )\]
for all $k \in \ll 1, n \rr$ and $j \in \ll 0,x+y-1\rr$ (in particular, $g(a^k_0)=0$)
is arcmonic, i.e. it satisfies for all directed circuits $C$ in $G(\rho)$, 
$g(C)=g(\theta(C)),$ where $C$ is identified with the sum of arcs $\sum_{a \in C} a$.
\end{proposition}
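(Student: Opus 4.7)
The plan is to reduce the statement to a pointwise computation of $g(\theta(a)) - g(a)$ for a single arc, and then telescope along the cycle. Since $g$ is linear on $\Z^A$, we have
\[ g(\theta(C)) - g(C) = \sum_{a \in C} \bigl( g(\theta(a)) - g(a) \bigr),\]
so the key lemma to prove is that for every arc $a = a^k_j$,
\[ g(\theta(a^k_j)) - g(a^k_j) = h(\head(a^k_j)) - h(u_k). \tag{$\ast$}\]

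For the \emph{non-wrap-around} case $j \in \ll 0, x+y-2\rr$, identity $(\ast)$ is immediate from the defining formula $g(a^k_j) = \sum_{i=0}^{j-1}\bigl(h(\head(a^k_i)) - h(u_k)\bigr)$, since the sum defining $g(a^k_{j+1})$ has exactly one extra term. The delicate case is $j = x+y-1$, where $\theta(a^k_{x+y-1}) = a^k_0$, so $g(\theta(a)) - g(a) = -g(a^k_{x+y-1})$. Expanding
\[ g(a^k_{x+y-1}) = x \cdot h(u_{k+1}) + (y-1)\cdot h(u_{k-1}) - (x+y-1)\cdot h(u_k),\]
and substituting $x\, h(u_{k+1}) = (x+y)\, h(u_k) - y\, h(u_{k-1})$, which comes from the harmonicity of $h$ at $u_k$, collapses the expression to $h(u_k) - h(u_{k-1})$. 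Since $\head(a^k_{x+y-1}) = u_{k-1}$, this yields $(\ast)$ in this case too. This reduction to harmonicity is where the work happens; everything else is formal.

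Once $(\ast)$ is established, the result follows quickly. Let $C$ be a directed circuit in $G(\rho)$. Because each vertex in $G(\rho)$ has outdegree one, $C$ is a simple cycle; let $K\subseteq V_0$ denote its vertex set, and write $C = \sum_{k \in K} a^k_{j_k}$ where $a^k_{j_k} = \rho(u_k)$ is the unique arc of $C$ leaving $u_k$. Summing $(\ast)$ over the arcs of $C$,
\[ g(\theta(C)) - g(C) = \sum_{k \in K} \bigl( h(\head(a^k_{j_k})) - h(u_k) \bigr).\]
Since $C$ is a cycle, the map $u_k \mapsto \head(a^k_{j_k})$ is a permutation of $K$, so the two sums on the right cancel and $g(\theta(C)) = g(C)$, as required. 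The only nontrivial ingredient is the single application of harmonicity in the wrap-around case of $(\ast)$; everything else is a telescoping bookkeeping argument.
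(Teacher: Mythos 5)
Your proposal is correct and follows essentially the same route as the paper: establish the per-arc identity $g(\theta(a^k_j)) - g(a^k_j) = h(\head(a^k_j)) - h(u_k)$ (with the wrap-around case $j=x+y-1$ handled via harmonicity of $h$ at $u_k$), then sum over the arcs of the circuit where the contributions cancel. Your explicit expansion of $g(a^k_{x+y-1})$ and the permutation-of-heads argument are just minor rephrasings of the paper's use of the vanishing full sum and the telescoping $\sum_{a\in C}\bigl(h(\head(a))-h(\tail(a))\bigr)=0$.
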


\begin{proof}
If $j \in  \ll 0,x+y-2\rr$ then
\begin{align*} 
g(\theta(a^k_j)) - g(a^k_j) & = g(a^k_{j+1}) - g(a^k_j) \\
& = \sum_{i=0}^j (h(\head(a^k_i)) - h(u_k) ) - \sum_{i=0}^{j-1} (h(\head(a^k_i)) - h(u_k) ) \\ 
& = h(\head(a^k_j)) - h(u_k)
\end{align*}

If $j=x+y-1$, then we use the fact that $h$ is harmonic so that $\sum_{i=0}^{j} (h(\head(a^k_i)) - h(u_k) )  = 0$,
\begin{align*}
    g(\theta(a^k_j)) - g(a^k_j) & = g(a^k_0) - g(a^k_j) \\
    & = \sum_{i=0}^{j} (h(\head(a^k_i)) - h(u_k) ) - \sum_{i=0}^{j-1} (h(\head(a^k_i)) - h(u_k) )\\
    & =  h(\head(a^k_j)) - h(u_k).
\end{align*}

Then, for any directed circuit $C$:
\begin{align*}
g(\theta(C)) - g(C) & =\sum_{a \in C} \left(h(\head(a)) - h(\tail(a)) \right) \\
& = 0.
\end{align*}
\qed
\end{proof}

By identifying a rotor configuration $\rho$ with the formal sum of its arcs, we can define 
\[g(\rho) = \sum_{u \in V_0} g(\rho(v)).\]

\begin{corollary}
\label{cor.rho_value}
If $\rho, \rho'$ are rotor configurations such that $\rho \sim \rho'$, then $g(\rho) = g(\rho')$.
\end{corollary}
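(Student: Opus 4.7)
The strategy is to invoke Theorem~\ref{thm.equivalence}, which reduces the equivalence $\rho \sim \rho'$ to the existence of a finite sequence of positive and negative cycle pushes transforming $\rho$ into $\rho'$. It therefore suffices to prove that a single cycle push preserves the integer $g(\rho)$; the general statement then follows by a trivial induction on the length of the sequence.

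For a positive cycle push, let $C$ be a directed circuit in $G(\rho)$ and let $V(C) \subseteq V_0$ be its vertex set. By definition of $G(\rho)$, the arcs of $C$ are exactly $\{\rho(u) : u \in V(C)\}$, so when we identify $C$ with the formal sum of its arcs we have $C = \sum_{u \in V(C)} \rho(u)$. The new rotor configuration $\theta_C \circ \rho$ agrees with $\rho$ outside of $V(C)$ and equals $\theta(\rho(u))$ on each $u \in V(C)$. Using the linearity of $g$ on $\Z^A$ and the definition $g(\rho) = \sum_{u \in V_0} g(\rho(u))$, I would compute
\begin{align*}
g(\theta_C \circ \rho) - g(\rho) &= \sum_{u \in V(C)} \bigl( g(\theta(\rho(u))) - g(\rho(u)) \bigr) \\
&= g(\theta(C)) - g(C),
\end{align*}
which vanishes by Proposition~\ref{prop.g_invariant}. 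Hence a positive cycle push preserves $g$.

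For a negative cycle push on a directed circuit $C$ in $G(\theta^{-1} \circ \rho)$, one applies exactly the same argument to the configuration $\theta^{-1}_C \circ \rho$: indeed, a positive cycle push on $C$ applied to $\theta^{-1}_C \circ \rho$ returns $\rho$, so $g(\theta^{-1}_C \circ \rho) = g(\rho)$ by the previous paragraph. Combining these two cases and chaining along the sequence of cycle pushes provided by Theorem~\ref{thm.equivalence} yields $g(\rho) = g(\rho')$.

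I do not anticipate any real obstacle: the only subtle point is to make sure the bookkeeping between $g$ viewed as a linear form on $\Z^A$ and $g$ evaluated on a rotor configuration $\rho = \sum_{u \in V_0} \rho(u)$ is consistent, and that the arcs of $C$ are precisely the $\rho(u)$ for $u \in V(C)$ (a directed circuit in $G(\rho)$ passes through each of its vertices exactly once, and from each such vertex uses the unique outgoing arc $\rho(u)$). Once these identifications are made, the corollary is an immediate consequence of Proposition~\ref{prop.g_invariant} and Theorem~\ref{thm.equivalence}.
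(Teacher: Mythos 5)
Your proof is correct and matches the argument the paper intends: the corollary is stated as an immediate consequence of Proposition~\ref{prop.g_invariant} together with Theorem~\ref{thm.equivalence}, precisely because a cycle push replaces the arcs $\rho(u)$, $u\in V(C)$, by $\theta(\rho(u))$ and arcmonicity gives $g(\theta(C))=g(C)$. Your handling of the negative cycle push and the induction on the length of the sequence fills in the routine details the paper leaves implicit; nothing is missing.
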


The exact values of $g$ are given by:

\begin{proposition}
\label{prop.arcmonic_expr}
For $j\in\ll 0,x+y-1\rr$ and $k\in \ll 1,n \rr $, 
\[
g(a^k_j)=\begin{cases}
jd_{k} & \text{if }j\in\ll0,x\rr\\
(x+y-j)d_{k-1} & \text{if }j\in\ll x+1,x+y-1\rr
\end{cases}
\]
where, for every $k\geq 0$, $d_k = x^{n-k}y^k$.
\end{proposition}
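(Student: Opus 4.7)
The plan is to evaluate the defining sum $g(a^k_j)=\sum_{i=0}^{j-1}\bigl(h(\head(a^k_i))-h(u_k)\bigr)$ directly, by splitting the range of $i$ according to whether the arc $a^k_i$ points right (so $i\in\ll 0,x-1\rr$ and $\head(a^k_i)=u_{k+1}$) or left (so $i\in\ll x,x+y-1\rr$ and $\head(a^k_i)=u_{k-1}$). Everything then reduces to knowing the two increments $h(u_{k+1})-h(u_k)$ and $h(u_{k-1})-h(u_k)$.

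The first key observation is the telescoping identity that follows from the definition of $h$, namely
\[
h(u_{k+1})-h(u_k)=x^{n-k}y^k=d_k,
\qquad
h(u_{k-1})-h(u_k)=-x^{n-k+1}y^{k-1}=-d_{k-1}.
\]
With this in hand, the first case $j\in\ll 0,x\rr$ is immediate: every $i$ in the sum satisfies $i<x$, so each summand equals $d_k$, giving $g(a^k_j)=j\,d_k$.

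For the second case $j\in\ll x+1,x+y-1\rr$, I would split the sum at $i=x$:
\[
g(a^k_j)=\sum_{i=0}^{x-1}d_k+\sum_{i=x}^{j-1}(-d_{k-1})=x\,d_k-(j-x)d_{k-1}.
\]
At this point the clean form $(x+y-j)d_{k-1}$ will emerge once one uses the elementary identity $x\,d_k=y\,d_{k-1}$, which follows from $d_k=x^{n-k}y^k$ and $d_{k-1}=x^{n-k+1}y^{k-1}$. Substituting yields $g(a^k_j)=y\,d_{k-1}-(j-x)d_{k-1}=(x+y-j)d_{k-1}$, as required.

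There is no real obstacle here; the only point deserving care is consistency at the boundary $j=x$, where both formulas should agree. Indeed the first formula gives $x\,d_k$ and, extending the second one formally to $j=x$, we get $(y)d_{k-1}=x\,d_k$ by the identity above, confirming that the two pieces of the definition match at $j=x$ (which is why the first case is stated on the closed interval $\ll 0,x\rr$ and the second on $\ll x+1,x+y-1\rr$).
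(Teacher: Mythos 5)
Your computation is correct, and it is exactly the argument the paper intends: the proposition is stated without proof as a direct evaluation of the defining sum, using the increments $h(u_{k+1})-h(u_k)=d_k$ and $h(u_{k-1})-h(u_k)=-d_{k-1}$ (the latter identity is the remark immediately following the proposition) together with $x\,d_k=y\,d_{k-1}$. Your boundary check at $j=x$ is a nice confirmation but not needed beyond what you wrote.
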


Remark that, for every $k \in \ll 0,n  \rr$:
\[d_k = h(u_{k+1}) - h(u_k).\]

See Fig.~\ref{fig.gh} for an example of harmonic and arcmonic values on $P^{2,3}_3$.  In this example, $d_0 = 8$, $d_1 = 12$, $d_2=18$, $d_3 = 27$, and $d_4 = \frac{81}{2}$.

\begin{proposition}
\label{prop.gh_characterize}
    If $(\rho,\sigma)$ and $(\rho',\sigma')$ are rotor-particle configurations, then  if $(\rho,\sigma) \sim (\rho',\sigma')$ 
    we have
    \[g(\rho) - h(\sigma) = g(\rho') - h(\sigma').\]
\end{proposition}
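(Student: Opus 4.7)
The plan is to show that the scalar quantity $g(\rho) - h(\sigma)$ is an invariant of every elementary routing step $\routing^+_u$ (and hence of its inverse). Since $(\rho,\sigma) \sim (\rho',\sigma')$ means by definition that $(\rho',\sigma') = \routing^r(\rho,\sigma)$ for some routing vector $r$, and $\routing^r$ is a composition of operators $\routing^{\pm}_u$, this invariance immediately yields the proposition.

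To verify invariance, I would unfold $\routing^+_u = \turn^+_u \circ \move^+_u$ and track the effect on each term separately. First, $\move^+_u$ leaves $\rho$ unchanged and replaces $\sigma$ by $\sigma + \head(\rho(u)) - u$. Because $h$ is linear, this step changes $h(\sigma)$ by exactly
\[ h(\head(\rho(u))) - h(u), \]
and it does not affect $g(\rho)$. Next, $\turn^+_u$ leaves $\sigma$ unchanged and replaces $\rho(u)$ by $\theta(\rho(u))$, all other arcs of $\rho$ being untouched. By the computation already performed in the proof of Proposition~\ref{prop.g_invariant}, for every arc $a \in A^+(u)$,
\[ g(\theta(a)) - g(a) = h(\head(a)) - h(u), \]
a formula which holds uniformly in $j \in \ll 0, x+y-1 \rr$ thanks to harmonicity of $h$ (the case $j = x+y-1$ uses $h(\Delta(u_k)) = 0$). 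Applying this with $a = \rho(u)$ shows that $g(\rho)$ also increases by $h(\head(\rho(u))) - h(u)$ under $\turn^+_u$.

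Adding the two contributions, $\routing^+_u$ increases both $h(\sigma)$ and $g(\rho)$ by the same integer $h(\head(\rho(u))) - h(u)$, so it preserves $g(\rho) - h(\sigma)$. The inverse $\routing^-_u$ is then automatic by bijectivity, and composing finitely many such operators gives the invariance under any $\routing^r$. Therefore $g(\rho) - h(\sigma) = g(\rho') - h(\sigma')$, proving the proposition.

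I do not foresee a serious obstacle here: the only mildly delicate point is the case $j = x+y-1$ in the arc update, where one must invoke harmonicity of $h$ to close the cycle in the rotor order; but this is already embedded in Proposition~\ref{prop.g_invariant} and needs only to be quoted.
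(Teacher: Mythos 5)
Your proposal is correct and follows essentially the same route as the paper: reduce to a single elementary routing step, use linearity of $h$ for the particle move, and quote the identity $g(\theta(a)) - g(a) = h(\head(a)) - h(\tail(a))$ established in the proof of Proposition~\ref{prop.g_invariant} for the rotor turn, so the two contributions cancel. Nothing is missing.
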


\begin{proof}
    Without loss of generality, assume that $(\rho', \sigma') = \routing^+_u(\rho, \sigma)$ for some $u\in V_0$.  Recall that, by definition of routing operators, $\sigma' = \sigma + \head(\rho(u))-u $, hence by the linearity of $h$ we obtain:
    \[ h(\sigma') - h(\sigma) = h(\head(\rho(u))) - h(u) .\] 

    We use the fact that $g(\rho')-g(\rho) = (h(\head(\rho(u))) - h(u))$ (see the proof of Prop.~\ref{prop.g_invariant}).
    This yields:
    \begin{align*}
     g(\rho') - h(\sigma') - (g(\rho) - h(\sigma)) & = (g(\rho') - g(\rho)) + (h(\sigma) - h(\sigma')) \\
    & = (h(\head(\rho(u))) - h(u)) + (h(\sigma) - h(\sigma'))\\
    & = 0.
    \end{align*}
    \qed
\end{proof}

\subsection{Stable decomposition of arcmonic values}

In the light of Prop.~\ref{prop.gh_characterize}, it becomes important to characterize which integers are of the form $g(\rho)$ for some $\rho \in \cR$.
If $\rho \in \cR$, by Proposition~\ref{prop.arcmonic_expr}, $g(\rho)$ can be decomposed as a sum
\[g(\rho) = \sum_{k=0}^n c_k d_k,\]
with $c_k \in \ll 0,x+y-1\rr$ for all $k \in \ll 1,n\rr$; recall that $d_k = x^{n-k} y^k$.

This decomposition is not unique since all equivalent rotor configurations share the same value. We shall show that to each equivalence class we can assign a special form of decomposition, named \textbf{stable decomposition} thereafter.

\begin{theorem}
\label{thm.decomposition_unique}
    Every integer $v \geq 0$ has unique decomposition of the form
    \[ v = \sum_{k=0}^n c_k d_k + c_{n+1} d_{n+1}\]
    with $c_k \in \ll0,y-1\rr$ for $k \in \ll0,n\rr$
    and $c_{n+1} \in x\Z$.
\end{theorem}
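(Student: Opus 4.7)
The plan is to prove both existence and uniqueness simultaneously by induction on $n$, and in fact to establish the (formally stronger) statement that every $v \in \Z$ (not merely $v \geq 0$) admits a unique such decomposition. This strengthening is essentially forced on us: the natural inductive step will produce an intermediate integer $v' = (v - c_0 d_0)/y$ that may be negative even when $v \geq 0$, so the induction needs to cover all integers.

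For the base case $n = 0$, the decomposition reads $v = c_0 + c_1 d_1$ with $d_1 = y/x$ and $c_1 \in x\Z$. Writing $c_1 = xm$ with $m \in \Z$, this becomes $v = c_0 + my$, which is precisely the Euclidean division of $v$ by $y$ and produces a unique pair $(c_0, m)$ with $c_0 \in \ll 0, y-1 \rr$.

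For the inductive step, assume the statement for $n-1$ and fix $v \in \Z$. The key observation is that reduction modulo $y$ isolates $c_0$. Indeed $d_k = x^{n-k} y^k \equiv 0 \pmod y$ for every $k \in \ll 1, n \rr$, and $c_{n+1} d_{n+1} = (xm)(y^{n+1}/x) = m y^{n+1} \equiv 0 \pmod y$, so any decomposition must satisfy $v \equiv c_0 x^n \pmod y$. Since $\gcd(x,y) = 1$, $x^n$ is invertible modulo $y$, and this congruence pins down $c_0$ uniquely in $\ll 0, y-1 \rr$. Setting $v' = (v - c_0 d_0)/y \in \Z$, the remaining equation becomes
\[
v' = \sum_{k=1}^n c_k\, x^{n-k} y^{k-1} + c_{n+1}\, \frac{y^n}{x} = \sum_{j=0}^{n-1} c_{j+1}\, d'_j + c_{n+1}\, d'_n,
\]
where $d'_j = x^{n-1-j} y^j$. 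This is exactly an $(n-1)$-decomposition of $v'$, so the inductive hypothesis (applied to the integer $v'$, regardless of sign) produces unique coefficients $c_1, \ldots, c_n \in \ll 0, y-1\rr$ and $c_{n+1} \in x\Z$, completing both existence and uniqueness.

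The only real subtlety — and the point one has to be careful about — is the hybrid nature of the last term $c_{n+1} d_{n+1}$: the quantity $d_{n+1} = y^{n+1}/x$ is not an integer in general, but the constraint $c_{n+1} \in x\Z$ exactly compensates, making $c_{n+1} d_{n+1}$ an integer multiple of $y^{n+1}$. This is also what forces the induction to be run on all of $\Z$. Once this accounting is in place, the argument is simply a clean greedy extraction of ``digits'' in the rational base $y/x$, exactly analogous to the usual base-$b$ representation of integers.
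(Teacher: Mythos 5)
Your proof is correct. The uniqueness half is essentially the paper's own argument: reduce modulo $y$, use that $d_k \equiv 0 \pmod y$ for $k \geq 1$ and $c_{n+1}d_{n+1} = (c_{n+1}/x)\,y^{n+1} \equiv 0 \pmod y$, invert $x^n$ modulo $y$ to pin down $c_0$, and iterate on $v' = (v - c_0 x^n)/y$. Where you genuinely diverge is existence. The paper does not argue by induction at all: it introduces the Engel machine $E^{x,y}_n$, shows the function $h_E(u_k) = d_k$ is harmonic there, produces from Bezout's identity a nonnegative configuration $c$ with $h_E(c) = v$ (this is where $v \geq 0$ is used), and obtains the stable decomposition as the chip-firing stabilization $c^\circ$, whose digits automatically lie in $\ll 0,y-1 \rr$ with the last entry in $x\Z$. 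Your route — treating the same mod-$y$ digit extraction as an existence argument via induction on $n$, with Euclidean division by $y$ as the base case — is more elementary and self-contained, and it yields the slightly stronger statement for all $v \in \Z$, which, as you note, is what makes the induction close since $v'$ can be negative even for $v \geq 0$ (the paper's statement for $v\ge 0$ suffices for its later use, since arcmonic values are nonnegative). What the paper's heavier construction buys is not this theorem alone: the Engel machine, the stabilization map $\phi$, and nonnegative configurations are reused afterwards (the transducer computing $\phi$ in Lemma~\ref{lem.language}, the monotonicity of $c[v+kF](u_{n+1})$ in Lemma~\ref{cor.monotony}, and the $O(n)$ computation of stable decompositions by successive firings), so the existence proof doubles as the setup for the recognition and algorithmic results, whereas your argument would still leave that machinery to be built separately.
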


Note that $d_{n+1} = \frac{y^{n+1}}{x}$. A special case is the case  $x=1$ where if $v < y^{n+1}$, the stable decomposition of $v$ coincides with the decomposition of $v$ in base $y$ up to the $n$-th element. 

\begin{proof}
    We establish the uniqueness of this stable decomposition. The existence  relies on the lemmas presented subsequently.

    Suppose that $v$ admits two stable decompositions $c^1 = (c^1_0, \dots, c^1_{n+1})$ and $c^2 = (c^2_0, \dots, c^2_{n+1})$. Recall that, for $i \in \{1,2\}$, $c^i_{n+1} \in x\Z$. Then: \[\sum_{k=0}^n c^1_k d_k + c^1_{n+1} d_{n+1} = \sum_{k=0}^n c^2_k d_k + c^2_{n+1} d_{n+1} \mod y\]
    which amounts to 
    \[c^1_0 d_0 = c^2_0 d_0 \mod y.\]
    Since $d_0=x^n$ and $y$ are coprime, and $0 \leq c^1_0, c^2_0 \leq y-1$, we obtain $c^1_0 = c^2_0$. Now, consider $v' = \frac{v - c^1_0 x^n}{y}$, then, for $i \in \{1,2\}$,
    \[v' = \sum_{k=0}^{n-1} c^i_{k+1} x^{n-1-k}y^k + c^i_{n+1} \frac{y^n}{x}  \]
    and one can apply the same reasoning iteratively on $v'$ to show that $c_1^1 = c_1^2$, $c_2^1 = c_2^2$, etc. And finally that $c^1 = c^2$.
    \qed
\end{proof}

 To prove the existence of the stable decomposition, we rely on another device named \textbf{Engel Machine}~\cite{engel1975probabilistic}.

The Engel Machine $E^{x,y}_n$ is the Multigraph defined
on the set $\{u_0,u_1,\allowbreak\cdots,\allowbreak u_n\}\allowbreak \cup \{u_{n+1},s\}$, where every vertex $u_i$ for $i \in \ll 0,n\rr$ has $x$ arcs going to $u_{i+1}$ and $y-x$ arcs going to $s$. Since we assumed $y > x$, then $y-x>0$. Vertices $s$ and $u_{n+1}$ are sinks.
We say that a particle  configuration $\sigma$ in $E^{x,y}_n$ is \textbf{nonnegative} if $\sigma(u_i) \geq 0$ for $i \in \ll 0, n \rr$ (whereas sinks may have a negative value). See Fig. \ref{fig:ex_engel} for an example.

\begin{figure}[htbp]
     \centering
\begin{tikzpicture}[>=stealth, auto, node distance=2cm, thick]

% Style des sommets
\tikzset{
  state/.style={circle, draw, inner sep=0pt, minimum size=8mm}
}

% Sommets
\node[state] (01) {$u_0$};
\node[state] (1) [right of=01] {$u_1$};
\node[state] (2) [right of=1] {$u_2$};
\node[state] (3) [right of=2] {$...$};
\node[state] (4) [right of=3] {$u_n$};
\node[state] (51) [right of=4] {$u_{n+1}$};
\node[state] (s) [below of=2] {$s$};

\node[shape=circle,draw=black] (s0) at (s) {$~~~~~~$};
\node[shape=circle,draw=black] (5) at (51) {$~~~~~~$};

%% Arcs sortants

\path[->]
  (01) edge[bend left=30] (1)
  (01) edge[bend left=50] (1)
  (01) edge[bend right=60] (s)
  %(01) edge[bend right=30] (s)
  %(01) edge[bend right=50] (s)
  
  (1) edge[bend right=20] node[left] {$y-x$ arcs}  (s)
  %(1) edge[bend left=40] (s)
  %(1) edge[bend left=60] (s)
  (1) edge[bend left=30] (2)
  (1) edge[bend left=50] (2)

  (2) edge[bend left=20] (s)
  %(2) edge[bend left=40] (s)
  %(2) edge[bend left=60] (s)
  (2) edge[bend left=30] (3)
  (2) edge[bend left=50] node[above] {$x$ arcs} (3)

  (3) edge[bend left=20] (s)
  %(3) edge[bend left=40] (s)
  %(3) edge[bend left=60] (s)
  (3) edge[bend left=30] (4)
  (3) edge[bend left=50] (4)

  (4) edge[bend left=20] node[right] {$y-x$ arcs}  (s)
  %(4) edge[bend left=40] (s)
  %(4) edge[bend left=60] (s)
  (4) edge[bend left=30] (5)
  (4) edge[bend left=50] (5);

\end{tikzpicture}
\caption{\footnotesize The Engel Machine $E^{2,3}_n$.}
\label{fig:ex_engel} 
\end{figure}
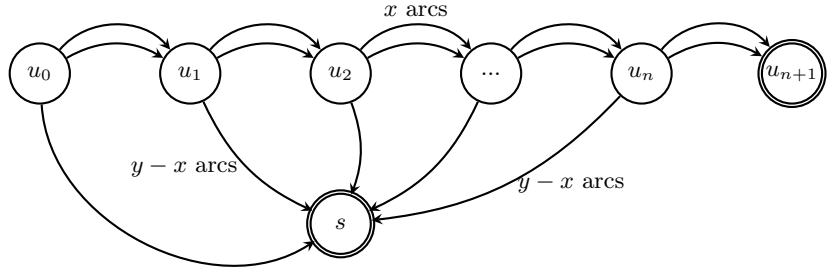

We define a function $h_E$ on the vertices of this graph that will turn out to be harmonic on $E_n^{x,y}$. This function is defined by
\[h_E(s)=0 \text{ and } h_E(u_k) = d_k \text{ for k in } \ll 0,n+1\rr\]
and extend it to particle configurations by linearity.

We shall be mainly concerned with the $h_E$ value of particle configurations in the Engel Machine. In order to keep notation simple, and since $h_E(s)=0$, the value of configurations on $s$ never matters and we identify particle configurations in $c \in  \Sigma(E^{x,y}_n)$ with words $c \in \Z^{n+2}$. In particular, for any $v \geq 0$, the notation $c[v]$ denotes the word corresponding to the stable decomposition of $v$, as well as a (stable) particle configuration (we can suppose that its value on $s$ is always $0$). Note that $h_E(c[v]) = v$ by construction.
Conversely, remark that any nonnegative \textit{stable} configuration $c$ with $h_E(c)=v$ gives the unique \textit{stable} decomposition of $v$.

\begin{lemma} 
    \label{lem.hE_harmonic}
    The function $h_E$ is harmonic on $E^{x,y}_n$.
\end{lemma}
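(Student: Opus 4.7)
The plan is to prove $h_E$ is harmonic by direct computation at each non-sink vertex; the claim reduces to a one-line verification once $\Delta(u_i)$ is expanded, so no real obstacle is expected.

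First I would recall that $V_0(E^{x,y}_n) = \{u_0, u_1, \dots, u_n\}$, so harmonicity means $h_E(\Delta(u_i)) = 0$ for every $i \in \ll 0, n \rr$. From the construction of $E^{x,y}_n$, each $u_i$ has exactly $x$ outgoing arcs to $u_{i+1}$ and $y-x$ outgoing arcs to $s$, so by definition of the Laplacian:
\[
\Delta(u_i) = x\bigl(u_{i+1} - u_i\bigr) + (y - x)\bigl(s - u_i\bigr).
\]

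Next I would apply $h_E$ to this, using linearity and $h_E(s)=0$, $h_E(u_k) = d_k$:
\[
h_E(\Delta(u_i)) = x\bigl(d_{i+1} - d_i\bigr) + (y-x)\bigl(0 - d_i\bigr) = x\, d_{i+1} - y\, d_i.
\]

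Finally I would substitute $d_k = x^{n-k} y^k$ to conclude:
\[
x\, d_{i+1} - y\, d_i = x \cdot x^{n-i-1} y^{i+1} - y \cdot x^{n-i} y^i = x^{n-i} y^{i+1} - x^{n-i} y^{i+1} = 0,
\]
which holds for every $i \in \ll 0, n\rr$ and proves that $h_E$ is harmonic. The identity $x\, d_{i+1} = y\, d_i$ is precisely the balance condition forced by the recursion $d_{k+1}/d_k = y/x$, which is exactly what the Engel Machine was engineered to encode, so the verification is immediate.
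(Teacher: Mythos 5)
Your proof is correct and follows essentially the same route as the paper: the paper computes the effect of firing $u_k$ on $h_E$, obtaining $-y\,d_k + x\,d_{k+1} = 0$, which is exactly your identity $h_E(\Delta(u_i)) = x\,d_{i+1} - y\,d_i = 0$ written via the Laplacian. Your version just spells out the expansion of $\Delta(u_i)$ and the substitution $d_k = x^{n-k}y^k$ in more detail (and it also covers $i=n$ correctly, where $d_{n+1}=y^{n+1}/x$).
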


\begin{proof}
Consider the particle configuration $c'$ obtained from $c$ by firing vertex $u_k$, $k \in \ll 0, n\rr$. Then:
\[ h_E(c')-h_E(c) = - y d_k + x d_{k+1} =0 .\]
\qed
\end{proof}

In order to compute a stable decomposition for $v$, one simply has to find any configuration $c$ with $h_E(c)=v$ and then stabilize $c$. The proof of the next lemma provides a method for computing such a configuration $c$.  Together with Lemma \ref{lem.hE_harmonic}, this completes the proof of Theorem~\ref{thm.decomposition_unique}.

\begin{lemma} For any $v \geq 0$, there exists a nonnegative configuration $c$ in $E^{x,y}_n$ with $h_E(c)=v$.
\end{lemma}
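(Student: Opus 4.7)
The plan is to exhibit an explicit nonnegative configuration supported on just two vertices: the nonsink $u_0$ and the sink $u_{n+1}$. The reason to pick $u_{n+1}$ as the second vertex is that, being a sink, its count may be any integer (including negative ones), which provides exactly the flexibility needed to hit small values of $v$ that lie outside the numerical semigroup generated by $d_0, \dots, d_n$ alone.

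\textbf{Construction.} I would define $c$ by $c(u_0) = a$, $c(u_{n+1}) = x b$, and $c$ equal to zero elsewhere, for integers $a, b$ to be chosen. By linearity of $h_E$ and the definitions $d_0 = x^n$, $d_{n+1} = y^{n+1}/x$, this gives
\[
h_E(c) \;=\; a\, d_0 + xb\, d_{n+1} \;=\; a x^n + b y^{n+1}.
\]
Thus the problem reduces to finding $a \in \Z_{\geq 0}$ and $b \in \Z$ solving the Bezout-type equation $a x^n + b y^{n+1} = v$.

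\textbf{Choice of $a$ and $b$.} Since $\gcd(x,y)=1$, we have $\gcd(x^n, y^{n+1}) = 1$, so $x^n$ is invertible modulo $y^{n+1}$. Let $a \in \ll 0, y^{n+1}-1\rr$ be the unique representative of $v \cdot (x^n)^{-1}$ modulo $y^{n+1}$; then $y^{n+1}$ divides $v - a x^n$, and I set $b = (v - a x^n)/y^{n+1} \in \Z$. With this choice $a \geq 0$ by construction (so $c(u_i) \geq 0$ for every $i \in \ll 0, n\rr$, trivially since only $c(u_0)$ is nonzero among these), while $c(u_{n+1}) = xb$ may be any integer, which is allowed at a sink. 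Plugging in yields $h_E(c) = v$ as required.

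\textbf{Obstacles.} There is essentially no hard step: the only point to watch is the fractional weight $d_{n+1} = y^{n+1}/x$, which forces the placement at $u_{n+1}$ to be a multiple of $x$ in order for $h_E(c)$ to remain an integer; choosing $c(u_{n+1}) = xb$ takes care of this automatically. The construction is also explicit and computable, which matches the remark in the paper that the proof provides a method for producing $c$; one can then stabilize this $c$ in the Engel machine (using Lemma on harmonicity of $h_E$) to obtain the stable decomposition and complete the existence half of Theorem~\ref{thm.decomposition_unique}.
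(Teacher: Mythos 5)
Your proposal is correct and essentially matches the paper's own argument: both place all mass on $u_0$ and the sink $u_{n+1}$ (a multiple of $x$ there, to absorb the fractional weight $d_{n+1}=y^{n+1}/x$) and use coprimality of a power of $x$ with $y^{n+1}$ to solve $a\,x^{n}+b\,y^{n+1}=v$ with $a\geq 0$. The paper phrases this via a Bezout identity $\alpha x^{n+1}+\beta y^{n+1}=1$ scaled by $xv$, while you take a modular-inverse representative, which is the same idea.
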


\begin{proof}
    Since $x^{n+1}$ and $y^{n+1}$ are coprime, by Bezout's theorem there are integers $\alpha, \beta$ such that
    \[\alpha x^{n+1} + \beta y^{n+1} = 1\]
    and we can choose $\alpha \geq 0$.
    It follows that
    \[(\alpha x v) x^{n+1} + (\beta x v) y^{n+1} = x v\]
    and
    \[(\alpha x v) d_0 + (\beta x v) d_{n+1} = v. \]
    \qed
\end{proof}

\subsection{Recognizing decompositions of arcmonic values}

In this subsection, we characterize stable decompositions corresponding to an arcmonic value.

\begin{theorem}
    \label{thm.v_decomposition}
    For any $v \in \Z$, we have $v \in g(\cR)$ if and only if the regular expression 

    \[ e_d = \ll 0, y-1\rr^* \cdot 0 \cdot \ll 1, x \rr^* \cdot  0 \]
    
    matches $c[v]$. 
    %In particular, $g(\cR) \subseteq N_0$.
\end{theorem}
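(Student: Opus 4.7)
My plan is to exhibit a bijection between $\cR_{ac}$, the set of acyclic rotor configurations, and the set of stable decompositions matching $e_d$, implemented by $\rho \mapsto c[g(\rho)]$. Since $g$ is invariant on rotor equivalence classes (Corollary~\ref{cor.rho_value}) and every class contains a unique acyclic representative (Proposition~\ref{prop.acyclic}), one has $g(\cR) = g(\cR_{ac})$, so such a bijection gives $v \in g(\cR) \Leftrightarrow c[v]$ matches $e_d$. Both sides have size $F$: acyclicity on $P^{x,y}_n$ forces right-going arcs to form a suffix, so an acyclic $\rho$ is parametrised by a transition index $k^* \in \ll 0, n\rr$ plus choices $a_\ell \in \ll 1, y\rr$ for $\ell \in \ll 0, k^*-1\rr$ (left arc at $u_{\ell+1}$, $g$-contribution $a_\ell d_\ell$ by Proposition~\ref{prop.arcmonic_expr}) and $b_\ell \in \ll 0, x-1\rr$ for $\ell \in \ll k^*+1, n\rr$ (right arc at $u_\ell$, contribution $b_\ell d_\ell$), totalling $\sum_{k^*=0}^n y^{k^*} x^{n-k^*} = F$; matching decompositions, split on the position of the middle $0$, total $F$ as well.

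The bridge is the Engel machine: to $\rho \in \cR_{ac}$ I associate the raw particle configuration $R(\rho) = (a_0, \ldots, a_{k^*-1}, 0, b_{k^*+1}, \ldots, b_n, 0) \in \Sigma(E^{x,y}_n)$, so that by harmonicity of $h_E$ (Lemma~\ref{lem.hE_harmonic}) one has $h_E(R(\rho)) = g(\rho)$; hence $c[g(\rho)]$ is the stabilisation of $R(\rho)$. Because raw values are at most $y$ and each incoming carry adds only $x < y$, every vertex fires at most once; letting $e_\ell \in \{0,1\}$ record firings, positions past $k^*$ never fire (starting at $b_\ell \leq x-1$ with at most one extra $x$), so $c_\ell = b_\ell$ for $\ell > k^*$, $c_{k^*} = x e_{k^*-1} \in \{0, x\}$, and $c_\ell = a_\ell + x e_{\ell-1} - y e_\ell$ for $\ell < k^*$. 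Let $\ell_0 = \max\{\ell \leq n : c_\ell = 0\}$; this exists in every case (either some right-part $b_\ell = 0$, or $c_{k^*} = 0$, or else the carry chain ending at $k^*-1$ must start at some $\ell'$ where $a_{\ell'} = y$ and $e_{\ell'-1} = 0$, forcing $c_{\ell'} = 0$). A short induction using the maximality of $\ell_0$ shows that when $\ell_0 < k^*$, one has $e_\ell = 1$ throughout $\ll \ell_0, k^*-1\rr$, whence $c_\ell = a_\ell + x - y \in \ll 1, x\rr$ on $\ll \ell_0+1, k^*-1\rr$ and $c_{k^*} = x$; past $k^*$, $c_\ell = b_\ell \in \ll 1, x-1\rr$ by maximality of $\ell_0$. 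With $c_{n+1} = 0$, this gives a valid parse of $e_d$.

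For the reverse direction and injectivity, I reconstruct $R(\rho)$ from any matching $c$. Set $e_{-1} = 0$ and iteratively $e_\ell = 1$ iff $c_\ell \leq x e_{\ell-1}$, then $a_\ell = c_\ell + y e_\ell - x e_{\ell-1}$; each of the four subcases of $(e_{\ell-1}, e_\ell)$ yields $a_\ell \in \ll 1, y\rr$. The crucial claim is the existence of a unique $k^* \in \ll 0, n\rr$ with $c_{k^*} = x e_{k^*-1}$ and $c_\ell \leq x - 1$ for all $\ell > k^*$: if $k_1 < k_2$ both satisfied, the second condition at $k_1$ would give $c_\ell \leq x - 1 < x = x e_{k_1}$ throughout $\ll k_1+1, k_2\rr$, propagating $e_\ell = 1$ up to $\ell = k_2 - 1$ and forcing $c_{k_2} = x$, contradicting $c_{k_2} \leq x-1$. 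Setting $b_\ell = c_\ell$ for $\ell > k^*$ then yields a raw configuration stabilising to $c$, hence an acyclic $\rho$ with $g(\rho) = h_E(c) = v$; uniqueness of $k^*$ also implies injectivity of $\rho \mapsto c[g(\rho)]$, and combined with the cardinality count this delivers the required bijection. The main technical effort is the carry-chain induction in the forward direction.
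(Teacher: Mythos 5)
Your proof is correct in substance, but it takes a genuinely different route from the paper at the key step. The shared skeleton is identical: restrict to acyclic configurations (Proposition~\ref{prop.acyclic} and Corollary~\ref{cor.rho_value}), encode an acyclic $\rho$ as an Engel-machine word of value $g(\rho)$ — your $R(\rho)$ is exactly the paper's $\psi(\rho)\in L_a$ from Lemma~\ref{lem.decompose_tree} — and use harmonicity of $h_E$ together with uniqueness of the stable decomposition (Theorem~\ref{thm.decomposition_unique}) to identify $c[g(\rho)]$ with the stabilization of that word. Where the paper then shows that stabilization maps $L_a^n$ bijectively onto $L_d^n$ by building a sequential transducer for stabilization, taking its product with the automaton recognizing $L_a$, and determinizing to recover the minimal automaton of $L_d$ (Lemma~\ref{lem.language}), you give a direct combinatorial analysis of the carry process: each vertex fires at most once, the firing indicators obey an explicit recursion, the stabilized word is matched by $e_d$ with middle zero at the last zero $\ell_0$, injectivity follows from your uniqueness-of-$k^*$ argument, and surjectivity from the count $|L_d^n|=F=$ number of acyclic configurations (a count the paper's bijectivity argument also rests on). Your route is more elementary and makes the carry structure and an explicit inverse visible; the paper's transducer machinery is heavier but gets reused later (in Lemma~\ref{lem.monotony}) to prove $\phi(L_1)\cap L_d=\emptyset$, which your method would have to redo by a separate hand analysis.

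Two points should be tightened. First, the parenthetical reason that vertices past $k^*$ never fire (``$b_\ell\le x-1$ plus at most one extra $x$'') is not sufficient when $y\le 2x-1$ (e.g. $x=2$, $y=3$, $b_\ell=1$ would fire upon receiving a carry); the correct reason, which you implicitly use when writing $c_{k^*}=x e_{k^*-1}$, is that $u_{k^*}$ itself never fires (it holds at most $x<y$ chips), so no carry ever enters the right part at all. Second, in your ``crucial claim'' you prove only uniqueness of $k^*$, not its existence for an arbitrary word of $L_d^n$, and existence is not immediate: for $c=(1,0,0,1,0)$ in $P^{2,3}_3$ the last zero fails condition (a) and one must take $k^*=1$. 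This would be a genuine gap if the reverse direction relied on the explicit reconstruction alone, but it is harmless here because injectivity plus the cardinality count already gives surjectivity; for injectivity itself, state explicitly that for an acyclic $\rho$ the actual firing indicators agree with your recursion for $\ell<k^*$ (using $R_\ell\ge 1$ there), so that $k^*(\rho)$ satisfies conditions (a) and (b) and your uniqueness argument applies.
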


The proof is split in several lemmas. We define the regular expressions
\[ e_a = \ll 1,y \rr^* \cdot 0 \cdot \ll 0,x-1 \rr^*\cdot 0,\]
and 
    \[ e_d = \ll 0, y-1\rr^* \cdot 0 \cdot \ll 1, x \rr^* \cdot  0 .\]
Let $L_a$ and $L_d$ be the languages described by $e_a$ and $e_d$ respectively, and let $L_a^n$ and $L_d^n$ be the subsets of words of length $n+2$.

\begin{lemma}
   \label{lem.decompose_tree}
   There is a bijective function $\psi$  between the set of acyclic rotor
   configurations of $P^{x,y}_n$, and $L^n_a$, such that
   for any acyclic rotor configuration $\rho$, we have
   \[g(\rho) = h_E( \psi(\rho) ).\]
\end{lemma}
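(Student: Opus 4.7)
The plan is as follows. My first step would be to pin down the structure of acyclic rotor configurations on $P^{x,y}_n$. Since every interior vertex has outdegree one in $G(\rho)$, any consecutive pair $(u_k,u_{k+1})$ with $u_k$ pointing right and $u_{k+1}$ pointing left immediately produces a directed $2$-cycle on $\{u_k,u_{k+1}\}$. Hence $\rho$ is acyclic if and only if there exists a \emph{split index} $k^*\in\ll 0,n\rr$ such that $u_1,\ldots,u_{k^*}$ all point left and $u_{k^*+1},\ldots,u_n$ all point right (the remaining direction then yields two disjoint chains flowing respectively into $u_0$ and $u_{n+1}$). An acyclic $\rho$ is therefore parametrized by $k^*$, plus a choice among the $y$ left-arcs $\{a^k_x,\ldots,a^k_{x+y-1}\}$ for each $k\leq k^*$ and a choice among the $x$ right-arcs $\{a^k_0,\ldots,a^k_{x-1}\}$ for each $k>k^*$. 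Counting yields $\sum_{k^*=0}^{n} y^{k^*} x^{n-k^*}=F$ acyclic configurations, matching the number of length-$(n+2)$ words in $L^n_a$.

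Next I would define $\psi(\rho)=(c_0,\ldots,c_{n+1})$ by: $c_{k^*}=c_{n+1}=0$; for each $k\in\ll 1,k^*\rr$ with $\rho(u_k)=a^k_{j_k}$, $j_k\in\ll x,x+y-1\rr$, set $c_{k-1}:=x+y-j_k\in\ll 1,y\rr$ (the value $y$ occurring exactly when $j_k=x$, the boundary arc); and for $k\in\ll k^*+1,n\rr$ with $j_k\in\ll 0,x-1\rr$, set $c_k:=j_k$. By construction $\psi(\rho)\in L^n_a$. The inverse reads $k^*$ as the leftmost zero of the word (the prefix $\ll 1,y\rr^*$ contains no zero) and recovers each arc from its digit, with the convention that $c_{k-1}=y$ at position $k-1<k^*$ encodes $a^k_x$; bijectivity follows.

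The last and only computational step is the equality $g(\rho)=h_E(\psi(\rho))$. Using Proposition~\ref{prop.arcmonic_expr}, each right-pointing arc contributes $g(\rho(u_k))=j_k d_k=c_k d_k$; each left-pointing arc with $j_k>x$ contributes $(x+y-j_k)d_{k-1}=c_{k-1}d_{k-1}$; and the boundary arc $a^k_x$ contributes $x d_k=y d_{k-1}=c_{k-1}d_{k-1}$, by the identity $x d_k=y d_{k-1}$ that makes the two cases of Proposition~\ref{prop.arcmonic_expr} consistent at $j=x$. Summing over $k$ and using the zero entries $c_{k^*}=c_{n+1}=0$ gives
\[ g(\rho)=\sum_{k=1}^{k^*} c_{k-1} d_{k-1} + \sum_{k=k^*+1}^{n} c_k d_k = \sum_{i=0}^{n+1} c_i d_i = h_E(\psi(\rho)). \]
The only subtlety in the argument is the dual role of the arc $a^k_x$, whose arcmonic value admits two presentations; it is resolved cleanly by the convention of encoding this arc as the digit $y$ at position $k-1$, after which everything is direct from the definitions.
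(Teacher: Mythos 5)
Your proof is correct and takes essentially the same route as the paper's: both parametrize acyclic configurations by a split index (left-pointing prefix, right-pointing suffix) together with the arc choices, read off the arcmonic coefficients as the digits of a word in $L_a^n$ with the first zero marking the split and the boundary arc $a^k_x$ encoded as the digit $y$ at position $k-1$ (using $x d_k = y d_{k-1}$), and invert by locating the first zero. You simply spell out two points the paper leaves implicit, namely why acyclicity forces the prefix/suffix structure and the count $F$ on both sides.
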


\begin{proof}
Let $\rho$ be an acyclic configuration of $P^{x,y}_n$. For such a configuration, there is some $k \in \ll 1,n \rr$ such that
\begin{itemize}
    \item for $i < k$, $\rho(u_i)=a^i_j$ with $j \in \ll x,x+y-1 \rr$ and
    $g(a^i_j) = c_{i-1} d_{i-1}$ with $c_{i-1} \in \ll 1, y \rr$,
    \item  for $i \geq k$, $\rho(u_i)=a^i_j$ with $j \in \ll 0,x-1 \rr$ and
    $g(a^i_j) = c_i d_i$ with $c_i \in \ll 0, x-1 \rr$.
\end{itemize}
If we define $c_{k-1}=0$ and $c_{n+1}=0$, the configuration $c=(c_0,c_1,\dots,c_n,c_{n+1})$
satisfies $h_E(c) = g(\rho)$ and is matched by $e_a$; we define $\psi(\rho) = c$.

Conversely, for any configuration $c$ matched by $e_a$ it is easy to see that
there is a unique acyclic configuration $\rho$ with $\psi(\rho)=c$; if $k-1$ is the position of first $0$ of $c$,
we can construct $\rho$ as above.
\qed
\end{proof}

\begin{lemma}
\label{lem.language}
If $c \in L_a$, let $\phi(c)$ be the stable decomposition of $c$. Then $\phi$ defines a bijective map between $L_a^n$ and $L_d^n$ that preserves $h_E$.
%We define function $\phi: \ll 0,y \rr^* \cdot 0 \rightarrow \ll0,y-1 \rr^*\cdot \{0,x\}$ such that $\phi(c)$ is the stable decomposition of $c\in \ll 0,y \rr^* \cdot 0$. Then $\phi$ is a bijective map between $L_a^n$ and $L_e^n$ that preserves $h_E$.
\end{lemma}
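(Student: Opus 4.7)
The plan is to define $\phi(c)$ via stabilization in the Engel Machine $E^{x,y}_n$: repeatedly fire any non-sink vertex $u_i$ (with $i \in \ll 0, n \rr$) whose current value is $\geq y$, until no such vertex remains. Since $E^{x,y}_n$ is stopping, the analogue of Proposition~\ref{prop:stablechip} makes $\phi(c)$ well-defined, and Lemma~\ref{lem.hE_harmonic} guarantees $h_E(\phi(c)) = h_E(c)$.

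To show $\phi(c) \in L_d^n$, I will follow a specific left-to-right firing order. Writing $c \in L_a^n$ with middle $0$ at some position $k$ (so $c_i \in \ll 1, y\rr$ for $i < k$, $c_k = 0$, $c_i \in \ll 0, x-1\rr$ for $k < i \leq n$), I fire $u_0, u_1, \ldots$ in sequence, firing $u_i$ whenever its current value is $\geq y$. A direct computation on the reachable ranges shows each such $u_i$ fires at most once and only vertices in $\{u_0, \ldots, u_{k-1}\}$ ever fire; in particular $u_n$ is never fired, whence $\phi(c)_{n+1} = 0$. I then define $k'$ to be the largest index in $\ll 0, n \rr$ with $\phi(c)_{k'} = 0$ (such $k'$ exists: either $\phi(c)_k = 0$ when $u_{k-1}$ does not fire, or the firing cascade in the left part produces a $0$ further left). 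A case analysis on the firing pattern $(f_0, \ldots, f_{k-1}) \in \{0,1\}^k$ shows that $\phi(c)_i \in \ll 1, x\rr$ for every $i \in \ll k'+1, n\rr$: for $i > k$ one has $\phi(c)_i = c_i \in \ll 1, x-1\rr$ (nonzero by maximality of $k'$), for $i = k$ with $f_{k-1}=1$ one has $\phi(c)_k = x$, and for $i < k$ the firing-pattern constraints force $(f_{i-1},f_i) = (1,1)$, giving $\phi(c)_i \in \ll 1, x \rr$. Combined with the stability bound $\phi(c)_i \in \ll 0, y-1\rr$ for $i < k'$, this means $\phi(c)$ matches the regular expression $e_d$.

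For bijectivity, a direct enumeration yields $|L_a^n| = |L_d^n| = \sum_{j=0}^n x^{n-j} y^j = F$ (in both cases, classify by the length $j$ of the prefix before the middle $0$), so it suffices to prove $\phi$ injective. Since $\phi$ preserves $h_E$, this reduces to injectivity of $h_E$ on $L_a^n$. Given $c, c' \in L_a^n$ with $h_E(c) = h_E(c')$, setting $e_i = c_i - c'_i$, the identity $\sum_{i=0}^n e_i d_i = 0$ together with $\gcd(x,y) = 1$ yields $y \mid e_0$ after reduction modulo $y$. The constraints $c_0, c'_0 \in \{0\} \cup \ll 1, y\rr$ force $e_0 \in \{-y, 0, y\}$, and range arguments on the other coordinates rule out $|e_0| = y$. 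Dividing through by $y$ reduces the relation to the same statement with $n$ replaced by $n-1$, and induction closes the argument.

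The main obstacle is the final injectivity step: when $c$ and $c'$ have their middle $0$ at different positions, the allowed range for a given index depends on the configuration, and ruling out $|e_0| = y$ (and the analogous cases at later positions) requires a careful combination of the mod-$y$ congruences with global range and degree bounds.
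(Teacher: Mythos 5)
Your construction of $\phi$ by stabilization, the use of harmonicity of $h_E$ to get preservation of $h_E$, the single left-to-right firing sweep (each $u_i$ fires at most once, no vertex at or beyond the first zero of $c$ fires, so $\phi(c)(u_{n+1})=0$), and the case analysis giving the inclusion $\phi(L_a^n)\subseteq L_d^n$ are all sound; the paper obtains this part differently, by composing a stabilizing transducer with the automaton for $L_a$ and checking that the resulting automaton is exactly the (minimal) automaton of $L_d$, which yields the stronger statement $\phi(L_a^n)=L_d^n$. Your count $|L_a^n|=|L_d^n|=F$ is also correct and is used by the paper as well. The difference is decisive, though: the paper gets bijectivity from \emph{surjectivity} plus equal cardinality, whereas you only prove an inclusion, so you must additionally prove injectivity --- and that is precisely the step you leave open.

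This is a genuine gap. Reducing injectivity of $\phi$ to injectivity of $h_E$ on $L_a^n$ is fine, but the claim that $h_E$ is injective on $L_a^n$ is essentially as strong as the lemma itself (in the paper it is a \emph{consequence} of this lemma together with Theorem~\ref{thm.decomposition_unique}, via Proposition~\ref{cor.g_unique}, so it cannot be quoted without circularity), and your sketch does not establish it. The critical case $|e_0|=y$, i.e.\ two words whose middle zeros sit at different positions, is exactly where the $L_a$-structure must do real work: without it $h_E$ is not injective, since $y\,d_k=x\,d_{k+1}$ makes $(y,0,\dots)$ and $(0,x,\dots)$ collide; and crude range or degree bounds do not settle it (for $x=2$, $y=3$, $n=2$, the word $(0,1,1,0)\in L_a^2$ has $h_E=15$ while $(3,0,0,0)\in L_a^2$ has $h_E=12$, so a word starting with $0$ can dominate one starting with $y$). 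Ruling this case out requires propagating congruences modulo $y^2,y^3,\dots$ together with the position-dependent range constraints, i.e.\ redoing in algebraic form the carry analysis that the paper's transducer encodes. Moreover the induction does not close as stated: when $c_0=0$ the truncated word $(c_1,\dots,c_{n+1})$ need not lie in $L_a^{n-1}$, so the inductive statement has to be reformulated for a larger class of words. A cleaner way to finish within your framework would be to prove surjectivity instead of injectivity, by constructing for each word of $L_d^n$ an explicit preimage in $L_a^n$ (reversing the carries), after which equal cardinality gives bijectivity --- this is in effect what the paper's automaton computation certifies.
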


\begin{proof}
By definition, if $c \in L^n_a$, then $\phi(c)=c^\circ$ is the stabilization of $c$.
By Lemma~\ref{lem.hE_harmonic}, $h_E$ is harmonic, hence $h_E(c) =   h_E(\phi(c))$.

We introduce a sequential transducer $T$, depicted on Fig. \ref{fig.transducer}, which computes the stabilization of certain configurations. The notation $\ll a , b \rr | \ll a + k , b + k\rr$ represents the substitution of any integer $i$ in $\ll a , b \rr$ by the integer $i+k$. This transducer takes as input any word in $\ll0,y\rr^*$ and produces a word in $\ll0,y-1\rr^*$ with the same length. In particular, when given a nonnegative configuration $c$ of $E_n^{x,y}$, satisfying $c(u) \leq y$ for all $u \in V_0$ and $c(u_{n+1}) = 0$, the transducer outputs stabilized configuration $\phi(c)$ (recall
that we do not record what happens on sink $s$), stabilizing $c$ from vertex $u_0$ to $u_n$ in ascending order. Hence it computes $\phi$ for configurations in $\ll 0,y \rr^* \cdot 0$.

\begin{figure}[htbp]
\begin{center}
\begin{tikzpicture}[shorten >=1pt, initial text=,node distance=2cm, on grid, auto, node distance = 4cm]

    \node[state,initial,accepting]    (a)               {$a$};
    \node[state,accepting]            (b)  [right=of a] {$b$};

    \path[->]
    (a) edge [loop above] node {$\ll 0,y-1\rr | \ll0,y-1\rr$} (a)
    (a) edge [bend left] node {$y | 0$} (b)
    (b) edge [loop right] node {$\ll y-x,y\rr | \ll 0,x\rr$} (b)
    (b) edge [bend left] node {$\ll 0,y-x-1\rr | \ll x,y-1\rr$} (a)
    ;
\end{tikzpicture}
\end{center}
\caption{Transducer that computes $\phi(c)$, the stabilization of a particle configuration $c$, for any $c \in \ll 0,y \rr^* \cdot 0$. Notation $\ll a , b \rr | \ll a + k , b + k\rr$ stands for the substitution of any integer $i$ in $\ll a , b \rr$ by $i+k$.}
\label{fig.transducer}
\end{figure}
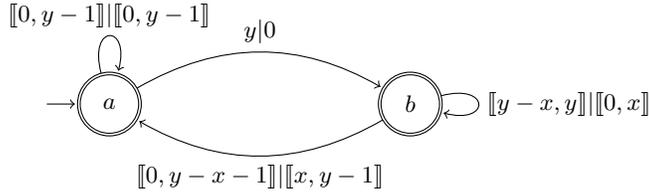

Consider now the automaton $A_a$ depicted on Fig. \ref{fig.automateA}, which recognizes the language $L_a$. 

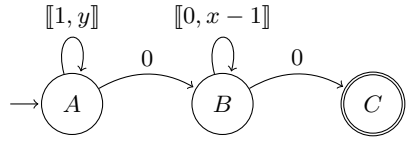
\begin{figure}[htbp]
\begin{center}
\begin{tikzpicture}[shorten >=1pt, initial text=,node distance=2cm, on grid, auto]

    \node[state,initial]    (a)               {$A$};
    \node[state]            (b)  [right=of a] {$B$};
    \node[state,accepting]  (c)  [right=of b] {$C$};

    \path[->]
    (a) edge [loop above] node {$\ll 1,y\rr$} (a)
    (a) edge [bend left] node {$0$} (b)
    (b) edge [loop above] node {$\ll 0,x-1\rr $} (b)
    (b) edge [bend left] node {$0$} (c)
    ;
\end{tikzpicture}
\end{center}
\caption{Automaton $A_a$ recognizing $L_a$}
\label{fig.automateA}
\end{figure}

From $T$ and $A_a$, we build the transducer $T(A_a)$ depicted on Fig. \ref{fig.transducercompose} which is the product of $T$ and $A$. Given a configuration $c$ in $\ll0,y\rr^*\cdot \Z$, the product transducer accepts it if and only if $c \in L_a$, and in such case outputs $\phi(c)$.

\begin{figure}[htbp]
\begin{center}
\begin{tikzpicture}[shorten >=1pt, initial text=,node distance=2cm, on grid, auto]

    \node[state,initial]    (Aa)               {$Aa$};
    \node[state]            (Ba)  [node distance = 4cm, right=of Aa] {$Ba$};
    \node[state,accepting]  (Ca)  [right=of Ba] {$Ca$};
    \node[state]            (Ab)  [node distance = 3cm, below =of Aa] {$Ab$};

    \path[->]
    (Aa) edge [loop above] node {$\ll 1,y-1\rr | \ll 1,y-1\rr$} (Aa)
    (Aa) edge  node [below left] {$0 | 0$} (Ba)
    (Aa) edge [bend left] node [left] {$y | 0$} (Ab)
    (Ab) edge [loop below] node {$\ll y-x,y\rr | \ll 0,x\rr$} (Ab)
    (Ab) edge [bend left] node {$\ll 1,y-x-1\rr | \ll x+1,y-1\rr$} (Aa)
    (Ab) edge  node [below right] {$0 | x$} (Ba)
    (Ba) edge [loop above ] node {$\ll 0,x-1\rr | \ll 0,x-1\rr$} (Ba)
    (Ba) edge  node {$0 | 0$} (Ca)
    ;
\end{tikzpicture}
\end{center}
\caption{The product $T(A_a)$ of transducer $T$ and automaton $A_a$}
\label{fig.transducercompose}
\end{figure}
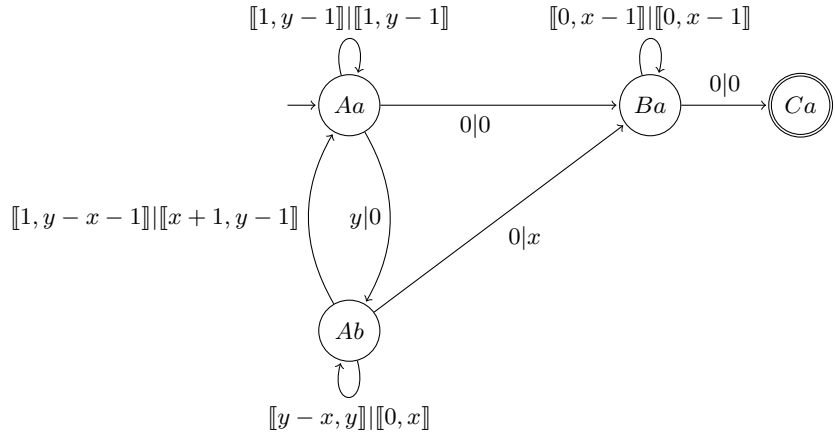

From $T(A_a)$, if we look at the output of every transition as an input, we get an automaton which recognizes exactly $\phi(L_a)$. This automaton is depicted on Fig.~\ref{fig.ndaL_d}, and its determinization on Fig.~\ref{fig.daL_d}

\begin{figure}[htbp]
\begin{center}
\begin{tikzpicture}[shorten >=1pt, initial text=,node distance=2cm, on grid, auto]

    \node[state,initial]    (Aa)               {$Aa$};
    \node[state]            (Ba)  [node distance = 4cm, right=of Aa] {$Ba$};
    \node[state,accepting]  (Ca)  [right=of Ba] {$Ca$};
    \node[state]            (Ab)  [node distance = 3cm, below =of Aa] {$Ab$};

    \path[->]
    (Aa) edge [loop above] node {$\ll 1,y-1\rr$} (Aa)
    (Aa) edge  node [below left] {$0$} (Ba)
    (Aa) edge [bend left] node [left] {$0$} (Ab)
    (Ab) edge [loop below] node {$\ll 0,x\rr$} (Ab)
    (Ab) edge [bend left] node {$\ll x+1,y-1\rr$} (Aa)
    (Ab) edge  node [below right] {$x$} (Ba)
    (Ba) edge [loop above ] node {$\ll 0,x-1\rr$} (Ba)
    (Ba) edge  node {$0$} (Ca)
    ;
\end{tikzpicture}
\end{center}
\caption{The nondeterministic automaton obtained for $\phi(L_a)$}
\label{fig.ndaL_d}
\end{figure}
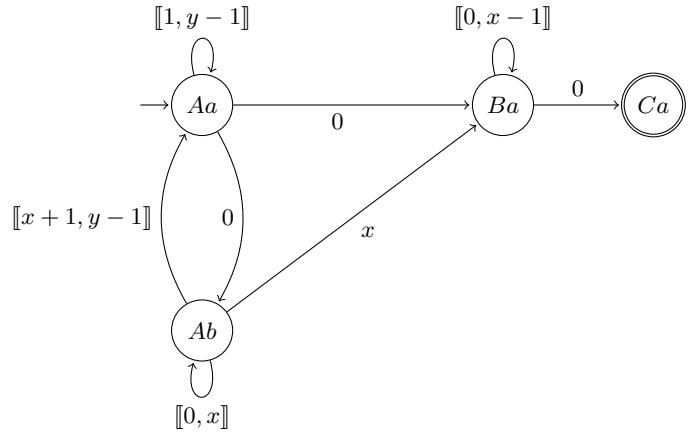

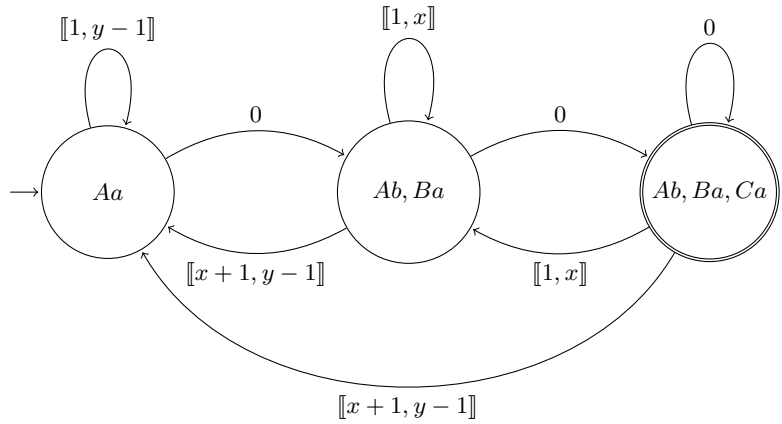
\begin{figure}[htbp]
\begin{center}
\begin{tikzpicture}[shorten >=1pt, initial text=,node distance=2cm, on grid, auto, node distance = 4cm]
%\scalebox{0.7}{

    \node[state,initial]    (a)               {$~~~~~Aa~~~~~$};
    \node[state]            (b)  [right=of a] {$~~~Ab,Ba~~~$};
    \node[state,accepting]  (c)  [right=of b] {$Ab,Ba,Ca$};

    \path[->]
    (a) edge [loop above] node {$\ll 1,y-1\rr$} (a)
    (a) edge [bend left] node {$0$} (b)
    (b) edge [loop above] node {$\ll 1,x\rr $} (b)
    (b) edge [bend left] node {$0$} (c)
    (b) edge [bend left] node {$\ll x+1, y-1 \rr$} (a)
    (c) edge [bend left] node {$\ll 1, x \rr$} (b)
    (c) edge [bend left=60] node {$\ll x+1, y-1 \rr$} (a)
    (c) edge [loop above] node {$0$} (c)

    ;
%}
\end{tikzpicture}
\end{center}
\caption{Determinization of the automaton from Fig. \ref{fig.ndaL_d}. This automaton is minimal. }
\label{fig.daL_d}
\end{figure}

It is now easy to check that the automaton for $\phi(L_a)$ depicted on Fig.~\ref{fig.daL_d} recognizes exactly $L_d$,
since this automaton is minimal.
Moreover, as $\phi$ preserves the length of words, we deduce that $\phi(L_a^n) = L_d^n$.  Additionally both languages $L^n_a$ and $L_d^n$ have the same size, namely $F$. It follows that $\phi$ is a bijective map between $L^n_a$ and $L_d^n$. 
\qed
\end{proof}

\begin{proof}[of Theorem~\ref{thm.v_decomposition}]

The value $v$ belongs to $g(\cR)$ if and only if there is an acyclic
configuration $\rho$ such that $v = h_E(\psi(\rho)) = h_E(\phi(\psi(\rho))) = h_E(c[v])$ , (Lemma~\ref{lem.decompose_tree}),
hence if and only if $c[v] \in L_d^n$ (Lemma~\ref{lem.language}).
\qed
\begin{comment}
Let $v = g(\rho)$ for some $\rho \in \cR$. Since $g$ is invariant on equivalence classes on $\cR$, we can as well consider that $\rho$ is acyclic. Then, according to Lemma~\ref{lem.decompose_tree}, $v = h_E(\psi(\rho))$, and by Lemma~\ref{lem.language}, 

if and only if there exists a decomposition of $v$ where $c_a \in L_a^n$ with $h_E(c_a) = v$. Based on this last lemma, it is also equivalent to the existence of $c_d \in L_d^n$ with $h_E(c_d) = v$.
\end{comment}
\end{proof}

The uniqueness of the stable decomposition  together with the previous result implies:
\begin{proposition}
\label{cor.g_unique}
For $\rho, \rho' \in \cR$, we have $\rho \sim \rho'$ if and only if $g(\rho)=g(\rho')$.
\end{proposition}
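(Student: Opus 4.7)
The forward direction $\rho \sim \rho' \Rightarrow g(\rho) = g(\rho')$ is already covered by Corollary~\ref{cor.rho_value}, so the plan focuses on the converse implication $g(\rho) = g(\rho') \Rightarrow \rho \sim \rho'$.

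My strategy is to reduce to the case where both configurations are acyclic, and then show that two acyclic configurations with the same arcmonic value must be identical. The reduction uses Proposition~\ref{prop.acyclic}: every equivalence class of $\cR$ contains exactly one acyclic configuration. So, given $\rho$ and $\rho'$, let $\rho_a$ and $\rho'_a$ denote their respective acyclic representatives. Since $g$ is constant on equivalence classes, $g(\rho_a)=g(\rho)=g(\rho')=g(\rho'_a)$; it suffices to show $\rho_a = \rho'_a$, which forces $\rho \sim \rho'_a \sim \rho'$.

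For the core injectivity step, I would invoke the chain of structure-preserving bijections established in the previous subsection. By Lemma~\ref{lem.decompose_tree}, the map $\psi$ is a bijection between acyclic rotor configurations of $P^{x,y}_n$ and $L_a^n$ with $g(\rho_a) = h_E(\psi(\rho_a))$. Composing with the bijection $\phi : L_a^n \to L_d^n$ from Lemma~\ref{lem.language}, which also preserves $h_E$, we obtain $g(\rho_a) = h_E(\phi(\psi(\rho_a)))$, with $\phi(\psi(\rho_a))\in L_d^n$. But $L_d^n$ consists precisely of stable decompositions, and Theorem~\ref{thm.decomposition_unique} guarantees that any nonnegative integer has a unique stable decomposition. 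Hence $\phi(\psi(\rho_a))$ is forced to equal $c[g(\rho_a)]$, the unique stable decomposition of the value $g(\rho_a)$.

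Applying the same argument to $\rho'_a$ yields $\phi(\psi(\rho'_a)) = c[g(\rho'_a)]$. Since $g(\rho_a)=g(\rho'_a)$, the two stable words coincide, and inverting the bijections $\phi$ then $\psi$ gives $\rho_a = \rho'_a$, concluding the proof. There is no real obstacle here beyond assembling the pieces correctly; the work has been front-loaded into Theorem~\ref{thm.decomposition_unique} (uniqueness) and Lemma~\ref{lem.language} (the transducer/automaton construction showing that $\phi$ bijects $L_a^n$ onto $L_d^n$), together with the structural input that each class of $\cR$ has a unique acyclic representative.
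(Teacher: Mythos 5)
Your proposal is correct and follows essentially the same route as the paper: forward direction via Corollary~\ref{cor.rho_value}, reduction to acyclic representatives, then injectivity via the bijections $\psi$ and $\phi$ combined with the uniqueness of the stable decomposition (Theorem~\ref{thm.decomposition_unique}). The only cosmetic difference is that you invoke Proposition~\ref{prop.acyclic} explicitly and identify $\phi(\psi(\rho_a))$ with $c[g(\rho_a)]$, whereas the paper argues directly that $h_E(\phi(\psi(\rho)))=h_E(\phi(\psi(\rho')))$ forces $\phi(\psi(\rho))=\phi(\psi(\rho'))$; these are the same argument.
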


\begin{proof}
    The forward direction was proved as Corollary~\ref{cor.rho_value}.
    
    Conversely, by the same corollary we can suppose that $\rho$ and $\rho'$ are acyclic and satisfy $g(\rho)=g(\rho')$.
    It follows by Lemma~\ref{lem.decompose_tree} that $g(\rho)=h_E(\psi(\rho))$
    and $g(\rho')=h_E(\psi(\rho'))$; then 
    \[h_E( \phi(\psi(\rho)) ) = h_E( \phi(\psi(\rho')) ).\]
    By uniqueness of the stable decomposition, it follows
    that $\phi(\psi(\rho))=\phi(\psi(\rho'))$,
    and since $\phi$ and $\psi$ are bijective, that $\rho=\rho'$.
    \qed
\end{proof}

\begin{lemma}
    \label{cor.monotony}
    For any value $v \geq 0$, the value of $c[v+kF](u_{n+1})$ is nondecreasing with $k$.
\end{lemma}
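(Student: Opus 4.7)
The plan is to reduce the claim to a chip-firing monotonicity argument in the Engel machine $E^{x,y}_n$. The crucial observation is the identity $F = \sum_{k=0}^n d_k$, which follows immediately from $d_k = x^{n-k}y^k$ and the definition of $F$. This identity means that an $h_E$-increment of $F$ can be realized concretely by adding exactly one chip at each vertex of $V_0$, which is a nonnegative modification of the already-stable configuration $c[v]$.

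Concretely, I would start from $c[v]$ and form
\[c' = c[v] + u_0 + u_1 + \cdots + u_n,\]
which is nonnegative on $V_0$ and satisfies $h_E(c') = v + F$ by linearity. I would then stabilize $c'$. Since $h_E$ is harmonic on $E^{x,y}_n$ (Lemma~\ref{lem.hE_harmonic}), legal firings preserve $h_E$, so the stabilization $(c')^\circ$ again has $h_E$-value $v+F$; being stable, the uniqueness part of Theorem~\ref{thm.decomposition_unique} forces $(c')^\circ = c[v+F]$.

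Next I would exploit the structure of $E^{x,y}_n$: the only arcs pointing to the sink $u_{n+1}$ are the $x$ arcs emanating from $u_n$, so the number of chips added to $u_{n+1}$ during the stabilization is exactly $x \cdot r(u_n)$, where $r : V_0 \rightarrow \Z$ is the firing vector realizing the stabilization. The nonnegativity of $r$, and in particular $r(u_n) \geq 0$, is automatic from Proposition~\ref{prop:stablechip}, as any maximal legal firing sequence only increments coordinates of $r$ starting from $0$. Consequently
\[c[v+F](u_{n+1}) = c[v](u_{n+1}) + x \cdot r(u_n) \geq c[v](u_{n+1}),\]
and the general case follows by replacing $v$ with $v + kF$ and iterating, giving the monotonicity claim by induction on $k$. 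The substantive step is really the packaging $F = d_0 + \cdots + d_n$, which turns the abstract increment into a concrete nonnegative chip addition; once that is in place, the remainder is a direct appeal to the abelian property of chip-firing and the uniqueness of the stable decomposition, so I do not anticipate any real obstacle.
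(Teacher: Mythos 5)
Your proof is correct and follows essentially the same route as the paper: the paper's argument likewise realizes the increment of $F$ by the nonnegative configuration $c[F]=(1,1,\dots,1,0)$, identifies the stabilization of the sum with $c[v+F]$ (harmonicity of $h_E$ plus uniqueness of the stable decomposition), and concludes because the stabilization mechanism can only add chips to the sink $u_{n+1}$. Your explicit accounting via the firing vector $r(u_n)$ is just a spelled-out version of the paper's superadditivity inequality $c(u_{n+1}) \geq c_1(u_{n+1}) + c_2(u_{n+1})$.
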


\begin{proof}
    Let $c_1, c_2$ be two nonnegative configurations in $E^{x,y}_n$, 
    define $c = (c_1 + c_2)^\circ$. Then
    by considering the stabilization mechanism, we have
    \[ c(u_{n+1}) \geq c_1(u_{n+1}) + c_2(u_{n+1})\]
    from which the result follows.
    \qed
\end{proof}

\begin{lemma}
\label{lem.monotony}
Let $v$ be an integer. Then:
\begin{enumerate}
    \item[$(i)$]  If $c[v](u_{n+1}) < 0$, then $v-F \notin g(\cR)$.
    \item[$(ii)$] If $c[v](u_{n+1}) \geq 0$, then $v+F \notin g(\cR)$.
\end{enumerate}
\end{lemma}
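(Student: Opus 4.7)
The plan is to handle (i) by a quick appeal to the monotonicity lemma \ref{cor.monotony}, and to attack (ii) by analyzing the firing dynamics that produce $c[v+F]$ from $c[v]$ in the Engel Machine.

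For (i), if $v - F < 0$ then $v - F \notin g(\cR)$ since $g(\cR) \subseteq \Z_{\geq 0}$ (each $g(a^k_j) \geq 0$ by Proposition~\ref{prop.arcmonic_expr}). If $v - F \geq 0$ and we had $v - F \in g(\cR)$, Theorem~\ref{thm.v_decomposition} would force $c[v-F] \in L_d$ and in particular $c[v-F](u_{n+1}) = 0$; applying Lemma~\ref{cor.monotony} (with base value $v - F$ and comparing $k=0$ to $k=1$) would then yield $c[v](u_{n+1}) \geq c[v-F](u_{n+1}) = 0$, contradicting the hypothesis.

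For (ii), my first step is to identify $c[v+F]$ with the stabilization in $E^{x,y}_n$ of the configuration $\sigma_0 := c[v] + (1,1,\ldots,1,0)$: the function $h_E$ is harmonic and $h_E(\sigma_0) = v + F$, while the stable configuration obtained has entries in $\ll 0,y-1 \rr$ on $V_0$ and in $x\Z$ on the sink $u_{n+1}$ (which only receives multiples of $x$), so by uniqueness in Theorem~\ref{thm.decomposition_unique} it must coincide with $c[v+F]$. Writing $f_i$ for the number of firings of $u_i$ in this stabilization, I will use the standard identities
\[c[v+F](u_{n+1}) = c[v](u_{n+1}) + xf_n, \qquad c[v+F](u_i) = c[v](u_i) + 1 + xf_{i-1} - yf_i\]
for $i \in \ll 0, n \rr$, with the convention $f_{-1}=0$.

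Assuming for contradiction $v + F \in g(\cR)$, the decomposition $c[v+F]$ belongs to $L_d$: the sink entry equals $0$ and there exists $k^* \in \ll 0,n \rr$ with $c[v+F](u_{k^*}) = 0$ and $c[v+F](u_i) \in \ll 1,x \rr$ for $i \in \ll k^*+1,n\rr$. The sink equation combined with $c[v](u_{n+1}) \geq 0$ and $f_n \geq 0$ forces $f_n = 0$. The technical heart of the argument is then a downward induction on $i$ from $n$ to $k^*+1$: assuming $f_i = 0$, the bound $c[v+F](u_i) \leq x$ rewritten via the identity collapses to $c[v](u_i) + xf_{i-1} \leq x-1$, which by nonnegativity forces $f_{i-1} = 0$. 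Iterating yields $f_{k^*} = 0$, and the identity at $u_{k^*}$ then reads $c[v+F](u_{k^*}) = c[v](u_{k^*}) + 1 + xf_{k^*-1} \geq 1$, contradicting the required $0$. The main obstacle is this coupled propagation, which uses simultaneously the upper bound $\leq x$ from the regex window $\ll 1,x\rr$ and the nonnegativity of $c[v](u_i)$ and $f_{i-1}$; the edge case $k^* = n$ needs no special treatment since the identity at $u_n$ already produces the $\geq 1$ contradiction directly from $f_n = 0$.
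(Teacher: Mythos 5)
Your proof is correct, but part $(ii)$ takes a genuinely different route from the paper. For $(i)$ you argue exactly as the paper does, via Lemma~\ref{cor.monotony} and the regular expression of Theorem~\ref{thm.v_decomposition} (your separate treatment of $v-F<0$ is a harmless extra precaution). For $(ii)$, the paper stays inside its automata framework: it observes that $c[v+F]$ is the stabilization of $c_1=c[v]+(1,\dots,1,0)\in L_1=\ll 1,y\rr^*\cdot 0$, builds the product transducer $T(A_1)$ computing $\phi$ on $L_1$, reads off a nondeterministic automaton for $\phi(L_1)$, and intersects it with $A_d$ to check that the product has no accepting state, i.e.\ $\phi(L_1)\cap L_d=\emptyset$. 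You start from the same identification of $c[v+F]$ as the stabilization of $c[v]+(1,\dots,1,0)$ (justified correctly through harmonicity of $h_E$, Lemma~\ref{lem.hE_harmonic}, preservation of nonnegativity and of the sink value modulo $x$ under firing, and uniqueness in Theorem~\ref{thm.decomposition_unique}), but then replace the automata computation by a direct conservation argument on the firing counts $f_i$: the sink equation forces $f_n=0$ when $c[v](u_{n+1})\geq 0$, the window $\ll 1,x\rr$ imposed by $e_d$ propagates $f_i=0$ backwards down to $k^*$, and the identity at $u_{k^*}$ then yields a value at least $1$ where $e_d$ demands $0$; the edge cases $k^*=n$ and $k^*=0$ (via $f_{-1}=0$) are handled. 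This is a more elementary and self-contained argument, whereas the paper's construction buys the stronger language-level statement $\phi(L_1)\cap L_d=\emptyset$ by reusing the transducer machinery already set up for Lemma~\ref{lem.language}. I see no gap in your reasoning.
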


\begin{proof}
    Notice that $c[F]=(1,1,\dots,1,0)$.
    
    $(i)$: if  $v$ is such that $c[v](u_{n+1}) < 0$, then by Lemma~\ref{cor.monotony}  $c[v - F](u_{n+1}) < 0$. Hence it is not matched by the regular expression $e_d$ of Theorem~\ref{thm.v_decomposition}.

    $(ii)$: by the same argument, if $c[v](u_{n+1}) > 0$, then $c[v+F](u_{n+1}) > 0$ and $v+F \notin g(\cR)$.

    Consider now $v$ such that $c[v](u_{n+1}) = 0$. Then $c[v]$ corresponds to a word of length $n+2$ in the language  $L = \ll 0, y-1 \rr^* \cdot 0$. Let us consider $c_1$ such that $c_1(u_i) = c(u_i) + 1$ for all $i \in \ll 0, n\rr$ and $c_1(u_{n+1}) = 0$, so that $h_E(c_1) = h_E(c[v+F])$. Now, we aim to demonstrate that the regular expression $e_d$ in Theorem~\ref{thm.v_decomposition} does not match the stable decomposition of $v+F$ computed from $c_1$. We do this by relying on the construction of an automaton that recognizes the set of possible stable decompositions of $c_1$ for all possible $c$.

    Recall the notation:
    \[ e_d = \ll 0, y-1\rr^* \cdot 0 \cdot \ll 1, x \rr^* \cdot  0 ,\]
     $L_d$ is the language described by  $e_d$, and  $L_d^n$ is the subset of words of length $n+2$. 
    Moreover $\phi(c)$ is the stable decomposition of $c$ for any $c \in \ll 0, y \rr^* \cdot 0$, while $\phi$ is computed by the transducer $T$ described in Figure~\ref{fig.transducer}.

    The set of possible configurations $c_1$ when
    $c$ varies in the set of stable configurations with $c(u_{n+1})=0$, is exactly described by the regular expression $\ll 1, y \rr^* \cdot 0$, corresponding to a language $L_1$, which is recognized by the automaton $A_1$ depicted on Fig.~\ref{fig.A1}. 

\begin{figure}[htbp]

    \begin{center}
    \begin{tikzpicture}[shorten >=1pt, initial text=,node distance=2cm, on grid, auto]

    \node[state,initial]    (a)               {$A$};
    \node[state, accepting] (b)  [right=of a] {$B$};

    \path[->]
    (a) edge [loop above] node {$\ll 1,y\rr$} (a)
    (a) edge [bend left] node {$0$} (b)
    ;
    \end{tikzpicture}
    \end{center}
    \caption{Automaton $A_1$ recognizing $L_1$.}
    \label{fig.A1}
\end{figure}
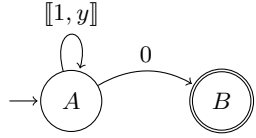

    Following the steps of the proof of Lemma~\ref{lem.language}, and since words in $L_1$ are also matched by  $\ll 0, y \rr^* \cdot 0$, we construct the product transducer $T(A_1)$ which outputs $\phi(c)$ if and only if $c \in L_1$. See Fig.~\ref{fig.T_A1}.

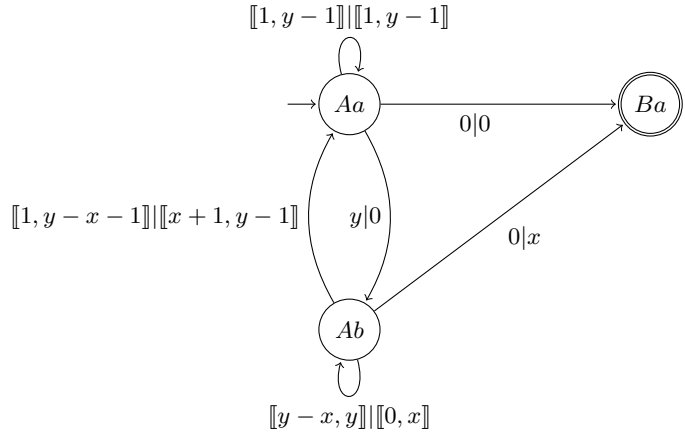
\begin{figure}[htbp]
    \begin{center}
    \begin{tikzpicture}[shorten >=1pt, initial text=,node distance=2cm, on grid, auto]

    \node[state,initial]    (Aa)               {$Aa$};
    \node[state,accepting]   (Ba)  [node distance = 4cm, right=of Aa] {$Ba$};
    \node[state]            (Ab)  [node distance = 3cm, below=of Aa] {$Ab$};

    \path[->]
    (Aa) edge [loop above] node {$\ll 1,y-1\rr | \ll 1,y-1\rr$} (Aa)
    (Aa) edge  node [below left] {$0 | 0$} (Ba)
    (Aa) edge [bend left] node [left] {$y | 0$} (Ab)
    (Ab) edge [loop below] node {$\ll y-x,y\rr | \ll 0,x\rr$} (Ab)
    (Ab) edge [bend left] node {$\ll 1,y-x-1\rr | \ll x+1,y-1\rr$} (Aa)
    (Ab) edge  node [below right] {$0 | x$} (Ba)
    ;
    \end{tikzpicture}
    \end{center}
    \caption{The product $T(A_1)$ of transducer $T$ and automaton $A_1$.}
    \label{fig.T_A1}
\end{figure}

    Finally, the following non-deterministic automaton $A_1^\phi$ recognizes $\phi(L_1)$ as shown on Fig.~\ref{fig.phi_L1}.

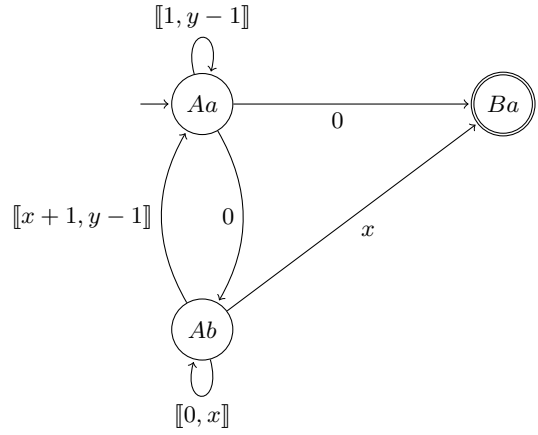
\begin{figure}

    \begin{center}
    \begin{tikzpicture}[shorten >=1pt, initial text=,node distance=2cm, on grid, auto]

    \node[state,initial]    (Aa)               {$Aa$};
    \node[state,accepting]  (Ba)  [node distance = 4cm, right=of Aa] {$Ba$};
    \node[state]            (Ab)  [node distance = 3cm, below =of Aa] {$Ab$};

    \path[->]
    (Aa) edge [loop above] node {$\ll 1,y-1\rr$} (Aa)
    (Aa) edge  node [below left] {$0$} (Ba)
    (Aa) edge [bend left] node [left] {$0$} (Ab)
    (Ab) edge [loop below] node {$\ll 0,x\rr$} (Ab)
    (Ab) edge [bend left] node {$\ll x+1,y-1\rr$} (Aa)
    (Ab) edge  node [below right] {$x$} (Ba)
    ;
    \end{tikzpicture}
    \end{center}
    \caption{The non-deterministic $A^\phi_1$ automaton that recognizes $\phi(L_1)$.}
    \label{fig.phi_L1}
\end{figure}

    It suffices to show that this automaton does not recognize any word in $L_d$ or, equivalently, that $\phi(L_1) \cap L_d = \emptyset$. To that end, we introduce the automaton $A_d$ that recognizes $L_d$ on Fig.~\ref{fig.Ad}.

\begin{figure}[htbp]
    \begin{center}
    \begin{tikzpicture}[shorten >=1pt, initial text=,node distance=4cm, on grid, auto]

    \node[state,initial]    (a)               {$\alpha$};
    \node[state]            (b)  [right=of a] {$\beta$};
    \node[state,accepting]  (c)  [right=of b] {$\gamma$};

    \path[->]
    (a) edge [loop above] node {$\ll 1,y-1\rr$} (a)
    (a) edge [bend left] node {$0$} (b)
    (b) edge [loop above] node {$\ll 1,x\rr $} (b)
    (b) edge [bend left] node {$0$} (c)
    (b) edge [bend left] node {$\ll x+1, y-1 \rr$} (a)
    (c) edge [bend left] node {$\ll 1, x \rr$} (b)
    (c) edge [bend left=60] node {$\ll x+1, y-1 \rr$} (a)
    (c) edge [loop above] node {$0$} (c)
    
    ;
    \end{tikzpicture}
    \end{center}
    \caption{Automaton $A_d$ that recognizes $L_d$.}
    \label{fig.Ad}
\end{figure}
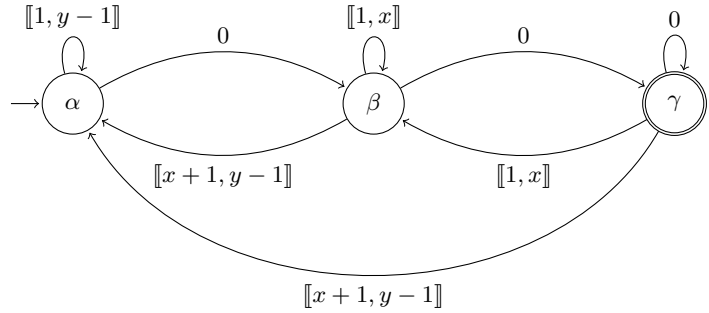

    By taking the product of automatas $A_d$ and $A_1^\phi$, we obtain an automaton that recognizes $\phi(L_1) \cap L_d$, as shown on Fig.~\ref{fig.prod_phi_1}.  This automaton does not contain any accepting state which proves that $\phi(L_1) \cap L_d = \emptyset$.

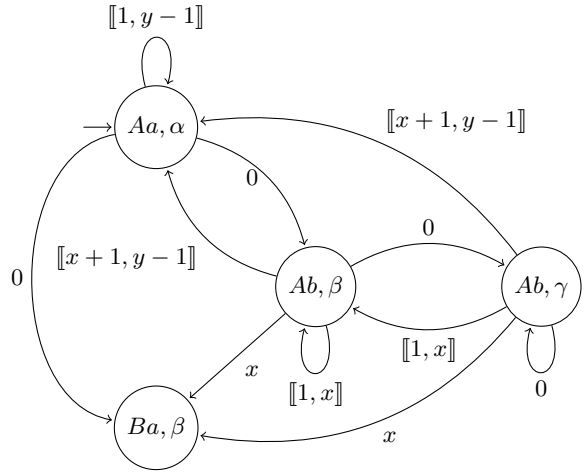
\begin{figure}[htbp]

    \begin{center}
    \begin{tikzpicture}[shorten >=1pt, initial text=,node distance=2cm, on grid, auto]

    \node[state,initial]    (a)               {$Aa,\alpha$};
    \node[state]  (b)  [node distance = 4cm, below=of a] {$Ba,\beta$};
    \node[state]  (c)  [node distance = 3cm, below right =of a] {$Ab,\beta$};
    \node[state]  (d)  [node distance = 3cm, right =of c] {$Ab,\gamma$};

    \path[->]    
    (a) edge [loop above] node {$\ll 1, y-1 \rr$} (a)
    (a) edge  [bend right=80] node [left] {$0$} (b)
    (a) edge [bend left] node [left] {$0$} (c)
    (c) edge [loop below] node {$\ll 1,x \rr$} (c)
    (c) edge [bend left] node {$\ll x+1, y-1 \rr$} (a)
    (c) edge  node {$x$} (b)
    (c) edge [bend left] node  {$0$} (d)
    (d) edge [loop below] node {$0$} (d)
    (d) edge [bend right] node [above right] {$\ll x+1, y-1\rr$} (a)
    (d) edge [bend left] node {$x$} (b)
    (d) edge [bend left] node {$\ll 1,x \rr$} (c)  
    ;
    \end{tikzpicture}
    \end{center}
    \caption{The product of automata $A_d$ and $A^\phi_1$ that recognizes $\phi(L_1) \cap L_d$; it does not contain any final state.}
    \label{fig.prod_phi_1}
\end{figure}

    \qed  
\end{proof}

The next Lemma is key to proving our main results and helps in improving the complexity of our algorithm.

\begin{lemma}
\label{thm.unique_value_modulo_F}
For every $0 \leq v \leq F-1$, there is a unique $k \in \mathbb{N}$ such that $v+kF \in g(\cR)$, which is the smallest integer $k$ with $c[v+kF](u_{n+1}) \geq 0$.
\end{lemma}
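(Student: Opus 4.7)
The plan is to argue in two phases. First, I would establish existence and uniqueness of $k \in \mathbb{N}$ such that $v + kF \in g(\cR)$; second, I would identify this $k$ as the smallest nonnegative integer with $c[v+kF](u_{n+1}) \geq 0$.

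For the first phase, I would start by observing that $g(\cR)$ has exactly $F$ elements: the number of rotor equivalence classes of $P^{x,y}_n$ equals $F$ (the number of rooted forests, obtained from Proposition~\ref{prop.acyclic} and Kirchoff's theorem as noted in the paper), and $g$ factors injectively through these classes by Corollary~\ref{cor.rho_value} and Proposition~\ref{cor.g_unique}. Next, I would show that these $F$ elements lie in $F$ distinct residue classes modulo $F$. Indeed, for any $v_1 \in g(\cR)$, Theorem~\ref{thm.v_decomposition} forces $c[v_1](u_{n+1}) = 0$, so Lemma~\ref{cor.monotony} gives $c[v_1 + iF](u_{n+1}) \geq 0$ for every $i \geq 0$; applying Lemma~\ref{lem.monotony}(ii) to $v_1 + (i-1)F$ then yields $v_1 + iF \notin g(\cR)$ for every $i \geq 1$. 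Since there are only $F$ residue classes modulo $F$ and $g(\cR) \subseteq \mathbb{Z}_{\geq 0}$ (by the nonnegativity of arcmonic values in Proposition~\ref{prop.arcmonic_expr}), the unique representative of the class of $v$ modulo $F$ must be of the form $v + k_0 F$ with $k_0 \in \mathbb{N}$.

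For the second phase, $v + k_0 F \in g(\cR)$ immediately gives $c[v + k_0 F](u_{n+1}) = 0 \geq 0$. Conversely, suppose some $0 \leq k < k_0$ were to satisfy $c[v + kF](u_{n+1}) \geq 0$. Then by monotonicity (Lemma~\ref{cor.monotony}), $c[v + jF](u_{n+1}) \geq 0$ for every $j \geq k$, and Lemma~\ref{lem.monotony}(ii) applied iteratively would give $v + jF \notin g(\cR)$ for every $j \geq k+1$, contradicting $v + k_0 F \in g(\cR)$ since $k_0 \geq k+1$.

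I expect the main conceptual step to be the distinct-residue argument in the first phase: the interplay between Lemma~\ref{cor.monotony} and Lemma~\ref{lem.monotony}(ii) is exactly what prevents two elements of $g(\cR)$ from colliding modulo $F$, and once this is in hand the remaining arguments amount to counting and the extraction of the minimal $k_0$.
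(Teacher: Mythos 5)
Your proposal is correct and takes essentially the same route as the paper's proof: uniqueness through the interplay of Lemma~\ref{cor.monotony} and Lemma~\ref{lem.monotony}, and existence from the injectivity of $g$ on rotor classes (Proposition~\ref{cor.g_unique}) combined with the fact that there are exactly $F$ such classes. You simply spell out in more detail the distinct-residues and minimal-$k$ arguments that the paper's terse proof leaves implicit.
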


\begin{proof}
The uniqueness is a consequence of  Lemma~\ref{lem.monotony} and the monotony of $c[v+kF](u_{n+1})$ with $k$. As stated in Proposition~\ref{cor.g_unique}, the function $g$ uniquely identifies  rotor classes. Hence, the existence of $k$ such that $v+kF \in g(\cR)$ follows from the observation that the number of rotor classes is precisely $F$. In other words, the function $g \mod F$ establishes a bijective correspondence between the set of rotor classes and $\Z / F\Z$.
\qed
\end{proof}

As an example, consider $P^{2,3}_3$ as depicted in Fig.~\ref{fig.gh}, and value $v = 1$. Next table shows the stable decomposition of $v+kF$, with $F = 65$, for $k \in \ll 0 , 3 \rr$. The unique value in $g(\cR)$ is $66$ whose stable decomposition is matched by the regular expression of Theorem~\ref{thm.v_decomposition}.

\begin{center}
\begin{tabular}{c|c}
\hline
$~~~k~~~$ & $~~~$ stable decomposition of $1+65k~~~$ \\
\hline
\hline
0 & $(2,1,0,2,-2)$ \\
1 &  $(0,1,0,2,~~0)$\\
2 &  $(1,2,1,0,~~2)$\\
3 &  $(2,0,1,0,~~4)$\\
\end{tabular}
\end{center}

\begin{comment}
\begin{proof}
    Suppose that $c(v+mF)_{n+1} \geq 0$ for a given $m$. Then by lemma  $c(v+(m+1)F)_{n+1} \geq 0$, and we show furthermore that $v+(m+1)F \notin g(\cR)$. To see this, consider the stable decomposition $c$ of $v+mF$ ; the decomposition of $F$ is $(1,1,1,\cdots,1,0)$; so we consider the
    configuration $(c_0+1,c_1+1,\dots,c_n+1,c_{n+1})$ and look at its stabilisation. 

    idea : in the stabilisation, after a 0 there always is something > x,
    contradicting $e_d$. This comes from the substitution of $e'_1$ by $e'_2$.
    A TERMINER
\end{proof}
\end{comment}

\section{Proofs of  Theorem  \ref{thm:arrivalcoprime} and \ref{thm.recap} }

\paragraph{Proof of Theorem  \ref{thm:arrivalcoprime}}

\paragraph{(i):}
If $(\rho',\sigma') \in \routing^\infty(\rho,\sigma)$, then by Proposition \ref{prop.gh_characterize} we have
\begin{align*}
        g(\rho) - h(\sigma) & = g(\rho') - h(\sigma')\\
        g(\rho) - h(\sigma) + h(\sigma') & =  g(\rho').
\end{align*}
Since $\sigma'$ is zero, except on $u_0$ and $u_{n+1}$
where the value of $h$ is respectively $0$ and $F$, we get
  \[g(\rho) - h(\sigma) + mF \in g(\cR) \]
  where $m=\sigma'(u_{n+1})$.

Conversely, suppose that there is another $m_1$ such that 
\[g(\rho) - h(\sigma) + m_1F = g(\rho_1).\]
for some $\rho_1$. Then 
\[m_1 F - g(\rho_1) = m F - g(\rho')\]
hence
\[ g(\rho_1) = g( \rho') \mod F .\]
By Lemma \ref{thm.unique_value_modulo_F}, it follows that $g(\rho_1)=g(\rho')$ hence $m_1=m$.

\paragraph{(ii):}

Recall that
\[ F = \sum_{k=0}^n d_k .\]
Since the maximal arcmonic value of an arc in $A^+(u_k)$ is $y d_{k-1} = x d_k$ for $k \in \ll 1,n\rr$, we obtain that  the maximal value in $g(\cR)$ is $\sum_{k=1}^n x d_k$ which is strictly lower than $xF$. Then:
\begin{align*}
         & 0  \leq g(\rho)-h(\sigma)+mF  < xF\\
        \Leftrightarrow &  h(\sigma) - g(\rho)  \leq  mF  < xF + h(\sigma) - g(\rho)\\
         \Leftrightarrow & \lceil  \frac{h(\sigma) - g(\rho)}{F} \rceil  \leq m  < x + \lceil  \frac{h(\sigma) - g(\rho)}{F} \rceil\\ 
        \Leftrightarrow & 0  \leq m - \lceil  \frac{h(\sigma) - g(\rho)}{F} \rceil \leq x -1 \\  
\end{align*}

If we are given $(\rho,\sigma)$ and $m$ and want to decide if there are $m$ particles on sink $u_{n+1}$ when fully routing $(\rho,\sigma)$, we can either check:
\begin{itemize}
    \item if $c[g(\rho)-h(\sigma)+mF]$ is matched by the regular expression $e_d$, which involves first
    computing the stable decomposition;
    \item if $c[g(\rho)-h(\sigma)+mF](u_{n+1})=0$ and $c[g(\rho)-h(\sigma)+(m-1)F)](u_{n+1}) < 0$,
    which involves computing two stable decompositions.
\end{itemize}

Assuming that elementary arithmetic operations are $O(1)$, 
we can compute $g(\rho) - h(\sigma) +mF$ in time $O(n)$, using Prop. \ref{prop.arcmonic_expr} for $g$.
Then, computing a stable decomposition also has computational complexity $O(n)$. We can successively fire all vertices from $u_0$ to $u_n$, which can be done by computing a quotient and
remainder modulo $x+y$.

If we are given $(\rho,\sigma)$ and we want to compute $m$, we can proceed by bissection, using  Lemma \ref{thm.unique_value_modulo_F} to find the minimal $m$ for which $c[g(\rho)-h(\sigma)+mF](u_{n+1})=0$. The overall complexity of this method is $O(n \log(x))$ since $m$ belongs to an interval of length $x$.

\paragraph{(iii):}

The forward direction is Prop.~\ref{prop.gh_characterize}.

Conversely, suppose that $g(\rho)-h(\sigma) = g(\rho')-h(\sigma')$. Let $m$ and $m'$
be the number of particles on sink $u_{n+1}$ when we fully route  $(\rho,\sigma)$ and $(\rho',\sigma')$ respectively  to sinks; we denote respectively the final configurations of these routings by $(\rho_1,\sigma_1)$ and $(\rho'_1,\sigma'_1)$. By Prop.~\ref{prop.gh_characterize},
\[g(\rho) - h(\sigma) = g(\rho') - h(\sigma') = g(\rho_1) - mF = g(\rho'_1) - m'F, \]
from which we deduce by Lemma~\ref{thm.unique_value_modulo_F} that $\rho_1 \sim \rho'_1$ and $m=m'$, so that 
$\sigma_1=\sigma'_1$ (since $\deg(\sigma_1)=\deg(\sigma'_1)$). All in all, we have that 
\[ (\rho,\sigma) \sim (\rho_1,\sigma_1) \sim (\rho'_1,\sigma_1) \sim (\rho',\sigma'). \]

\paragraph{Proof of Theorem \ref{thm.recap}}

\paragraph{(i) and (ii):}

Suppose that $\bar{h}(\bar{\sigma}_1) = \bar{h}(\bar{\sigma}_2)$.
Up to adding particles to $\bar{\sigma}_2$ on $u_{n+1}$ and on $u_0$ we obtain $\sigma_2$ such that $h(\bar{\sigma}_1) = h(\sigma_2)$ and $\deg(\bar{\sigma}_1) = \deg(\sigma_2)$ respectively. We write $\sigma_1 = \bar{\sigma}_1$.

Consider now any $\rho \in \cR$ . We have
\[h(\sigma_1) - g(\rho) = h(\sigma_2) - g(\rho), \]
so
\[(\rho,\sigma_1) \sim (\rho,\sigma_2) \]
by $(iii)$ of Theorem~\ref{thm:arrivalcoprime},
and $\bar{\sigma}_1 \sim_S \bar{\sigma}_2$.

Conversely, if $\bar{\sigma}_1 \sim_S \bar{\sigma}_2$, we clearly have 
$\bar{h}(\bar{\sigma}_1) = \bar{h}(\bar{\sigma}_2)$.

Since $\bar{h}(u_1)=x^n \mod F$ and $x^n$ is coprime with $F$, we see that the particle configuration with just one particle on $u_1$ generates all possible values in $\Z / F \Z$.
It follows, by the first isomorphism theorem, that $SP(P^{x,y}_n)$ is cyclic and isomorphic to $\Z / F \Z$.

\paragraph{(iii):} follows directly from Lemma \ref{thm.unique_value_modulo_F}.

\subsubsection{Open problems and future works} 

In this paper, we addressed the generalized version of the {\sc arrival} problem in the Path Multigraph  $P^{x,y}_n$. Moreover, we investigated the Sandpile Group structure and its action on rotor configurations when $x$ and $y$ are coprime. However, when $x$ and $y$ are not coprime, we observed that the characterization of classes by harmonic and arcmonic functions becomes inadequate, necessitating the inclusion of more comprehensive algebraic invariants. We are currently working on a project that presents a theory of arcmonic and harmonic functions applicable to general graphs, which will be submitted soon to publication.

Moreover, it is worth considering other scenarios, such as variations in $x$ and $y$ across different vertices or changes in the rotor order. These cases pose interesting questions that require further investigation. We regard them as open problems that warrant additional research.

\subsubsection{Acknowledgements} 
Thanks to Chloé  and Marwanne for checking examples with their rotor software.

This work was supported by a public grant as part of the Investissement d'avenir project, reference ANR-11-LABX-0056-LMH, LabEx LMH.

\end{document}